    \renewcommand{\bibname}{References}
    \newtheorem{theorem}{Theorem}[section]
    \newtheorem{proposition}[theorem]{Proposition}
    \newtheorem{lemma}[theorem]{Lemma}
    \newtheorem{corollary}[theorem]{Corollary}
    \theoremstyle{definition}
        \newtheorem{definition}[theorem]{Definition}
    \theoremstyle{remark}
    \numberwithin{equation}{section}
    \numberwithin{figure}{section}
    \newcommand{\n}{\vspace{12pt}} 
    \newcommand{\rem}[1]{{\bf Remark:}}
    \newcommand{\newchapter}[3] 
	{                           
        \chapter[#2]{#3}
        \chaptermark{#1}
        \thispagestyle{myheadings}
	}
\newcommand{\be}{\begin{equation}}
\newcommand{\ee}{\end{equation}}
\newcommand{\benn}{\begin{equation*}}
\newcommand{\eenn}{\end{equation*}}
\newcommand{\bea}{\begin{eqnarray}}
\newcommand{\eea}{\end{eqnarray}}
\newcommand{\beann}{\begin{eqnarray*}}
\newcommand{\eeann}{\end{eqnarray*}}
\newcommand{\bra}[1]{\langle #1|}
\newcommand{\ket}[1]{|#1\rangle}
\newcommand{\braket}[2]{\langle #1|#2\rangle}
\newcommand{\ketbra}[2]{|#1\rangle\langle #2|}
\newcommand{\pure}[1]{\ketbra{#1}{#1}}
\newcommand{\Tr}{\mathop{\mathrm{Tr}}}
\providecommand{\one}{\leavevmode\hbox{\small1\kern-3.8pt\normalsize1}}
\newcommand{\PI}{P_Y}
\newcommand{\PB}{P_B}
\newcommand{\PE}{P_{V\setminus Y}}
\newcommand{\PA}{P_A}
\newcommand{\PV}{P_{V\setminus A}}
\newcommand{\Ir}{\mathbb{Z}}
\newcommand{\Cx}{\mathbb{C}}
\newcommand{\A}{\mathcal{A}}
\newcommand{\diam}{{\rm diam}}
\def\idty{{\mathchoice {\rm 1\mskip-4mu l} {\rm 1\mskip-4mu l} %
{\rm 1\mskip-4.5mu l} {\rm 1\mskip-5mu l}}}
\newcommand{\B}{\mathcal{B}}
\newcommand{\X}{\mathcal{X}}
\newcommand{\rhoAB}{\rho_{\A\otimes\B}}
\newcommand{\rhoN}{\rho_{[1,n]}}
\newcommand{\rhoPN}{\rho_{1,[p,n]}}
\newcommand{\eof}[3]{E_{[#1,#2]}(#3)}
\newcommand{\E}{\mathbb{E}}
\newcommand{\Ehat}{\mathbb{\hat{E}}}
\newcommand{\FCS}{\textsf{FCS}}
\newcommand{\EoF}{\textsf{EoF}}
\providecommand{\bysame}{\leavevmode\hbox to3em{\hrulefill}\thinspace}
\newcommand{\etal}{\textit{et al.}}
\newcommand{\W}{\mathcal{W}_{\lambda,d}}
\newcommand{\D}{\mathcal{D}_{\lambda,d}}
\begin{document}
    

    \pagenumbering{roman}
    \pagestyle{plain}

    %
    %

    \singlespacing

    ~\vspace{-0.75in} 
    \begin{center}

        \begin{huge}
            Entanglement and Ground States of Gapped Hamiltonians
        \end{huge}\\\n
        By\\\n
        {\sc Spyridon Michalakis}\\
        B.S. (Massachusetts Institute of Technology) 2003\\
        \n
        DISSERTATION\\\n
        Submitted in partial satisfaction of the requirements for the degree of\\\n
        DOCTOR OF PHILOSOPHY\\\n
        in\\\n
        MATHEMATICS\\\n
        in the\\\n
        OFFICE OF GRADUATE STUDIES\\\n
        of the\\\n
        UNIVERSITY OF CALIFORNIA\\\n
        DAVIS\\\n\n
        
        Approved:\\\n\n
        
        \rule{4in}{1pt}\\
        ~Bruno Nachtergaele(Chair)\\\n\n
        
        \rule{4in}{1pt}\\
        ~Craig Tracy\\\n\n
        
        \rule{4in}{1pt}\\
        ~Greg Kuperberg\\
        
        \vfill
        
        Committee in Charge\\
        ~2008

    \end{center}

    \newpage

    %
    %
    
   \begin{center}
    To mom and dad, for your unconditional love and support. \\
    To Nikos and Marios, for being my other $\frac{2}{3}$ ...
    \end{center}
    
     ~\\[6.75in] 
    \centerline{
                \copyright\ Spyridon Michalakis,
                            2008. All rights reserved.
               }

    \newpage

    \doublespacing

    %
    %
    
    \tableofcontents
    
    \newpage

    %
    %
    
    %
     \thispagestyle{empty}

    ~\vspace{-1in} 
    \begin{flushright}
        \singlespacing
        Spyridon Michalakis\\
        ~September 2008\\
        Mathematics
    \end{flushright}

    \centerline{\large Entanglement and Ground States of Gapped Hamiltonians}
    
    \centerline{\textbf{\underline{Abstract}}}
    
This thesis weaves together three separate results whose common thread is the study
of quantum entanglement, a physical resource which, much like energy, may be exploited to perform tasks with the potential for extraordinary applications in areas of security, computing and simulation of classically intractable quantum phenomena.

We begin by considering entanglement properties of an important class of quantum states, introduced by Fannes, Nachtergaele and Werner, known as Finitely Correlated States (\FCS).
We derive bounds for the entanglement of a spin with an (adjacent and
non-adjacent) interval of spins in an arbitrary pure \FCS. Finitely Correlated
States are otherwise known as matrix product states or generalized valence-bond
states. The bounds we derive become exact in the case where one considers the
entanglement of a single spin with a half-infinite chain to the right (or the 
left) of it. Our bounds provide a proof of the recent conjecture by Benatti,
Hiesmayr, and Narnhofer that their necessary condition for non-vanishing
entanglement in terms of a single spin and the ``memory'' of the \FCS, is also
sufficient \cite{BHN}. Our result also generalizes the study of entanglement in
the ground state of the AKLT model by Fan, Korepin, and 
Roychowdhury~\cite{FKR}. Furthermore, our result permits a more efficient calculation,
numerically and in some cases even analytically, of the  entanglement in
arbitrary finitely correlated quantum spin chains.

We continue the study of entanglement in the setting of ground states of Hamiltonians with
a spectral gap. In particular, for $V$ a finite subset of $\Ir^d$, we let $H_V$ denote a 
Hamiltonian on $V$ with finite range,
finite strength interactions and a unique ground state with a non-vanishing spectral gap.
For a density matrix $\rho_A$ that describes the finite-volume restriction to a region $A$
of the unique ground state, we provide a detailed version of Hastings' proof in one dimension \cite{hastings_law}, 
that the entropy of $\rho_A$ is bounded by a uniform constant $C$, where $C$ depends only on the interaction strength, the spectral gap and the maximum among the dimensions of the state spaces associated 
with each site in $V$. Moreover, we provide a detailed generalization of the  $1$-dimensional construction of Hastings' approximation to the ground state in dimensions $2$ and higher, which may prove useful for understanding the underlying structure of ground states of gapped Hamiltonians in higher dimensions.

Finally, we turn our attention to the study of a conjecture central to Quantum Information Theory, the multiplicativity of the maximal output Schatten $p$-norm of quantum channels, for $p > 1$. In particular, we study the output $2$-norm for a special class of quantum channels, the depolarized Werner-Holevo channels, and show that multiplicativity holds for a product of two identical channels in this class. Moreover, it is shown that the depolarized Werner-Holevo channels do not satisfy the entrywise positivity (EP) condition introduced by C. King and M.B. Ruskai, which suggests that the main result is non-trivial, since the EP condition has been shown to imply multiplicativity of the output $2$-norm.

    \newpage

    %
    %

    \chapter*{\vspace{-1.5in}Acknowledgments and Thanks}

       I begin this section by saying {\it thank you} to Annemarie Sheets. Thank you for spending four difficult and wonderful years at MIT by my side, for motivating me to go to graduate school to get my Ph.D. and finally, for putting up with my mathematical temperament. 

Talking about temperament, I want to thank my adviser, Bruno Nachtergaele, for having the kind of patience with me that comes with true wisdom and a deep devotion to helping others fulfill their potential. I will always think of him as my mentor, teacher and friend. At this point, I would also like to thank the faculty and staff of a department of mathematics that deserves to be ranked highly for its quality of research and sense of community. Many here don't just
do math, they live their lives as if mathematics truly matters to them. In that spirit, I wish to thank everyone who supported the idea of giving back to the community through the Explore Math program.
Brandy, Yvonne and Sarah, the kids will never forget your passion for teaching them about the beauty of mathematics.

As a graduate student and specifically a researcher, I wondered if there would be one day that I would wake up and feel like I was finally a mathematician. That day has not come yet for me and I know because I have met {\it mathematicians}.
Whether it be my adviser, or the fiercely intelligent and entertaining Prof. Greg Kuperberg who introduced me to quantum information theory, I had the honor of learning from some of the best in the field. Through
generous funding from the department of mathematics, I flew to conferences and workshops around the world to meet and learn from some of the top researchers in mathematical physics. I think it is only fitting that I take a few lines to thank F. Benatti, H. Narnhofer and P. Horodecki for valuable discussions, during a visit to Trieste, Italy when the article that is now Chapter \ref{sec:LabelForChapter2} was being prepared for publication. Mary Beth Ruskai for suggesting the question that is now Chapter \ref{sec:LabelForChapter4} and the many helpful discussions I had with her over the years. 
I would also like to
thank E. Hamza, R. Sims and B. Nachtergaele, for their invaluable contributions to Chapter \ref{sec:LabelForChapter3} and to M. Hastings for his ground breaking research and for accepting me under his wings as I learn to fly as an independent researcher in the next few years. Moreover, it is a great honor to have Prof. Craig Tracy on my thesis committee and for that I am very thankful.
Of course, the support of several NSF grants made it possible for an international student like myself to do research in such an exciting field, so with great gratitude I acknowledge support from NSF Grants \#DMS-0303316 and \#DMS-0605342.

My social life here at Davis would not have been such a wonderful experience without the presence of an amazing circle of friends. So, Zach, thank you for teaching me Japanese and playing soccer with me when math was straining my mind. Tu, thank you for inviting me over at your table one fateful night, to eat with you and Ann and Nakul. We have been inseparable since then and I feel truly lucky to have met you all. To Vivian, thank you for running with me, and cooking for me and staying up with me late into the night while I was doing math inside a sleepy looking math building. 

To my wonderful family, this one is for you.

\begin{flushright}
Davis, CA\\
July, 2008
\end{flushright}
    
    \newpage

    %
    %

    \pagestyle{fancy}
    \pagenumbering{arabic}
       
    %
    %

    \newchapter{Introduction}{Introduction}{Introduction}
    \label{sec:Intro}
	\section[Informal Overview and Historical Motivation]{Informal Overview and Historical Motivation}
        \label{sec:Intro:Motivation}

The newly developed fields of Quantum Information Theory (QIT) and Quantum Computation (QC) have cross-fertilized a variety of areas in
long established fields of mathematics and physics, such as Operator Theory and Condensed Matter Physics. The mysterious resource of {\it quantum entanglement}, the stronger-than-classical non-local correlation between quantum particles, is at the center of many exciting new theories and applications, such as polynomial-time factoring and quantum teleportation. 
The physical and mathematical framework behind many of these applications can be formulated in the language of one dimensional arrays of quantum particles (spins), known as {\it quantum spin chains}. These systems are completely described by their associated Hamiltonian, a mathematical object which encodes the interactions between the particles. More importantly, the Hamiltonian contains information about the possible states of the system it describes and the energy level associated with each such state. Of special importance is the state of the system with the lowest energy, often called the {\it ground-state} of the system. The minimum amount of energy required to excite the system from its ground-state is called the {\it spectral gap}.

Not long ago, quantum spin chains were primarily the object of study in Condensed Matter Physics. Since the advent of QIT and QC, there has been a renewed interest in the properties of these seemingly simple quantum objects. The dynamics that govern spin chains can be applied to quantum states to transform them to desired target states, thus achieving quantum computation. The existence of a spectral gap in the Hamiltonian describing the dynamics of the spin chain is key to performing quantum computation reliably. The gap not only acts as a safeguard against external perturbations that might otherwise derail the computation, but it also determines the time it takes computational protocols such as adiabatic quantum evolution \cite{adiabatic} to solve problems which are intractable for classical computers. Hence, the study of the spectral gap and its implications is central to efficient and reliable quantum computation.

Recent progress in the study of the velocity with which interactions spread between spins on a spin chain \cite{ns,loc} has been combined with techniques used for splitting Hamiltonians with a spectral gap into ``frustration free'' components \cite{gap_vbs,solvgap}, to produce some spectacular results in Quantum Information Theory, with applications to Quantum Computation. A question central to the study of the complexity of simulating quantum systems with classical resources \cite{complexity,aharonov} is the conjecture that the entropy of a bounded region of spins in the ground state of a gapped Hamiltonian grows proportionally to the surface area of the region and not the bulk (volume). This conjectured entropy scaling has attracted much attention lately as it may both help explain and optimize the running time of algorithms \cite{simutability}, such as the Density Matrix Renormalization Group (DMRG) algorithm \cite{DMRG}, that exploit the underlying structure of ``frustration free" ground states, such as the one describing the ground state of the well-known AKLT model \cite{haldane, AKLT}, to calculate efficiently certain properties of a given Hamiltonian. The reason why such algorithms work so well at calculating properties of quantum systems is straightforward: The system, usually, has a limited amount of entanglement between its components, hence it can be approximated efficiently by a series of ever-refined classical-like components, known as {\it Finitely Correlated States} \cite{FNW,FNW2}, or more recently, as Matrix Product States \cite{VC,MPS}.
This special class of states will continue to attract much attention, as we strive to understand better
the conditions under which entanglement persists in the ground states of quantum systems we wish
to use in practice.

\section{Summary of the Main Results}
\label{sec:Intro:Summary}

In the chapters that follow, we present research that has either been published (Chapter \ref{sec:LabelForChapter2} in \cite{spiros_prl} and Chapter \ref{sec:LabelForChapter4} in \cite{spiros_jmp}), or, is being prepared for publication (Chapter \ref{sec:LabelForChapter3}).
 
In particular, in Chapter \ref{sec:LabelForChapter2} we present a proof of a conjecture posed by Benatti et al. in \cite{BHN}. In their paper, using the structure of  Finitely Correlated Pure States, introduced by Fannes et al. in \cite{FNW2,FNW}, the authors construct a ``dual'' state $\mathbb{F}(\rho)$ that encodes a large part of the information found in the original state of the spin-chain.
Using a brilliantly simple argument, they proceed to show that in order for entanglement to exist between a spin at position $1$ of the spin-chain and a subset of spins in positions $[p,n]\, (p>1)$ in the original state, it is a necessary condition that the "dual" state $\mathbb{F}(\rho)$ has non-zero entanglement. Using numerical evidence, they conjecture then that this condition is also sufficient for entanglement between subsets of spins on the translation invariant state. In this chapter, we resolve this conjecture by proving the stronger claim that the entanglement of formation between the first spin and the spins at sites $[2,n]$ converges exponentially fast in $n$ to the entanglement of formation of $\mathbb{F}(\rho)$.
\vskip 1em

In the first section of Chapter \ref{sec:LabelForChapter3}, we turn our focus to the Area Law for $1$-dimensional ground states of gapped Hamiltonians and provide a detailed proof of a ground breaking result by Hastings \cite{hastings_law}, the proof of the conjectured area law for $1$-dimensional Hamiltonians with a spectral gap. In section \ref{sec:LabelForChapter3:Section2}, we leave the realm of Quantum Information Theory and enter the field of Quantum Statistical Mechanics in order to construct a higher-dimensional generalization of Hastings' approximation for the ground state of a gapped Hamiltonian, which is central to the proof of the area law in one dimension. It is our hope that a clear presentation of the techniques and ideas involved in the construction of such an approximation, will shed more light on the underlying entanglement of the ground states of Hamiltonians that describe quantum systems with a spectral gap \cite{gap_vbs,shannon,knabe}, leading to a better understanding of the scaling of entropy in $2$ and higher dimensions.

The results in Chapter \ref{sec:LabelForChapter4} came about after a conversation with M. B. Ruskai at the 2006 International Congress of Mathematical Physics in Rio de Janeiro, where she posed a question related to the ``multiplicativity conjecture'' for a certain class of quantum channels. In mathematical terms, a quantum channel is a completely positive, trace-preserving map between the
algebras of bounded operators on Hilbert spaces, and ``multiplicativity'' refers to the maximal output (Schatten) $p$-norm equality $\|\Phi\otimes\Psi\|_p = \|\Phi\|_p\,\|\Psi\|_p$, where $\Phi, \Psi$ denote quantum channels and $\|\Omega\|_p = \sup_{\rho} \|\Omega(\rho)\|_p$, $\rho$ being the input state. 
The study of this conjecture and its implications are at the center of Quantum Information Theory, as other important ``additivity'' conjectures involving the capacity of quantum channels and the role of entanglement in transmitting quantum information may be proven \cite{shor,fukuda} by showing the following sufficient condition: For each quantum channel $\Phi$, there exists a $p(\Phi) > 1$ such that $\|\Phi\otimes\Phi\|_{p_n} = \|\Phi\|_{p_n}^2$ for $1 < p_n < p(\Phi)$ and $p_n \rightarrow 1$.
Focusing on the special case of the maximal output $2$-norm (a.k.a. Hilbert-Schmidt norm), I was able to prove multiplicativity for a class of channels given by $\Phi(\rho) = \lambda \rho + (1-\lambda) \frac{\one_d - \rho^T}{d-1}$, for all $d \ge 2, \lambda \in [0,1]$ \cite{spiros_jmp}. The channels I studied are a generalization of the well-known Werner-Holevo channels $W(\rho) = \frac{\one_d - \rho^T}{d-1}$, which gave the first counterexamples to multiplicativity (for $p > 4.79$).
It was recently shown that for $p \ne 2$ there exist quantum channels that yield counterexamples to the conjecture of multiplicativity \cite{Winter, Hayden}. To the best of my knowledge, there is no published counterexample for $p=2$, but the focus has shifted to studying multiplicativity in the region of  $p < 1$. 
In any case, like with so many other questions in Quantum Information Theory, the study of multiplicativity and entanglement in general, promises to be both fruitful for various areas of mathematics, physics and computer science and truly exciting in the development of both the theory and future applications!

        
        %
    
        

    %
    %

    \newchapter{Entanglement in Finitely Correlated States}{Entanglement in Finitely Correlated Spin States}{Entanglement in Finitely Correlated Spin States}
    \label{sec:LabelForChapter2}

        
    
       
\section{Introduction}
Entanglement properties of quantum spin-chains have recently attracted
attention from researchers in quantum information theory and condensed matter
physics. From the perspective of quantum information theory, the distribution
of entanglement over long ranges via local operations on a
spin-chain~\cite{bravyi2006, EO, NOS} has obvious applications to teleportation-based
models of quantum computation ~\cite{Loc-Ent, CR, GB}. Moreover, it has
recently been shown that entanglement in \textit{finitely correlated
chains}~\cite{FNW2} can be used to achieve universal quantum
computation~\cite{VC} and provide a computational tool for adiabatic quantum
computation~\cite{MPS}. On the other hand, the scaling behavior of quantum
correlations in infinite spin-chains is intimately related to their critical
behavior (recent work has established a general mathematical framework for
studying entanglement in infinite quantum spin-chains~\cite{KMSW}.)

Finitely Correlated States (\FCS) are a generalization of the so-called Valence
Bond Solid (VBS) states, which arise as the exact ground states of a
considerable variety of quantum spin Hamiltonians \cite{gap_vbs,AKLT}.
Interestingly, \FCS\ also provide approximate ground states of {\em any}
quasi-onedimensional spin system with finite-range interactions \cite{ABU}. In
particular, Density Matrix Renormalization Group calculations produce numerical
approximations of the ground state of spin chains that can be regarded as \FCS\
\cite{OS}. 

Motivated by the potential applications of distributed entanglement in finitely correlated chains, Benatti, \etal{} in~\cite{BHN}, 
give a {\bf necessary} condition for entanglement between a spin and a subset of other spins; namely, that the entanglement between a spin and, what one may think of as \cite{kuperberg}, the ``memory'' of the finitely correlated state must be non-zero.
They, furthermore, conjecture that the same condition is {\bf sufficient}, in the sense that
it implies entanglement between a spin and a subset of other spins.
We present here a proof of that conjecture by showing that the entanglement between a spin and its neighbors converges exponentially fast (in the number of neighboring spins) to the entanglement 
between a spin and the ``memory'' of the finitely correlated state. Moreover, we show that entanglement
between distant spins vanishes exponentially fast in the length of their separation.

Since finitely correlated states provide the exact ground states for
generalized valence-bond solid models~\cite{gap_vbs}, our result
generalizes the calculation of entanglement~\cite{FKR} for the AKLT
model~\cite{AKLT}.

More importantly, our result implies a simple and computationally efficient way for detecting distributed entanglement in finitely correlated states. Namely, the Positive Partial Transpose (PPT or Peres-Horodecki) criterion~\cite{Peres, H3} can be applied to the state describing the interactions of a spin with the "memory" of the finitely correlated state, to detect entanglement between a spin and a subset of other spins.

\section{The setup and main result} 
We will work with translation invariant pure \FCS~\cite{FNW2} on the infinite 
one-dimensional lattice. 
For each $i\in \mathbb{Z}$, the spin at site $i$ of the chain will be described by
the algebra $\A$ of $d\times d$ complex matrices. The observables of the spins
in an interval, $[m,n]$, are given by the tensor product $\A_{[m,n]} 
=\otimes_{j=m}^{n} (\A)_{j}$.
The algebra $\A_\mathbb{Z}$ describing the infinite chain arises as a suitable
limit of the {\it local} tensor-product algebras ${\A}_{[-n,n]}:=\otimes_{j=-n}^n(\A)_j$.
Any translation invariant state $\omega$ over $\A_\mathbb{Z}$ is completely
determined by a set of density matrices $\rho_{[1,n]}$, $n\geq 1$, which describe
the state of $n$ consecutive spins. In the case of a pure \FCS, as was
shown in \cite{FNW}, these density matrices can be constructed as follows:

The {\it memory}, $\B$, of a \FCS{} is represented by the algebra of 
$b\times b$ complex matrices. Let $\mathbb{E}:\A\otimes\B\mapsto\B$ 
be a completely positive unital map of the form
$\mathbb{E}(A\otimes B)=V(A\otimes B)V^{\dagger}$, where 
$V:\mathbb{C}^{d}\otimes\mathbb{C}^{b} \mapsto \mathbb{C}^{b}$, is a linear
map such that $VV^\dagger=\one_\B$. 
We define the completely positive map $\Ehat : \B \mapsto \B$, by $\Ehat(B) 
= \mathbb{E}(\one_{\A} \otimes B)$. The condition on $V$ implies that
$\Ehat$ is unital: $\Ehat(\one_\B)=\one_\B$.
In \cite{FNW} it is proved that for any pure translation invariant \FCS, 
it is always possible to choose
$\mathcal{B}$ and $V$ such that there is a unique, non-singular 
$b\times b$ density matrix, $\rho$, with the property 
$\Tr\rho\, \Ehat(B)=\Tr \rho\, B$, for all $B\in \B$.

We introduce the density matrix $\rhoAB$ associated with the state 
encoding the interaction between the spin at site $1$ and the ``memory''
of the \FCS:
\be
{\Tr}_{\A \otimes \B}\Bigl(\rhoAB\, A\otimes B\Bigr) 
= {\Tr}_{\B}\Bigl(\rho\, \mathbb{E}(A\otimes B)\Bigr).
\nonumber
\ee
Using the cyclicity of trace we also have $\rhoAB = V^{\dagger} \rho V$.

We are now ready to define the density matrices $\rhoN$ recursively, by the following identity:
\begin{equation}\label{def:rhoS}
\Tr (\rhoN A_1\otimes A_2 \otimes \cdots \otimes A_n) = {\Tr}_{\B} (\rho \, \E(A_1 \otimes \E(A_2 \otimes \cdots \otimes \E(A_{n-1} \otimes \E(A_n \otimes \one_{B}))\cdots))).
\end{equation}
From the above definition and the cyclicity of the trace we get the equivalent definition
\benn
\rhoN = {\Tr}_{\B}(V_{n}^{\dagger}\rhoAB V_{n}),
\eenn
where 
$V_{n} = (\one_{\A}\otimes V)(\one_{\A^{\otimes 2}}\otimes V) \cdots (\one_{\A^{\otimes n-1}}\otimes V)$.

An important property, intimately related to the exponential
decay of correlations in a {\it pure} \FCS{}
is that the peripheral spectrum of $\Ehat$ is trivial; 
that is, $\one_{\B}$ is the 
only eigenvector of $\Ehat$ with eigenvalue of modulus $1$ \cite{FNW}. 
This implies that the iterates of $\Ehat$ converge exponentially fast
to $\Ehat^\infty$ given by $\Ehat^{\infty}(B) = \lim_{n \rightarrow \infty} \Ehat^n(B) = \Tr(\rho B)\one_{\B}$.
More precisely, for any $\lambda$ such that $|\lambda_{i}| < \lambda < 1$,
for all eigenvalues $\lambda_{i}$ of $\Ehat$ different from $1$, there
exists a constant $c$ such that for all $n\geq 1$:
\be\label{Ehat:bound}
\|\Ehat^{n} - \Ehat^{\infty}\| \leq c \lambda^{n},
\ee
where the norm is the $\infty$-norm on $\B$ considered as a Banach space with the $1$-norm.

Our object of study is the entanglement of formation, \EoF~\cite{BDSW}.
The \EoF{} is defined for states of composite systems with 
a tensor product algebra of observables $\X_{1}\otimes\X_{2}$.
\begin{definition}[Entanglement of Formation]
The entanglement of formation of a bipartite state over $\X_{1}\otimes\X_{2}$ with associated density 
matrix $\sigma_{12}$ is given by:
\benn
\eof{\X_{1}}{\X_{2}}{\sigma_{12}} = \inf \sum_{i} p_{i} \, 
S\Bigl({\Tr}_{\X_{2}}(\sigma_{12}^{i})\Bigl),
\eenn
where $S(\rho) = - \Tr \rho \log \rho$ is the von Neumann entropy and the infimum of the average entropy is taken over all convex decompositions $\sigma_{12} = \sum_{i} p_{i}\, \sigma_{12}^{i}$ into pure states.
\end{definition}
Whenever $\X_{1}$ is finite dimensional, as will be the case for us,
the infimum can be replaced by a minimum in the above definition, i.e.,
there is an optimal decomposition, $\{p_{i},\,\sigma_{12}^{i}\}$, where the 
infimum is attained (see~\cite{EoF} for details).
We call $\{\ket{\phi_{i}}\}$ an {\bf ensemble} for the density matrix $\sigma$ whenever the latter can be
decomposed as $\sigma = \sum_{i} \pure{\phi_{i}}$.
There are an infinite number of ensembles corresponding to a given density matrix.
The following lemma provides us with a complete classification:
\begin{lemma}[Isometric Freedom in Ensembles,~\cite{Schroedinger,HJW}]\label{lem:iso-freedom}
Let $\{\ket{e_{i}}\}_{i=1}^{d}$ be the ensemble corresponding to the eigen-decomposition of
the density matrix $\sigma$, where $d = \text{rank}(\sigma)$. Then, $\{\ket{\psi_{i}}\}_{i=1}^{m}$ is
an ensemble for $\sigma$ if and only if there exists an isometry 
$U: \mathbb{C}^{d}\mapsto \mathbb{C}^{m}$
such that $$\ket{\psi_{i}} = \sum_{j=1}^{d} U_{i,j}\,\ket{e_{j}}, \, 1\leq i \leq m.$$
\end{lemma}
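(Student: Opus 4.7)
The plan is to establish the two implications separately. The "if" direction is a one-line computation; the "only if" direction requires first showing that each $\ket{\psi_i}$ lies in the support of $\sigma$, after which the isometry $U$ can be read off the expansion coefficients of $\ket{\psi_i}$ in the eigenbasis.

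For the "if" direction, assume $\ket{\psi_i} = \sum_j U_{ij}\ket{e_j}$ with $U^\dagger U = \one_d$. I would directly expand
\[
\sum_i \ketbra{\psi_i}{\psi_i} = \sum_{j,k}\Bigl(\sum_i U_{ij}\overline{U_{ik}}\Bigr)\ketbra{e_j}{e_k} = \sum_{j,k}(U^\dagger U)_{kj}\,\ketbra{e_j}{e_k} = \sum_j \ketbra{e_j}{e_j} = \sigma,
\]
where the last equality is just the definition of the eigen-ensemble.

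For the converse, let $\lambda_j>0$ denote the strictly positive eigenvalues of $\sigma$ and set $\ket{\hat e_j}:=\ket{e_j}/\sqrt{\lambda_j}$, so that $\{\ket{\hat e_j}\}_{j=1}^{d}$ is an orthonormal basis of the support of $\sigma$. The substantive step is to show that every $\ket{\psi_i}$ lies in this support. I would argue this from the operator inequality $\ketbra{\psi_i}{\psi_i}\le\sigma$, which follows by subtracting the remaining nonnegative rank-one terms from $\sum_j \ketbra{\psi_j}{\psi_j} = \sigma$, together with the standard fact that $0\le A\le B$ forces $\mathrm{ran}(A)\subseteq\mathrm{ran}(B)$.

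Once the support inclusion is in hand, I can expand $\ket{\psi_i}=\sum_{j=1}^{d} c_{ij}\ket{\hat e_j}$ with $c_{ij}=\braket{\hat e_j}{\psi_i}$. Plugging this expansion into the ensemble identity $\sum_i\ketbra{\psi_i}{\psi_i}=\sigma$ and comparing matrix elements in the basis $\{\ket{\hat e_j}\}$ against the eigen-decomposition $\sigma=\sum_j\lambda_j\ketbra{\hat e_j}{\hat e_j}$ forces $\sum_i c_{ij}\overline{c_{ik}}=\lambda_j\,\delta_{jk}$. Defining $U_{ij}:=c_{ij}/\sqrt{\lambda_j}$ then yields $\ket{\psi_i}=\sum_j U_{ij}\ket{e_j}$ and simultaneously $(U^\dagger U)_{jk}=\delta_{jk}$, exhibiting $U:\mathbb{C}^d\to\mathbb{C}^m$ as the required isometry. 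The only point demanding any genuine care is the support inclusion; everything else is linear-algebraic bookkeeping, so I expect the full proof to fit in a short paragraph.
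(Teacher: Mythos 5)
The paper does not prove this lemma---it is cited as a classical result of Schr\"{o}dinger and of Hughston--Jozsa--Wootters, so there is no paper proof to compare your argument against. Your standalone proof is correct. The ``if'' direction is the straightforward computation you describe: with $\ket{e_j}$ the unnormalized eigenvectors (so that $\sigma=\sum_j\ketbra{e_j}{e_j}$, as in the paper's definition of the eigen-ensemble), the isometry condition $U^\dagger U=\one_d$ collapses the double sum to $\sum_j\ketbra{e_j}{e_j}=\sigma$. For the ``only if'' direction, you correctly isolate the one nontrivial point, namely that each $\ket{\psi_i}$ lies in the support of $\sigma$; the argument $\ketbra{\psi_i}{\psi_i}\le\sigma$ plus the standard fact that $0\le A\le B$ implies $\operatorname{ran}(A)\subseteq\operatorname{ran}(B)$ (equivalently $\ker B\subseteq\ker A$ for positive semidefinite operators) does the job. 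After that, matching matrix elements in the orthonormal basis $\{\ket{\hat e_j}\}$ of the support gives $\sum_i c_{ij}\overline{c_{ik}}=\lambda_j\delta_{jk}$, and dividing by $\sqrt{\lambda_j}$ (legitimate since $d=\operatorname{rank}\sigma$ guarantees $\lambda_j>0$ for $j\le d$) produces a genuine isometry $U$ with $\ket{\psi_i}=\sum_j U_{ij}\ket{e_j}$. This is essentially the standard HJW argument, cleanly presented.
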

\noindent
The above lemma implies that any two ensembles for the same density matrix,
$\{\ket{\psi_{j}}\}_{j=1}^{M_{1}}$, and $\{\ket{\phi_{i}}\}_{i=1}^{M_{2}}$,
are similarly related via a partial isometry  
$W: \mathbb{C}^{M_{1}}\mapsto\mathbb{C}^{M_{2}}$.

Our main result is the following theorem:
\begin{theorem}\label{thm:conjecture}
For any pure translation invariant \FCS{} we have
\begin{align}
0 \leq \eof{\A}{\B}{\rhoAB} - \eof{\A}{\A^{\otimes n-1}}{\rhoN} \leq \epsilon(n),\label{eq:conjecture}
\end{align}
where $\epsilon(n)$ decays exponentially fast in $n$.
\end{theorem}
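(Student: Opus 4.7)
The theorem splits into two parts: the nonnegativity $\eof{\A}{\B}{\rhoAB} \geq \eof{\A}{\A^{\otimes n-1}}{\rhoN}$, which is the easy direction, and the exponentially small gap in the other direction, which is the main content. For the easy direction my plan is to note that the partial isometry $V_n^\dagger : \A\otimes\B \to \A^{\otimes n}\otimes\B$ factors as $\one_\A \otimes W_{n-1}$, with $W_{n-1}: \B \to \A^{\otimes n-1}\otimes\B$ the iterated isometric ``unfolding'' built from $V^\dagger$. Consequently $\rhoN = (\mathrm{id}_\A \otimes \Phi_n)(\rhoAB)$ for the CPTP map $\Phi_n(\sigma) := \Tr_\B(W_{n-1}\sigma W_{n-1}^\dagger)$ from $\B$ to $\A^{\otimes n-1}$, and monotonicity of entanglement of formation under one-sided CPTP channels delivers the lower inequality at once.

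For the upper bound I would construct a CPTP recovery channel $\Psi_n : \A^{\otimes n-1} \to \B$ with $\|(\mathrm{id}_\A\otimes\Psi_n)(\rhoN) - \rhoAB\|_1 \leq c'\lambda^n$, and then combine one-sided monotonicity with an Alicki--Fannes type continuity estimate for $\EoF$ (whose constant depends only on $\min(\dim\A,\dim\B)$, not on the exponentially growing $d^{n-1}$) to conclude
\[
\eof{\A}{\B}{\rhoAB} \le \eof{\A}{\A^{\otimes n-1}}{\rhoN} + c'\lambda^n\log(d) + h(c'\lambda^n),
\]
where $h$ is the binary entropy. The natural candidate for $\Psi_n$ is the Petz recovery map for $\Phi_n$ with reference state $\rho$, namely $\Psi_n(\tau) = \rho^{1/2}W_{n-1}^\dagger\bigl(\Phi_n(\rho)^{-1/2}\tau\Phi_n(\rho)^{-1/2}\otimes\one_\B\bigr)W_{n-1}\rho^{1/2}$, which by construction inverts $\Phi_n$ exactly on $\rho$ and approximately on nearby states.

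The main obstacle is the quantitative trace-norm bound on the recovery error, and here the leverage is the complementary-channel structure of $\Phi_n$. The Stinespring dilation has $W_{n-1}$ as isometry and $\B$ as the discarded environment; the complementary channel $\sigma\mapsto\Tr_{\A^{\otimes n-1}}(W_{n-1}\sigma W_{n-1}^\dagger)$ is exactly the iterated pre-dual $\mathcal{M}^{n-1}$ of $\Ehat$, so by hypothesis~\eqref{Ehat:bound} it converges to the constant ``erasure to $\rho$'' channel at rate $c\lambda^n$. Consequently the $\A\otimes\B$ marginal of the extended state $\tilde\rho_n := (\one_\A\otimes W_{n-1})\rhoAB(\one_\A\otimes W_{n-1}^\dagger)$ tends to the product $(\Tr_\B\rhoAB)\otimes\rho$ exponentially, so $\A$ decouples from the environment $\B$ of $\Phi_n$ at rate $\lambda^n$; this is a standard sufficient condition for near-perfect reversibility of $\Phi_n$, yielding the desired trace-norm bound on $(\mathrm{id}_\A\otimes\Psi_n\circ\Phi_n)(\rhoAB) - \rhoAB$. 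An alternative route, closer in spirit to Lemma~\ref{lem:iso-freedom}, is to lift the optimal ensemble of $\rhoN$ directly to an ensemble of $\rhoAB$ via the partial isometry identified by the HJW parameterization of ensembles of $\tilde\rho_n$, and then bound the resulting mismatch in average reduced entropies on the $\A$ side by $\|\Ehat^n-\Ehat^\infty\|$; this sidesteps continuity of $\EoF$ entirely at the cost of more delicate bookkeeping of the isometries.
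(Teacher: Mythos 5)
Your lower-bound argument is correct and is essentially the paper's argument repackaged: the paper observes directly that each pure-state decomposition of $\rhoAB$ is transported, via the isometry $V_n^\dagger=\one_\A\otimes W_{n-1}$, to a decomposition of $\rhoN$ with \emph{identical} reduced states on $\A$, which is precisely the content of monotonicity of $\EoF$ under $\mathrm{id}_\A\otimes\Phi_n$.

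Your primary route to the upper bound is, however, genuinely different from the paper's. The paper introduces no recovery channel and uses no continuity estimate for $\EoF$ itself. Instead it starts from the \emph{optimal} decomposition of $\rhoN$, applies the isometric-freedom Lemma~\ref{lem:iso-freedom} to write its members as $\ket{\Phi^n_l}=\sum_{i,j}U_{l,(ij)}G_{n,j}^\dagger V^\dagger\ket{\chi_i}$, and builds a corresponding (generally sub-optimal) decomposition $\{\sigma_l\}$ of $\rhoAB$ indexed by the same $l$. It then compares, term by term, the entropies of the restrictions to $\A$ of $\pure{\Phi^n_l}$ and of $\sigma_l$; the only continuity estimate used is Fannes' inequality for the von Neumann entropy on the fixed $d$-dimensional algebra $\A$, and the operator identity $G_{n,j'}G_{n,j}^\dagger=\one_\A\otimes\Ehat^{n-1}(\ket{\chi_{j'}}\bra{\chi_j})/(\|\chi_{j'}\|\|\chi_j\|)$ together with \eqref{Ehat:bound} supplies the rate, with every dimension factor a power of $b$, never of $d^{n-1}$. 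The ``alternative route, closer in spirit to Lemma~\ref{lem:iso-freedom}'' that you sketch at the end is in fact the paper's actual proof.

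Regarding the Petz-recovery route you prioritize: the structural picture is right (the complementary channel of $\Phi_n$ is indeed the $(n-1)$-fold trace-adjoint $(\Ehat^*)^{n-1}$, and it converges to the erasure-to-$\rho$ channel, so the discarded environment $\B$ decouples from $\A$ and from any purifying reference at rate $O(\lambda^n)$). But the key quantitative step --- turning that decoupling into a trace-norm bound on $(\mathrm{id}_\A\otimes\Psi_n\circ\Phi_n)(\rhoAB)-\rhoAB$ --- is asserted as ``standard'' and not supplied. To close it you would need to invoke and track constants in a Barnum--Knill / Fawzi--Renner--type reversibility lemma, and the versions that go through trace-norm decoupling typically lose a square root: decoupling error $\epsilon$ gives recovery trace error $O(\sqrt{\epsilon})$, hence $O(\lambda^{n/2})$ rather than the $O(\lambda^n)$ you wrote. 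That is still exponentially small, so the statement of the theorem survives, but the rate you claim for the recovery error is not justified as written. Be aware as well that \eqref{Ehat:bound} controls only the $1\to1$ superoperator norm of $\Ehat^{n}-\Ehat^\infty$; tensoring with the purifying reference to get the decoupling bound requires either a completely-bounded-norm upgrade (which costs a factor depending only on $b$) or the explicit basis expansion in $\B$ that the paper carries out by hand.
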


\section{Proof of the Theorem}
The lower bound is proven in~\cite{BHN}. For the sake of completeness, we include here the following proof. 

The definition of $\rhoN$ implies that every decomposition of $\rhoAB$ into pure states induces
a decomposition of $\rhoN$. Moreover, the restrictions to the spin at site $1$ of the $i$-th state
in the corresponding decompositions of $\rhoAB$ and $\rhoN$ are equal. To see this, note
that since the operators $V_n$ leave the first spin invariant, the cyclicity of the trace implies
$${\Tr}_{\A^{\otimes n-1}}(\rhoN^i) = {\Tr}_{\A^{\otimes n-1} \otimes \B}(V_{n}^{\dagger} \rhoAB^i V_n) = {\Tr}_{\B}(\rhoAB^i),$$
where we have used $V_n V_{n}^{\dagger} = \one_\A \otimes \one_\B$.
It follows that for each decomposition of $\rhoAB$ there is a corresponding decomposition of $\rhoN$ with equal average entropy. Since the average entropy of $\rhoN$ is minimized over a (possibly) larger set of decompositions, the lower bound follows.

We now focus on the upper bound.
We start with the following decompositions of $\rhoAB$ and $\rhoN$ into (unnormalized) pure states:
\begin{eqnarray}
\rhoAB &=& \sum_{i=1}^{b} V^{\dagger}\ket{\chi_{i}} \bra{\chi_{i}}V\label{rhoAB:dec}\\
\rhoN &=& \sum_{i,j=1}^{b} G_{n,j}^{\dagger}V^{\dagger}\ket{\chi_{i}} \bra{\chi_{i}}V G_{n,j}, \label{rhoN:dec}
\end{eqnarray}
where $\{\ket{\chi_i}\}_{i=1}^b$ is the eigen-ensemble of $\rho$ and
$G_{n,j} = V_{n} (\one_{\A^{\otimes n}}\otimes \ket{\chi_j}/ \|\chi_j\|)$.
The term in parenthesis in the expression for $G_{n,j}$ comes from the Kraus operators in the
decomposition of the completely positive map ${\Tr_{\B}}$.

By the observation following Lemma \ref{lem:iso-freedom}, we have that 
the (unnormalized) states $\ket{\Phi^{n}_{l}}$ in the {\bf optimal decomposition}  of $\rhoN$ are given by:
\be
\ket{\Phi^{n}_{l}} = \sum_{i,j=1}^{b} U_{l,(i j)} G_{n,j}^{\dagger}V^{\dagger}\ket{\chi_{i}},\quad 1\leq l \leq L\label{optimal},
\ee
for some partial isometry $U: \mathbb{C}^{b^{2}}\mapsto \mathbb{C}^{L}$,
whose dependence on $n$ we suppress.
Moreover, it is easy to check that $\rhoAB$ has a decomposition into $\sum_l \sigma_l$,  with $\sigma_{l} = \sum_j \pure{\Psi_l(j)}$ and
\be
\ket{\Psi_l(j)} = \sum_{i=1}^b u_{l,(ij)}\, \sqrt{p_i} \, V^{\dagger}\ket{\chi_i},\quad 1 \leq l \leq L
\label{rhoAB:optimal}
\ee

To calculate the \EoF{} we need the restrictions of 
$\{\pure{\Phi^{n}_{l}}\}$ and
$\{\sigma_{l}\}$ to $\A$:
\be
\tilde{\phi}^{n}_{l} = {\Tr}_{\A^{\otimes n-1}}(\pure{\Phi^{n}_{l}}) ,\quad
\tilde{\sigma}_{l} = {\Tr}_{\B}(\sigma_l).\label{rhoAB:rstr}
\ee
Define the density matrices 
$\phi^{n}_{l}=\tilde{\phi}^{n}_{l}/\alpha_{l}^{n}$
and $\sigma_{l}=\tilde{\sigma}_{l}/\beta_{l}$, where
$
\alpha_{l}^{n} \equiv \|\tilde{\phi}_{l}^{n}\|_{1} 
= \Tr(\tilde{\phi}_{l}^{n}),\quad
\beta_{l} \equiv \|\tilde{\sigma}_{l}\|_{1} \,\,= \Tr(\tilde{\sigma}_{l}).
$

From the definition of the \EoF{} and the optimality of $\{\tilde{\phi}^{n}_{l}\}_{l=1}^{L}$ we get:
\be
\eof{\A}{\B}{\rhoAB} - \eof{\A}{\A^{\otimes n-1}}{\rhoN}
\leq \sum_{l=1}^{L}\epsilon_{l}(n)\label{eq:main1},
\ee
where $\epsilon_{l}(n) = \beta_{l} S(\sigma_{l}) - \alpha_{l}^{n} S(\phi^{n}_{l}).$

It remains to show that $\sum_{l=1}^{L}\epsilon_{l}(n)$ is exponentially small.
We estimate each term in the sum as:
\be
|\epsilon_{l}(n)| \leq \beta_{l} |S(\sigma_{l}) - S(\phi^{n}_{l})| 
+ |\beta_{l}-\alpha_{l}^{n}| \log d\label{epsilon:first-bound},
\ee
since $\text{rank}(\phi^{n}_{l}) \leq d$.

To bound $|S(\sigma_{l}) - S(\phi^{n}_{l})|$ we use Fannes' inequality for the
continuity of the von Neumann entropy~\cite{Fannes}:
\be
|S(\sigma_{l}) - S(\phi^{n}_{l})| \leq (\log d+2)\|\sigma_{l}-\phi^{n}_{l}\|_{1} + \eta(\|\sigma_{l}-\phi^{n}_{l}\|_{1}),
\ee
where $\eta(x) = -x\log x$ and $\log$ is the natural logarithm.
By the triangle inequality we have:
\be
|\beta_{l}-\alpha_{l}^{n}| =
|\|\tilde{\sigma}_{l}\|_{1}-\|\tilde{\phi}_{l}^{n}\|_{1}| \leq \|\tilde{\sigma}_{l}-\tilde{\phi}^{n}_{l}\|_{1}\label{norms:bound}.
\ee
Another application of the triangle inequality gives:
\be
\|\sigma_{l}-\phi^{n}_{l}\|_{1} 
\leq \frac{\|\beta_{l}\sigma_{l}-\alpha_{l}^{n}\phi^{n}_{l}\|_{1}
+\|(\alpha_{l}^{n}-\beta_{l})\phi_{l}^{n}\|_{1}}{\beta_{l}}\nonumber,
\ee
which simplifies, with the use of~(\ref{norms:bound}), to the following inequality:
\be
\|\sigma_{l}-\phi^{n}_{l}\|_{1} \leq 2\frac{\|\tilde{\sigma}_{l}-\tilde{\phi}^{n}_{l}\|_{1}}{\beta_{l}}
\label{density:bound}
\ee
Combining equations~(\ref{epsilon:first-bound})-(\ref{density:bound}) and 
setting 
\be
\tau_{l}^{n} \equiv \|\tilde{\sigma}_{l}-\tilde{\phi}^{n}_{l}\|_{1}/\beta_{l}, \label{tau:first-bound}
\ee
we get the following bound for $\epsilon_{l}(n)$:
\be
|\epsilon_{l}(n)| \leq \beta_{l}[(\log d^{3}+4)\tau_{l}^{n} + \eta(2\tau_{l}^{n})]
\label{epsilon:bound}.
\ee
where we have assumed that $2\tau_{l}^{n} \leq 1/e$, 
to assure $\eta(x)$ is increasing.

To complete the proof, we show that $\tau_{l}^{n}$ is exponentially
small for large $n$. Since each $G_{n,j}$ leaves the spin at site $1$ invariant, the cyclicity of the trace yields:
\begin{equation}
\tilde{\phi}^{n}_{l} = \sum_{i,i',j,j'=1}^{b}
U_{l,(i' j')}^{*}U_{l,(i j)}{\Tr}_{\B}(V^{\dagger}\ket{\chi_{i}}\bra{\chi_{i'}}V G_{n,j'}G_{n,j}^{\dagger}),
\nonumber
\end{equation}
But $G_{n,j'}G_{n,j}^{\dagger} = \one_{\A} \otimes \Ehat^{n-1}(\ket{\chi_{j'}}\bra{\chi_{j}})/(\|\chi_{j'}\| \|\chi_{j}\|)$. Substituting
$\Ehat^\infty$ for $\Ehat^{n-1}$ we get:
$$\tilde{\sigma}_{l}-\tilde{\phi}^{n}_{l} = \sum_{i,i',j,j'=1}^b
U_{l,(i' j')}^{*} U_{l,(i j)} {\Tr}_{\B}(X_{i,i'}Y_{j,j'}),$$
where $X_{i,i'} = V^{\dagger}\ket{\chi_{i}}\bra{\chi_{i'}}V$ and
$Y_{j,j'} = \one_{\A} \otimes [\Ehat^{\infty}-\Ehat^{n-1}](\ket{\chi_{j'}}\bra{\chi_{j}})/(\|\chi_{j'}\| \|\chi_{j}\|)$.

Like all trace preserving quantum operations, the partial trace is contractive with respect to the $1$-norm. Hence, an application of the triangle inequality for the $1$-norm gives:
\begin{eqnarray*}
\|\tilde{\sigma}_{l}-\tilde{\phi}^{n}_{l}\|_{1} \leq \sum_{i,i',j,j'}^b
|U_{l,(i' j')}^{*}| |U_{l,(i j)}| \|X_{i,i'}\|_1 \|Y_{j,j'}\|_1
\end{eqnarray*}
It is not hard to see that
\be
\|X_{i,i'}\|_1 = \|\chi_{i}\| \|\chi_{i'}\|, \quad  \|Y_{j,j'}\|_1 \leq \|\Ehat^{(n-1)}-\Ehat^{\infty}\| \nonumber
\ee
and hence
\be
\|\tilde{\sigma}_{l}-\tilde{\phi}^{n}_{l}\|_{1} \leq \Bigl(\sum_{i=1}^{b}
\sum_{j=1}^b |U_{l,(i j)}| \, \|\chi_{i}\|\Bigl)^{2}\,\|\Ehat^{(n-1)}-\Ehat^{\infty}\| \nonumber
\ee
Since $\sum_{i,j=1}^{b} |U_{l,(i j)}|^{2}\, \|\chi_{i}\|^{2} = \beta_{l}$, two applications of Cauchy-Schwarz give:
\be\label{density:2-norm-final}
\|\tilde{\sigma}_{l}-\tilde{\phi}^{n}_{l}\|_{1} \leq b^2 \beta_{l} \|\Ehat^{(n-1)}-\Ehat^{\infty}\|.
\ee

Finally, combining~(\ref{Ehat:bound}) with~(\ref{density:2-norm-final}), equation~(\ref{tau:first-bound}) becomes:
\be\label{tau:final-bound}
\tau_{l}^{n} \leq c_{1} \lambda^{n}, \quad c_{1} = c b^2 / \lambda.
\ee

To conclude the proof, we note that since the bound for $\tau_{l}^{n}$ is independent of $l$, summing over $l$ in equation~(\ref{epsilon:bound}) yields:
\be
\sum_{l=1}^{L} |\epsilon_{l}^{n}| \leq (\log d^{3}+4) c_{1}\lambda^{n} + \eta(2 c_{1}\lambda^{n}).
\nonumber
\ee

It is clear that for $\lambda' > \lambda$ there exists a constant $c_{2}$ such that 
$$\eta(2 c_{1}\lambda^{n}) \leq c_{2}(\lambda')^{n}.$$
The only condition on $n$ was imposed in equation~(\ref{epsilon:bound})
were we assumed that $2\tau_{l}^{n} \leq \frac{1}{e}$. Using equation~(\ref{tau:final-bound}) we see 
that there is an $n_{0}$ such that the above condition is satisfied for all $n \geq n_{0}$.
The previous observations imply that for all $\lambda'$ with $\lambda < \lambda' < 1$, 
there is a constant $c_{3}$ such that:
\be
\epsilon(n) = c_{3} (\lambda ')^{n} \geq \sum_{l=1}^{L} |\epsilon_{l}^{n}|,\text{  for all $n$.}\nonumber
\ee
Finally, equation~(\ref{eq:main1}) implies that:
\be
\eof{\A}{\B \otimes \B}{\rhoAB} - \eof{\A}{\A^{\otimes n-1}}{\rhoN} \leq \epsilon(n)\nonumber,
\ee
and this completes the proof of the theorem.

A natural question to ask at this point is the following: 
\emph{How does the entanglement between the spin at site $1$
and spins at sites $[p,n],\, p\geq 2$ behave as $p$ becomes large?}
Since the state $\rhoPN$ factorizes into $\rho_{1}\otimes \rho_{[p,n]}$
as $p \rightarrow \infty$~\cite{HP}, we expect that the bulk of the entanglement  is
concentrated near site $1$.
The following theorem confirms this:
\begin{theorem}
For any pure translation invariant \FCS{} and $n \geq p \geq 2$, the following bound holds:
\be
\eof{\A}{\A^{\otimes n-p+1}}{\rhoPN} \leq C \, \ln d\, (n-p+1)\, \epsilon(p),
\ee
where $C$ is of order unity, $d$ is the dimension of each spin and $\epsilon(p)$ decays exponentially fast in $p$.
\end{theorem}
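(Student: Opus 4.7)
My proposed approach is to show that $\rhoPN$ converges exponentially fast in $p$ to the product state $\rho_1 \otimes \rho_{[p,n]}$, whose entanglement of formation vanishes, and then to invoke an asymptotic-continuity bound for $\EoF$ with a dimension-dependent constant. The key observation is that tracing out the spins at sites $[2,p-1]$ inserts an iterate $\Ehat^{p-2}$ between the single-site Kraus structure of site $1$ and the Kraus structure of the $[p,n]$-block, and by (\ref{Ehat:bound}) this iterate is exponentially close to $\Ehat^\infty(B) = \Tr(\rho B)\,\idty_\B$.

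Concretely, setting $A_2 = \cdots = A_{p-1} = \idty_\A$ in the recursion (\ref{def:rhoS}) collapses the $p-2$ intermediate applications of $\E(\idty_\A \otimes \cdot)$ into a single $\Ehat^{p-2}$:
\benn
\Tr(\rhoPN\, A_1 \otimes A_p \otimes \cdots \otimes A_n) = \Tr_\B\Bigl(\rho\, \E\bigl(A_1 \otimes \Ehat^{p-2}(\E(A_p \otimes \cdots \otimes \E(A_n \otimes \idty_\B)))\bigr)\Bigr).
\eenn
A short calculation shows that replacing $\Ehat^{p-2}$ by $\Ehat^\infty$ on the right-hand side reproduces $\rho_1 \otimes \rho_{[p,n]}$, so the entire $p$-dependence of $\rhoPN - \rho_1 \otimes \rho_{[p,n]}$ is controlled by $\Ehat^{p-2} - \Ehat^\infty$.

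I would then establish a trace-distance bound
\benn
\|\rhoPN - \rho_1 \otimes \rho_{[p,n]}\|_1 \leq C_1\, c\, \lambda^{p-2},
\eenn
with $C_1$ depending only on the memory dimension $b$. The argument parallels the Kraus-operator estimate leading to~(\ref{density:2-norm-final}) in the proof of Theorem~\ref{thm:conjecture}: the contractivity (in the $1$-norm) of the trace-preserving CP map $\B \to \A^{\otimes n-p+1}$ that assembles the $[p,n]$-block, together with the bound~(\ref{Ehat:bound}) on $\|\Ehat^{p-2} - \Ehat^\infty\|$, ensures that the small perturbation of $\Ehat^{p-2}$ propagates to the full state without any blow-up in $n-p+1$. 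Invoking next the asymptotic continuity of the entanglement of formation --- for bipartite states on $\HH_1 \otimes \HH_2$ within trace distance $\delta$, a Fannes-type estimate of the form $|E_F(\rho) - E_F(\sigma)| \leq C_2\, \delta\, \ln(\dim \HH_1 \cdot \dim \HH_2) + \eta(\delta)$ --- applied with $\dim \HH_1 \cdot \dim \HH_2 = d^{n-p+2}$ and $E_F(\rho_1 \otimes \rho_{[p,n]}) = 0$, yields the desired bound $\eof{\A}{\A^{\otimes n-p+1}}{\rhoPN} \leq C\, \ln d\, (n-p+1)\, \epsilon(p)$, after absorbing the sublinear $\eta$-term into the overall constant for $p$ sufficiently large.

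The main obstacle I foresee is carrying out the trace-distance estimate cleanly: the factor $(n-p+1)\ln d$ must arise entirely from the continuity estimate, so the $1$-norm bound itself must be genuinely independent of the block size, with no hidden factor of $d^{n-p+1}$ lurking in the estimate. This requires careful exploitation of the trace-preserving structure of the $[p,n]$-side assembly map, in direct analogy with the ensemble computation leading to~(\ref{density:2-norm-final}). A parallel strategy --- mimicking the optimal-ensemble construction of Theorem~\ref{thm:conjecture} by building matching ensembles of $\rhoPN$ and $\rho_1 \otimes \rho_{[p,n]}$ via a common partial isometry $U$, and then applying Fannes on the $[p,n]$-side reduction of each ensemble state (which is pure and has entropy zero in the product ensemble) --- would produce the same $(n-p+1)\ln d$ factor from Fannes' inequality on a space of dimension $d^{n-p+1}$, and should give an alternative route if the direct trace-distance route proves awkward.
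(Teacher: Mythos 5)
Your proposal follows essentially the same route as the paper: establish that $\rhoPN$ is exponentially close in $p$ to the product state $\rho_1\otimes\rho_{[p,n]}$ in the $1$-norm, with all $p$-dependence carried by $\Ehat^{p-2}-\Ehat^{\infty}$ via~(\ref{Ehat:bound}), then use $\eof{\A}{\A^{\otimes n-p+1}}{\rho_1\otimes\rho_{[p,n]}}=0$ together with a dimension-dependent continuity estimate for the entanglement of formation. The one small divergence is in the continuity step: the paper routes through the Bures distance using $D(\rho,\sigma)\le 2\sqrt{T(\rho,\sigma)}$ and then applies Nielsen's continuity inequality for the $\EoF$, which costs a square root of the trace distance (still exponentially small in $p$), whereas you posit a Fannes-type asymptotic-continuity bound for $\EoF$ directly in trace distance --- this is a cosmetic difference in which continuity result is invoked, not a gap, and your alternative ``matching-ensembles'' strategy is unnecessary overhead.
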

Note that the above theorem implies exponential decay of the entanglement between spin $1$ and
spins $[p,n]$ as long as $n-p$ does not grow exponentially in $p$.
\begin{proof}
The main observation is that the trace distance between the states $\rhoPN$ and $\rho_1\otimes\rho_{[p,n]}$ vanishes exponentially fast with $p$. This is a consequence of the exponential rate of
convergence described in equation~(\ref{Ehat:bound}). To see this, note that $\|\rhoPN - \rho_1\otimes\rho_{[p,n]}\|_1 = \Tr [(\rhoPN - \rho_1\otimes\rho_{[p,n]})\, P]$, where $P$ is the projection onto the positive eigenvalues of $\rhoPN - \rho_1\otimes\rho_{[p,n]}$. But,
\benn
\Tr (\rhoPN\, P) = \sum_{i_1,i_p, i_{p+1},\dots,i_n=1}^d \Tr \left(\rhoN\, P_{i_1}\otimes \one_{[2,p-1]} \otimes P_{i_p}\otimes \cdots \otimes P_{i_n}\right), 
\eenn
and applying (\ref{def:rhoS}) we get,
\be\label{trace:rhoPN}
\Tr (\rhoPN\, P) = \sum_{i_1,i_p, i_{p+1},\dots,i_n=1}^d {\Tr}_{\B} \left(\rho\, \E(P_{i_1} \otimes \Ehat^{p-2}\left(\, \E(P_{i_p}\otimes \cdots \otimes \E(P_{i_n}\otimes \one_{\B})\right)\cdots\right),
\ee
where $P = \sum_{i_1,i_p, i_{p+1},\dots,i_n=1}^d P_{i_1} \otimes P_{i_p}\otimes \cdots \otimes P_{i_n}$ is a decomposition of $P$ into simple tensor products over some fixed basis for each tensor.
Furthermore, from the definition of $\Ehat^{\infty}$ and (\ref{def:rhoS}) we have
\begin{equation}\label{trace:tensor}
\Tr \rho_1\otimes\rho_{[p,n]}\, P = \sum_{i_1,i_p, i_{p+1},\dots,i_n=1}^d {\Tr}_{\B} \left(\rho\, \E(P_{i_1} \otimes \Ehat^{\infty}\left(\, \E(P_{i_p}\otimes \cdots \otimes \E(P_{i_n}\otimes \one_{\B})\right)\cdots\right).
\end{equation}
Combining (\ref{trace:rhoPN}) and (\ref{trace:tensor}) we get that $\|\rhoPN - \rho_1\otimes\rho_{[p,n]}\|_1$
\begin{eqnarray}
 &=& \sum_{i_1,i_p, i_{p+1},\dots,i_n=1}^d {\Tr}_{\B} \left(\rho\, \E(P_{i_1} \otimes (\Ehat^{p-2}-\Ehat^{\infty})\left(\, \E(P_{i_p}\otimes \cdots \otimes \E(P_{i_n}\otimes \one_{\B})\right)\cdots\right)\nonumber \\
&\le& \|\one_{\A} \otimes (\Ehat^{p-2}-\Ehat^{\infty})\|_{\infty}\, \left|\sum {\Tr}_{\B} \left(\rho\, \E(P_{i_1} \otimes \left(\, \E(P_{i_p}\otimes \cdots \otimes \E(P_{i_n}\otimes \one_{\B})\right)\cdots\right)\right|\nonumber\\
&\le& c \lambda^{p-2} \Tr\left(\rho_{[1, n-p+2]} P\right) \le c \lambda^{p-2}
\end{eqnarray}
Now, note that for two density matrices $\rho$ and $\sigma$ the (normalized) trace distance $T(\rho,\sigma) = \frac{\|\rho-\sigma\|_1}{2}$ is an upper bound on the {\it Bures distance} $D(\rho,\sigma) = 2\sqrt{1-F(\rho,\sigma)}$ \cite{Bures}, where $F(\rho,\sigma) = \Tr\sqrt{\rho^{1/2} \sigma \rho^{1/2}}$ is the {\it fidelity} measure. In particular, the following bound holds \cite[Ch. 9]{nielsen_chuang}:
\benn
D(\rho,\sigma)\le 2\sqrt{T(\rho,\sigma)}.
\eenn
Since $\eof{\A}{\A^{\otimes n-p+1}}{\rho_1\otimes\rho_{[p,n]}} = 0$,
a straightforward application of Nielsen's inequality for the continuity of the \EoF~\cite{Nielsen}
yields the desired result.
\end{proof}

\section{Discussion}
Having established such a strong connection between the states $\rhoN$ and $\rhoAB$, one can
apply various entanglement criteria on $\rhoAB$ to deduce entanglement properties of the spin chain.
To start with, we note that for qubit chains with $2$-dimensional memory algebra $\B$, the entanglement
of $\rhoN$ can be computed analytically (in the limit) by evaluating the concurrence~\cite{Wooters} of
$\rhoAB$. For higher dimensions one can apply the PPT criterion to 
$\rhoAB$ to detect distributed entanglement in the finitely correlated state. Specifically, the main theorem in~\cite{Horodecki} implies that there can be no PPT bound entanglement in $\rhoAB$ since
rank$(\rhoAB)=b \leq \max\{d,b\}$. Hence, if the partial transpose of $\rhoAB$ is positive, then $\rhoAB$
is separable. On the other hand, if the partial transpose of $\rhoAB$ is negative, then for $n$ large
enough $\rhoN$ becomes entangled. The amount of maximum entanglement in $\rhoN$ depends on
the amount of entanglement found in $\rhoAB$. From this point of view, it would be very interesting to
look at \FCS{} that maximize entanglement of $\rhoAB$. Moreover, understanding how entanglement of $\rhoAB$ varies with different CP maps $\mathbb{E}$ could lead to a better understanding of how phase
transitions occur when we vary the parameters in the underlying Hamiltonian of the system.

To conclude, we note that the conjecture of Benatti, \etal~\cite{BHN}, 
follows as a corollary of Theorem~\ref{thm:conjecture}. In particular, our result implies that a spin at site $1$ of the chain is entangled
with spins at sites $[2,n]$ (for $n$ large enough) if and only if $\rhoAB$ is entangled. Moreover,
the entanglement of $\rhoN$ approaches the entanglement of $\rhoAB$ exponentially fast.


    %
    %

    \newchapter{Entanglement in the Ground State of Gapped Hamiltonians}{Entanglement in the Ground State of Gapped Hamiltonians}{Entanglement in the Ground State of Gapped Hamiltonians}
    \label{sec:LabelForChapter3}

        
    
       \section{Introduction}
It is widely believed that the entanglement entropy of a region $A$ in the ground state
of a gapped quantum spin Hamiltonian with finite range interactions does not grow
faster than the area of the boundary of $A$. Although this property in its general 
formulation is still a conjecture, this is called the {\em Area Law} for the entropy.
The Area Law is of interest not only for theoretical reasons, but also because it has 
practical implications for the computational complexity of calculating the ground 
state with a desired level of accuracy \cite{vc}. The intuition behind the Area 
Law is simple. By the Exponential Clustering Theorem, a non-vanishing spectral gap 
implies a finite correlation length \cite{ns,hk} and this puts an
exponentially decaying bound on the entanglement of two spins as a function of the distance.
From this, one may guess that only spins near the boundary contribute significantly to
the total entanglement with the exterior of the region and hence, an area law should
hold. The relationship between the correlations and entanglement of a region with its
exterior, however, is not sufficiently well understood to lead to a proof based directly 
on this intuition. If one assumes a decay property of the mutual information instead of
correlations, one can indeed prove an area law \cite{wolf2008}. The Area Law itself
has only been proven in one dimension by Hastings \cite{hastings_law}. In more than one dimension,
for special systems with valence bond (i.e. matrix product) ground states, the area law 
is easy to derive, but there is no general result. Here, we review in detail the one dimensional
result by Hastings and provide a higher dimensional generalization of his ground state approximation.
Although the Area Law bound we discuss below applies only to $1-$dimensional systems, 
we describe the result in a more general setup in case this leads to future insight on how
to treat the Area Law question in higher dimensions.
\section{Setup and Main Results}
We will consider a finite system of spins located at sites in $V$, a finite
subset of $\Ir^d$. At each $x\in V$, we have a finite-dimensional
Hilbert space of dimension $n_x$ and assume that $n_x \le N, \, \forall x\in V$, 
for some constant $N$. Let $\A_X$ be the algebra of
observables associated with $X\subset V$. We consider the dynamics
$\tau_t$ generated by the Hamiltonian
$$
H_V=\sum_{X\subset V} \Phi(X).
$$  The main assumptions are as follows:
The interactions $\Phi(X)$ are uniformly bounded and finite range,
and, for convenience, we will take pair interactions with range $1$. So, for all
$X\subset V$, $\Phi(X) = \Phi(X)^*\in \A_X$, $\| \Phi_X \| \leq J$,
for some constant $J>0$, and $\Phi(X) = 0$ if $\mbox{diam} (X) >1$. Here, for
$X\subset V$,
$$
\mbox{diam} (X) = \max \{ d(x,y) \mid x,y\in X\}.
$$

We will also assume that the Hamiltonian $H_V$ has a unique,
normalized ground state, which we will denote by $\ket{\Psi_0}$, and a
spectral gap $\gamma >0$ to the first excited state. 

For a set $Y \subset V$, the boundary of $Y$, denoted by $\partial
Y$, is
$$
\partial Y = \{ x\in Y \mid \exists y\in V\setminus Y,  
\mbox{with } d(x,y) \leq 1\}.
$$
Moreover, for $\ell \geq 1$, we define the following sets:
\begin{equation}
I_Y = I_Y(\ell) = \left\{ x\in Y\mid \forall y\in \partial Y,
d(x,y) > \ell \right\}.
\end{equation}
The set $I_Y$ corresponds to the $\ell$-interior of $Y$ and it will be
empty if $\ell > \mbox{diam}(Y)$.
\begin{equation}
B_Y = B_Y(\ell) = \left\{ x\in V\mid \exists y\in
\partial Y, d(x,y) \le \ell \right\}.
\end{equation}
The set $B_Y$ corresponds to the $\ell$-boundary of $Y$.
\begin{equation}
E_Y = E_Y(\ell) = \left\{ x\in V\setminus Y\mid \forall y\in
\partial X, d(x,y) > \ell \right\}.
\end{equation}
The set $E_Y$ corresponds to the $\ell$-exterior of $Y$. 
Note that $I_Y,B_Y$ and $E_Y$ are disjoint and moreover, $V=I_Y\cup B_Y\cup E_Y$, as can
be seen from Figure~\ref{fig:area}.

\begin{figure}[htp]
\centering
\includegraphics[width=108mm]{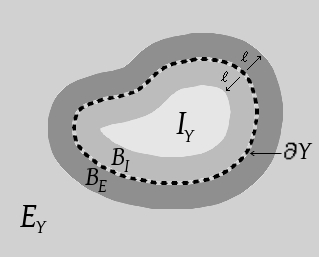}
\caption{The different regions around the boundary of $A$.}\label{fig:area}
\end{figure}

Finally, define $B_I = B_I(\ell) = Y\cap B_Y(\ell)$ and $B_E = B_E(\ell) = (V\setminus Y)\cap B_Y(\ell)$ 
and set $D_I = \Pi_{x\in B_I} \, n_x$ and $D_E = \Pi_{x\in B_E} \, n_x$. 
We will assume from now on that the volume of the interior of the $\ell$-boundary satisfies 
$$|B_I| \le r \, \ell \, |\partial Y|,$$ for some constant $r \ge 1$. For convex $Y$, this
inequality holds with $r=1$. The regions we will be working with will be convex.

We denote by $P_0$ the orthogonal projection onto $\ket{\Psi_0}$ and
by $\rho_A$ the density matrix describing the ground state restricted
to the region $A$: $\rho_A = \Tr_{V\setminus A} P_0$.

We are now ready to state the main result of \cite{hastings_law}.
\begin{theorem}[Area Law]\label{thm:area_law}
Let $H_V$ denote a $1$-dimensional Hamiltonian on an interval $[1,|V|]$, with the properties described above. For any $1\le M \le |V|$, let $A = [1,M]$. Then, the following bound, independent of $M$ applies to the entropy of the region $A$:
\be\label{eq:area_law}
S(\rho_A) \leq C\,\xi \ln C(\gamma,d,J) \ln N \, 2^{2\xi \ln N}.
\ee
where $C$ is of order unity and the constants $\xi$ and $C(\gamma,d,J)$ are defined in Theorem \ref{thm:gs_approx}.
\end{theorem}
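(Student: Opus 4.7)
The plan is to leverage the higher-dimensional ground state approximation from Theorem \ref{thm:gs_approx}, which (invoking Lieb--Robinson bounds and the spectral gap via the Exponential Clustering Theorem) produces, for each length scale $\ell$, an operator-based approximation to $P_0$ that is essentially supported in a window of width $O(\ell)$ around the single cut point between $A$ and $V\setminus A$, with error decaying like $C(\gamma,d,J)\,e^{-\ell/\xi}$. Because the rest of the system is almost untouched by the approximation, the resulting approximate ground state $\ket{\tilde\Psi_\ell}$ has a Schmidt decomposition across the cut whose rank is bounded by $D(\ell) \lesssim N^{O(\ell)}$. The corresponding reduced density matrix $\tilde\rho_A^{(\ell)}$ therefore has rank at most $D(\ell)$, and by Theorem \ref{thm:gs_approx} it satisfies $\|\rho_A - \tilde\rho_A^{(\ell)}\|_1 \le \epsilon(\ell)$ with $\epsilon(\ell)$ exponentially small in $\ell/\xi$.

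The next step is to convert this rank-versus-error trade-off into an entropy bound. The naive route, applying Fannes' inequality to $(\rho_A,\tilde\rho_A^{(\ell)})$ directly, is fatal: the error term picks up $\log \dim(\mathcal{H}_A)$, which is linear in $|A|$ and destroys any hope of an area law. To get around this I would, following Hastings, use a telescoping/iterative argument over a sequence of length scales $\ell_1 < \ell_2 < \ldots$, bounding
\[
S(\rho_A) \;\le\; S(\tilde\rho_A^{(\ell_1)}) + \sum_{k\ge 1} \bigl|\, S(\tilde\rho_A^{(\ell_{k+1})}) - S(\tilde\rho_A^{(\ell_k)})\,\bigr|,
\]
and applying the rank-restricted continuity estimate $|S(\sigma)-S(\sigma')| \le \|\sigma-\sigma'\|_1 \log D + \eta(\|\sigma-\sigma'\|_1)$ at each step, where $D$ is the maximum rank of the two approximations. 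The leading term $S(\tilde\rho_A^{(\ell_1)}) \le \log D(\ell_1)$ gives the dominant $N^{O(\xi)}$ contribution, while the tail is summable provided the length scales grow fast enough that $\epsilon(\ell_k)\,\log D(\ell_{k+1})$ decays geometrically.

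Concretely, I would choose $\ell_k$ to grow like $k\xi$ (or geometrically) so that at each scale the rank $D(\ell_k) = N^{O(\ell_k)}$ and the error $\epsilon(\ell_k) = e^{-\ell_k/\xi}$ balance to produce a summable series whose sum is dominated by the first term. Optimizing the initial scale $\ell_1 \sim \xi \ln N$ (so that $D(\ell_1) \sim N^{2\xi}$) gives the advertised $2^{2\xi \ln N} = N^{2\xi}$ factor in the Schmidt dimension, and the logarithmic prefactors $\xi \ln C(\gamma,d,J)\,\ln N$ arise from $\log D(\ell_1)$ together with the number of telescoping steps required to drive the error below $1/D(\ell_1)$. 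The one-dimensional geometry enters critically here: the boundary is a single point, so $|\partial A|=1$ and the constructed approximation only cuts a bounded number of bonds, which is exactly why the argument yields a constant (area-law) bound on $S(\rho_A)$.

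The main obstacle is the tension described above: making the rank bound $D(\ell)$ small forces $\ell$ to be small and hence the error $\epsilon(\ell)$ to be large, while Fannes' inequality applied naively then yields a vacuous bound through $\log\dim(\mathcal H_A)$. The whole subtlety of the proof lies in organizing the telescoping so that only $\log D$, never $\log \dim(\mathcal H_A)$, ever appears in the continuity estimate — this is what allows a local spectral gap to force global, volume-independent entanglement entropy. Extracting the precise form of Theorem \ref{thm:gs_approx}'s constants $\xi$ and $C(\gamma,d,J)$ from the Lieb--Robinson/filtering construction, and verifying that the Schmidt rank of the approximation genuinely scales as $N^{O(\ell)}$ rather than something worse, are the technical heart of the argument.
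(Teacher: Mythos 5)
Your proposal diverges significantly from the paper's argument and, as written, has a gap that would make the Schmidt-rank bound fail. The central problem is your claim that the approximation produces a pure state $\ket{\tilde\Psi_\ell}$ with Schmidt rank $\lesssim N^{O(\ell)}$ across the cut. Theorem~\ref{thm:gs_approx} gives an operator $\PB\PI\PE$ close to $P_0$ in operator norm, but $\PB\PI\PE\ket{\Psi_0}$ does \emph{not} have low Schmidt rank: $\PI\PE$ are product projectors and leave the (a priori large) Schmidt rank of $\ket{\Psi_0}$ untouched, and $\PB$ can only increase it. To get a low-rank vector you must start from a product state, and then the quality of the approximation is controlled by the overlap of that product state with $\ket{\Psi_0}$, which has no reason to be bounded away from zero. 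The paper handles this by taking $\rho_Y\otimes\rho_{V\setminus Y}$ (the tensor product of the ground-state marginals) as the seed and showing $\rho(\ell)=\PB\PI\PE\,(\rho_Y\otimes\rho_{V\setminus Y})\,\PI\PE\PB^\dagger$ is a \emph{mixture} of pure states each with Schmidt rank $\le N^{2\ell|\partial Y|}$. Note this does not give a rank bound on ${\Tr}_{V\setminus A}\,\rho(\ell)$: a convex combination of bounded-rank density matrices need not have bounded rank, so the quantity $\tilde\rho_A^{(\ell)}$ your telescoping uses is not the right object to control.

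Even granting the rank bound, your proposal omits the actual mechanism by which the paper converts it into an entropy bound. The paper does not telescope in $\ell$. It sets $m^* = S(\rho_A)/(4\ln N\,|\partial A|)$ (a self-referential choice), reduces $S(\rho_A)$ to $S(\rho_{Y(m)})$ by subadditivity, introduces the overlap $P=\Tr(P_0\,\rho_Y\otimes\rho_{V\setminus Y})$, and then splits into two cases. When $P$ is large (Case~I) it picks the pure-state component of $\rho(\ell)$ with the largest overlap with $\ket{\Psi_0}$, extracts the spectral tail bound $\sum_{\alpha>N^{2\ell|\partial Y|}}\sigma_0(\alpha)\le \exp[-2(\ell-\ell')/\xi]$ via a Schur-convexity argument, and feeds it into Lemma~\ref{lem:ent_bound}. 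When $P$ is small (Case~II) the entire framework changes: one applies monotonicity of relative entropy with the $2$-positive measurement map built from $|\PB'|$ to get a lower bound on the mutual information $S(\rho_{B_I})+S(\rho_{B_E})-S(\rho_B)$, and then runs a doubling recursion in $\ell$ on $S_\ell$. It is Case~II that produces the $2^{2\xi\ln N\,|\partial A|}$ factor appearing in the theorem, so any proof that skips the dichotomy cannot recover the stated bound. Your telescoping approach would, if the rank bound held, produce a polynomial prefactor in $\xi\ln N$, not this exponential factor, which is a signal that it is not the argument the paper is making.

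A smaller issue: your rank-restricted Fannes bound $|S(\sigma)-S(\sigma')|\le \|\sigma-\sigma'\|_1\log D + \eta(\cdot)$ is only valid if \emph{both} states have rank $\le D$; applying it with $\sigma=\rho_A$ at any finite step reintroduces $\log\dim\mathcal H_A$, which is exactly the failure mode you correctly identify and try to avoid. The telescoping only hides the problem; it does not solve it, because the ranks $D(\ell_k)$ must eventually exceed $\dim\mathcal H_A$ for the series to converge to $S(\rho_A)$, and the argument you sketch does not explain how the partial sums remain controlled once the rank constraint becomes vacuous.
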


We will provide the complete proof of Theorem \ref{thm:area_law} assuming Theorem 
\ref{thm:gs_approx} below, which is a generalization to higher dimensions of a result of 
Hastings \cite{hastings_law}. The proof of Theorem \ref{thm:gs_approx} proceeds along the 
same lines as in the one-dimensional case \cite{hastings_law, qad,loc},
but it is a bit long and technical and therefore we present its proof in section \ref{sec:LabelForChapter3:Section2}. 

\begin{theorem}[Ground state approximation]\label{thm:gs_approx}
There exists $\xi > 0$, such that for any sufficiently large $\ell \ge c_0\, d^2 \xi^2$ and $Y \subset V$,
there exist two orthogonal projections $\PI \in \A_Y$, 
$\PE \in \A_{V\setminus Y}$ and an operator 
$\PB \in \A_{B_Y(\ell)}$ with $\|P_B\| \le 1$, such that
\begin{equation} \label{eq:main2}
\| \PB\PI\PE - P_0 \| \leq C(\gamma,d,J) | \partial Y |^2 e^{- \ell/ \xi}
\end{equation}
where $P_0$ is the projection onto the (unique) ground state, $c_0$ is of order unity and $C(\gamma,d,J)$, $\xi$ are explicit in terms of $J$, $\gamma$ and $d$, the dimensionality of $V$.
\end{theorem}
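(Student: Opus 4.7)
The plan is to adapt Hastings' quasi-local approximation \cite{hastings_law}: take $\PI$ and $\PE$ to be projections onto suitable low-energy subspaces of the Hamiltonian restricted to $Y$ and $V\setminus Y$, and construct $\PB$ as a boundary-localized operator that corrects for the error arising from the boundary-crossing interactions.

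First I would decompose $H_V = H_Y + H_{V\setminus Y} + H_\partial$, where $H_Y$ and $H_{V\setminus Y}$ collect the interactions lying strictly inside $Y$ and $V\setminus Y$, and $H_\partial$ is the sum of the $O(|\partial Y|)$ boundary-crossing interactions (each of norm at most $J$). Let $\PI$ and $\PE$ be spectral projections of $H_Y$ and $H_{V\setminus Y}$ below energy thresholds chosen so that the product $\PI\PE$ has nearly full overlap with the true ground state; the needed estimate follows from $\|H_\partial\|\le J\,|\partial Y|$ combined with the spectral gap of $H_V$, via a min-max / Schur complement argument. This yields projections in $\A_Y$ and $\A_{V\setminus Y}$ satisfying $\|(\idty - \PI\PE) P_0\|$ exponentially small in how high the threshold is taken above the local ground energy (but still below $\gamma$).

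The boundary operator $\PB$ is then built from a filter: choose $f\in L^1(\mathbb{R})$ with Fourier transform supported in $(-\gamma,\gamma)$, $\hat f(0)=1$, and sub-exponential time decay $|f(t)|\le e^{-c|t|/\log^2|t|}$, as in Hastings \cite{hastings_law}. The key identity is $P_0 = \int f(t)\,e^{iH_V t}\, P_0\, e^{-iH_V t}\,dt$, since the gap kills every excited-state contribution to the Fourier transform. Replacing $P_0$ by $\PI\PE$ on the right-hand side (using that $\PI\PE \approx P_0$ up to a boundary-localized error) reduces the problem to analyzing the Heisenberg evolution of $\PI\PE$. Only $H_\partial$ terms fail to commute with $\PI\PE$, so the time-evolved operator differs from $\PI\PE$ only through a commutator expansion over the $H_\partial$ terms, each generating an operator localized within $v|t|$ of $\partial Y$ up to Lieb-Robinson tails. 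Truncating the $t$-integral at $|t|\lesssim \ell/v$ and the commutator support to $B_Y(\ell)$ leaves a boundary-supported operator, which after rescaling I would call $\PB$. Two factors of $|\partial Y|$ arise, one from each side of the conjugation in the commutator expansion, accounting for the $|\partial Y|^2$ prefactor in \eqref{eq:main2}.

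The main obstacle is balancing the filter's frequency support (which must lie below $\gamma$) against its time decay, which must beat the Lieb-Robinson growth $e^{v|t|}$ after integrating out to $|t|\sim\ell$. Because a function with compactly supported Fourier transform cannot decay exponentially in time, one is forced to use the sub-exponential filters of Hastings, which is what yields the correlation length $\xi\sim v/\gamma$ and the threshold $\ell\ge c_0 d^2\xi^2$ rather than the naive $\ell\gtrsim \xi$. Generalizing from one to $d$ dimensions requires careful tracking of $d$-dependent constants in the Lieb-Robinson bound and of the $\ell^d$-like geometric factors that appear when enumerating operators supported in $B_Y(\ell)$; keeping these in check so that only $|\partial Y|^2$ (rather than a power of $\ell$) multiplies the exponential is the principal technical effort of the proof to come.
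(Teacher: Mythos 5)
The paper's proof is organized quite differently from what you sketch, and your plan has a gap that I do not think can be repaired along the lines you propose. The paper decomposes $H_V=H_I+H_B+H_E$, where the middle term collects \emph{all} interactions lying inside a thick $\ell$-boundary region $B_Y(\ell)$ and the other two collect interactions touching the $\ell$-interior and $\ell$-exterior; your decomposition $H_V = H_Y + H_{V\setminus Y} + H_\partial$ has a boundary piece that is one bond thick, which changes the entire structure of the localization argument. More importantly, the projections $\PI$ and $\PE$ in the paper are \emph{not} spectral projections of the raw local Hamiltonians: they are spectral projections of the operators $M_I(\alpha), M_E(\alpha)$ obtained by first Gaussian-smearing each piece, $(H_X)_\alpha = \sqrt{\alpha/\pi}\int \tau_t(H_X)e^{-\alpha t^2}\,dt$, and then replacing the full dynamics $\tau_t$ by the dynamics of the local Hamiltonian on $A$ (resp.\ $V\setminus A$, resp.\ $B(A;2\ell)$), controlling the replacement by the Lieb--Robinson bound. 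The whole point of this two-step construction is that $\|(H_X)_\alpha\Psi_0\|\le \gamma^{-1}\|[H_V,H_X]\|\,e^{-\gamma^2/4\alpha}$ (Proposition \ref{prop:bd}), which is exponentially small in $\ell$ after choosing $\alpha\sim\gamma^2\xi'/\ell$; then $\|M_X\Psi_0\|$ inherits the same smallness, and the Markov-type bound $\|(1-\PA)\Psi_0\|\le c^{-1}\|M_I\Psi_0\|$ gives exponential decay. Note also that the paper does not need Hastings' compactly-supported-Fourier-transform sub-exponential filters at all; the Gaussian, though not band-limited, filters out the excited-state contribution up to $e^{-\gamma^2/4\alpha}$ and this is sufficient.

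Your plan skips the smearing and tries to take spectral projections of the raw $H_Y$ and $H_{V\setminus Y}$ below a threshold sitting ``above the local ground energy (but still below $\gamma$),'' claiming exponential control via ``a min-max / Schur complement argument.'' This does not work. The gap $\gamma$ of $H_V$ does not produce a gap of $H_Y$, and the ground state $\Psi_0$ of $H_V$ can have energy fluctuations with respect to $H_Y$ on the order of $\|[H_V,H_Y]\|/\gamma = O(|\partial Y|J^2/\gamma)$, which is bounded but not small. A Chebyshev bound on $\|(1-\PI)\Psi_0\|$ then gives only a polynomial decay in the threshold, not an exponential one, and it certainly does not improve as $\ell\to\infty$ since $\PI$ in your scheme does not depend on $\ell$ in any way that helps. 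The exponentially small error scale $e^{-\ell/\xi}$ in the theorem has to come from somewhere, and in the paper's proof it comes precisely from the Gaussian filter $e^{-\gamma^2/4\alpha}$ once $\alpha$ is tied to $\ell$. Without some analogue of that step your spectral projections cannot satisfy the stated bound, and the remainder of your outline (building $\PB$ from a filtered evolution of $\PI\PE$ and truncating by Lieb--Robinson) inherits the same deficiency. The paper's concrete definitions $\hat P_\alpha = \sqrt{\alpha/\pi}\int e^{i(M_I+M_B+M_E)t}e^{-\alpha t^2}\,dt$ and $P_B(\alpha)= \sqrt{\alpha/\pi}\int e^{i(M_I+M_B+M_E)t}e^{-i(M_I+M_E)t}e^{-\alpha t^2}\,dt$, together with the partial-trace localization of $P_B(\alpha)$ onto $B(A;3\ell)$, are the concrete realization of what you describe only vaguely as a ``commutator expansion,'' and they all depend on having the localized smeared operators $M_X$ in hand first.
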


The operator $\PB$ in (\ref{eq:main2}) is responsible for all correlations in
the ground state approximation between the region $Y$ and its exterior. Its
support is concentrated along the boundary of $Y$. This is reminiscent of the 
structure of matrix product states \cite{AKLT,FNW2,VC}. The finite extent
of correlations and entanglement across any boundary is essentially a consequence
of the non-vanishing spectral gap and the existence of a finite Lieb-Robinson
velocity \cite{lr,ns,hk,loc_est}. The problem of calculating such an approximation
of the ground state is a related but separate question we intend to turn to at
a later occasion. A numerical algorithm for this problem is discussed in 
\cite{solvgap}. It is known that in general this is an NP-hard problem \cite{npmp}.

\section{Proof of Main Theorem}
We present now the proof of Theorem~\ref{thm:area_law}.

\begin{proof}
First, we introduce the length $m^* = \frac{S(\rho_A)}{4\ln N |\partial A|}$ and then set $Y \equiv Y(m) \equiv I_A(m), \, m \le m^*$. The first step is to use the well known subadditivity of the Von Neumann entropy, which follows from the non-negativity of quantum relative entropy $S(\rho\|\sigma) = \Tr (\rho \ln \rho - \rho \ln \sigma)$, to get the following bound for $S(\rho_A)$:
\begin{equation}
S(\rho_A \| \rho_Y \otimes \rho_{A\setminus Y}) = S(\rho_Y) + S(\rho_{A\setminus Y}) - S(\rho_A) \ge 0
\end{equation}
The above bound, combined with the fact that 
\benn
S(\rho_{A\setminus Y}) \le |A\setminus Y| \ln N \le m^* |\partial A| \ln N = S(\rho_A)/4,
\eenn
implies that
\begin{equation}\label{entropy:bound}
S(\rho_A) \le \frac{4}{3}\, S(\rho_{Y(m)}), \qquad m \le m^*
\end{equation}
We can now focus on bounding the entropy of the region $Y(m)$. 
We will do this by looking at the rate of
decay of the Schmidt coefficients in the Schmidt decomposition of $\ket{\Psi_0}$ along the boundary of $Y(m)$. Since we will be using the Schmidt decomposition of the ground state $\ket{\Psi_0}$, let us introduce it here as $$\ket{\Psi_0} = \sum_{\alpha}
\sqrt{\sigma_0(\alpha)} \ket{\Psi_{{Y},0}(\alpha)}\otimes \ket{\Psi_{{V\setminus Y},0}(\alpha)},$$ where $\sum_{\alpha} \sigma_0(\alpha) = 1$ and $\{\ket{\Psi_{Y,0}(\alpha)}\}$, $\{\ket{\Psi_{V\setminus Y,0}(\alpha)}\}$ are orthonormal sets supported on $Y$ and $V\setminus Y$, respectively. We order the Schmidt coefficients of $\ket{\Psi_0}$ in decreasing order such that if $\alpha<\beta$ then $\sigma_0(\alpha) \geq \sigma_0(\beta)$. Moreover, 
$$\rho_Y = \sum_\alpha \sigma_0(\alpha) \pure{\Psi_{Y,0}(\alpha)}$$ and 
$$\rho_{V\setminus Y} = \sum_\beta \sigma_0(\beta) \pure{\Psi_{V\setminus Y,0}(\beta)}.$$
Having fixed $m^*$ and $Y(m)$, we are interested in the overlap of the density matrix $\rho_Y \otimes \rho_{V\setminus Y}$ with $P_0$, since we will be treating the cases when the overlap is small and when it is large, separately. We define the overlap to be 
\be\label{overlap}
P \equiv P(m) \equiv \Tr (P_0\, \rho_Y \otimes \rho_{V\setminus Y}).
\ee
We use the approximation operator $\PB\PI\PE$ of Theorem \ref{thm:gs_approx} in order to relate the overlap $P$ with the approximation error $\epsilon(\ell)$, over
which we have some control. Thus, we choose the following approximation to the ground state:
\begin{equation}
\rho(\ell) \equiv \PB\PI\PE \rho_Y \otimes \rho_{V\setminus Y} \PI\PE\PB^{\dagger} 
\end{equation} 
which, as we will see shortly, has Schmidt rank bounded by an exponential in $m$ and $|\partial Y|$, but independent of the volume of $Y$. More concretely, we will show that $\rho(\ell)$ has a decomposition into pure states, each with Schmidt decomposition along the boundary $\partial Y$ with Schmidt rank at most $N^{2\,\ell \, |\partial Y|}$. To see this, first note that pure states in the decomposition of $\PI\PE \rho_Y \otimes \rho_{V\setminus Y}\PI\PE$ are product states along $\partial Y$. Using the spectral decompositions of $\rho_Y$ and $\rho_{V\setminus Y}$ introduced earlier, we may focus our attention to product states of the form 
\be
\ket{\Psi_Y(\gamma,\delta)} = \PI \ket{\Psi_{Y,0}(\gamma)} \otimes \PE \ket{\Psi_{V\setminus Y,0}(\delta)}.
\ee
We study now how the action of $\PB$ on $\ket{\Psi_Y(\gamma,\delta)}$ affects its Schmidt rank.

Since $\PB$ is an operator acting non-trivially only on sites in a subset of $B_Y(\ell)$, we have the following general decomposition:
\begin{equation}\label{decomp}
\PB = \sum_{\alpha,\beta=1}^{D_I} \one_{I_Y} \otimes E(\alpha,\beta) \otimes G(\alpha,\beta) \otimes \one_{E_Y},
\end{equation}
where the $D_E \times D_E$ matrices $G(\alpha,\beta)$ act on sites in $B_E$ and the matrix units $E(\alpha,\beta)$, which act non-trivially on $B_I$, form an orthonormal basis for $D_I \times D_I$ matrices. Moreover, $D_I \le N^{|B_I|} \le N^{\ell |\partial Y|}$.

To bound the Schmidt rank of $\PB \ket{\Psi_Y(\gamma,\delta)}$ we trace over sites in $Y$ and study the rank of the operator
\begin{equation}\label{Schmidt_bound}
{\Tr}_{Y} \left(\PB \pure{\Psi_Y(\gamma,\delta)} \PB^{\dagger}\right) =\sum_{\alpha=1}^{D_I}\left[ \sum_{\beta,\beta'=1}^{D_I}
c(\beta,\beta')\, \ketbra{F_{\alpha}(\beta)}{F_{\alpha}(\beta')}\right],
\end{equation}
with $\ket{F_{\alpha}(\beta)}  = G(\alpha,\beta)\otimes \one_{E_Y} \PE \ket{\Psi_{V\setminus Y,0}(\delta)}$ and 
$c(\beta,\beta') = \bra{\Psi_{Y,0}(\gamma)} \PI \one_{I_Y} \otimes E(\beta',\beta) \PI \ket{\Psi_{Y,0}(\gamma)}$, coming from (\ref{decomp}) and the definition of $\ket{\Psi_Y(\gamma,\delta)}$.

Clearly, as a sum of $D_I$ matrices each with rank at most $D_I$, the above operator has rank bounded by $D_I^2$ and, hence, the Schmidt rank of $\PB\ket{\Psi_Y(\gamma,\delta)}$ is bounded above by $N^{2 \ell \,|\partial Y|}$.

Now that we have a good grasp on the entanglement overhead produced by the approximation operator $\PB\PI\PE$, we return to the question of how the overlap $P$ relates to the approximation error $\epsilon(\ell)$.
Remembering from (\ref{overlap}) that $P = \Tr(P_0 \rho_Y\otimes \rho_{V\setminus Y})$, we have the following overlap estimate between $\rho(\ell)$ and $P_0$:
\begin{eqnarray*}
\sqrt{\Tr(P_0 \rho(\ell))} &=& \|P_0 \PB\PI\PE \sqrt{\rho_Y} \otimes \sqrt{\rho_{V\setminus Y}}\|_2\\
 &\ge&
\|P_0 \sqrt{\rho_Y} \otimes \sqrt{\rho_{V\setminus Y}}\|_2 - \|P_0(P_0 - \PB\PI\PE)\|\| \sqrt{\rho_Y} \otimes \sqrt{\rho_{V\setminus Y}}\|_2 \\
&\ge& \sqrt{P}-\epsilon(\ell),
\end{eqnarray*}
where $\|X\|_2 = \sqrt{\Tr XX^\dagger}$ and $\|\cdot\|$ is the sup-norm.
Hence, for all $\ell$ such that $\epsilon^2(\ell) \le P$, we have 
\begin{equation}
\Tr(P_0 \rho(\ell)) \ge \left(\sqrt{P}-\epsilon(\ell)\right)^2.\label{hg}
\end{equation}
Moreover, we have that
\begin{eqnarray}
\Tr [(1-P_0) \rho(\ell)] &=& \| (1-P_0)(\PB\PI\PE - P_0)\sqrt{\rho_Y} \otimes \sqrt{\rho_{V\setminus Y}}\|_2^2\nonumber\\
&\le&\|1-P_0 \|^2 \|\PB\PI\PE - P_0 \|^2 \le \epsilon^2(\ell)
\end{eqnarray}
Finally, upon normalization the overlap becomes:
\begin{eqnarray}
\frac{\Tr(P_0 \rho(\ell))}{\Tr (\rho(\ell))} = 1- \frac{1}{1+\frac{\Tr(P_0 \rho(\ell))}{\Tr[(1-P_0)\rho(\ell)]}}
\ge 1- \frac{1}{1+\frac{\left(\sqrt{P} - \epsilon(\ell)\right)^2}{\epsilon^2(\ell)}}
\ge 1-2\frac{\epsilon^2(\ell)}{P}\label{P:large}\qquad
\end{eqnarray}
We will now use (\ref{P:large}) for the case when $P \ge 2\epsilon^2(m^*)$.
The case $P \le 2\epsilon^2(m^*)$ will be treated later on.
$$\mbox{{\bf Case I}:  } P \ge 2\epsilon^2(m^*).$$
In this case, the next step is to relate the above overlap to the Schmidt coefficients of $\ket{\Psi_0}$.
More specifically, we will now show that for $\ell \ge 0$:
\be
\label{constraint1}
\sum_{\alpha \le N^{2 \ell |\partial Y|}} \sigma_0(\alpha) \geq \frac{\Tr(P_0 \rho(\ell))}{\Tr (\rho(\ell))}.
\ee
To prove this, first note we already showed that $\rho(\ell)$ is a convex combination of pure states $\PB\ket{\Psi_Y(\gamma,\delta)}$, each with Schmidt rank bounded above by $N^{2\ell |\partial Y|}$.  Let $\ket{\Psi_0(\ell)}$ be the (not necessarily unique) pure state in the aforementioned decomposition of $\rho(\ell)$ satisfying 
\be
\Tr\left(P_0 \pure{\Psi_0(\ell)}\right) \ge \frac{\Tr(P_0 \rho(\ell))}{\Tr (\rho(\ell))}
\ee
and introduce its Schmidt decomposition as 
\be
\sum_{\beta = 1}^s \sqrt{\tau_{\beta}} \ket{\Phi_{Y}(\beta)} \otimes \ket{\Phi_{V\setminus Y}(\beta)}, \mbox{ with } \sum_{\beta=1}^s \tau_{\beta} = 1 \mbox{ and } s \le N^{2\ell|\partial Y|},
\ee 
as we have already demonstrated. For notational convenience, let 
$M_{Y}(\alpha,\beta) = \left|\braket{\Phi_{Y}(\beta)}{\Psi_{{Y},0}(\alpha)}\right|$ and $M_{V\setminus Y}(\alpha,\beta) = \left|\braket{\Phi_{V\setminus Y}(\beta)}{\Psi_{{V\setminus Y},0}(\alpha)}\right|.$
Note that since each of $\{\ket{\Phi_{Y}(\beta)}\}$, $\{\ket{\Phi_{V\setminus Y}(\beta)}\}$, $\{\ket{\Psi_{Y,0}(\alpha)}\}$ and $\{\ket{\Psi_{V\setminus Y,0}(\alpha)}\}$ is an orthonormal set, Bessel's inequality implies that $$\sum_{\beta=1}^s M_Y(\alpha,\beta)^2 \le 1 \mbox{ and } \, \sum_{\alpha} M_{V\setminus Y}(\alpha,\beta)^2 \le 1,$$ as well as 
$$\sum_{\alpha} M_Y(\alpha,\beta)^2 \le 1 \implies \sum_{\alpha, \beta} M_Y(\alpha,\beta)^2 \le s.$$

Then, an application of the triangle inequality followed by Cauchy-Schwarz gives the following upper bound for $\Tr(P_0 \pure{\Psi_0(\ell)}) = |\braket{\Psi_0(\ell)}{\Psi_{0}}|^2:$
\begin{eqnarray*}
|\braket{\Psi_0(\ell)}{\Psi_{0}}|^2 &\le&\left(\sum_{\alpha,\beta} \sqrt{\sigma_0(\alpha)}\sqrt{\tau(\beta)}\, M_{Y}(\alpha,\beta) M_{V\setminus Y}(\alpha,\beta) \right)^2 \\
&\le&\left(\sum_{\alpha,\beta}\sigma_0(\alpha) M_{Y}(\alpha,\beta)^2 \right) \, \left(\sum_{\alpha,\beta} \tau(\beta) M_{V\setminus Y}(\alpha,\beta)^2 \right) \\
&\le&\sum_{\alpha \le s} \sigma_0(\alpha).
\end{eqnarray*}
The last inequality follows from Schur convexity of $f([p(\alpha)]) = \sum_{\alpha} \sigma_0(\alpha) \, p(\alpha)$ and the observation that the vector $[1,1,\ldots,1,0,\ldots,0]$, with at most $s$ ones, majorizes $\left[\sum_{\beta} M_Y(1,\beta)^2,\sum_{\beta} M_Y(2,\beta)^2,\ldots\right]$. 

To see that $f([p(\alpha)])$ is Schur convex, note that if we set
$S_p(\alpha) = \sum_{k=1}^{\alpha} p(k)$ and $\Delta(\alpha,\beta) = \sigma_0(\alpha)-\sigma_0(\beta)$ then the condition that $[p(\alpha)]$ majorizes $[q(\alpha)]$ ($p \succeq q$) becomes $p \succeq q \Leftrightarrow S_p(\alpha) \ge S_q(\alpha),\, \forall \alpha$. Moreover, 
$$f(\{p(\alpha)\})-f(\{q(\alpha)\}) = \sum_{\alpha} \Delta(\alpha,\alpha+1) \left(S_p(\alpha)-S_q(\alpha)\right),$$
which is non-negative since we have arranged the $\sigma_0(\alpha)$ in decreasing order so that
$\Delta(\alpha,\alpha+1) \ge 0, \, \forall \alpha$.

Now that we have demonstrated (\ref{constraint1}), we may use it in combination with (\ref{P:large}) to
show that $S(\rho_Y)$ satisfies an area law in Case I.
We begin by setting $\ell'$ to be the smallest integer $\ell$ such that $2 \epsilon^2(\ell)/P \leq 1$.
Using (\ref{P:large}) and (\ref{constraint1}), we have for $\ell>\ell'$
\be
\label{constraint2}
\sum_{\alpha\geq N^{2\ell \, |\partial Y|}+1} \sigma_0(\alpha)\leq
\exp[-2(\ell-\ell')/\xi].
\ee

We now maximize the entropy of $\rho_Y = {\Tr}_{V\setminus Y} P_0$, given by
$$S(\rho_Y)=-\sum_{\alpha=1} \sigma_0(\alpha) \ln(\sigma_0(\alpha))$$
subject to the constraint (\ref{constraint2}).
Following the notation of Lemma~\ref{lem:ent_bound}, set $s_n = N^{2\,(\ell'+n)|\partial Y|}, \, n \ge 1$ and
$c = \exp[-2/\xi]$. Then, $R = N^{2\,|\partial Y|}$ and $s_1 = N^{2\,(\ell'+1)|\partial Y|}$ and Lemma~\ref{lem:ent_bound} implies that $S(\rho_Y)$ is bounded above by:
\begin{equation}\label{CaseI:bound}
\left(\xi+ 2 + 2 \ell' \right) \ln N\, |\partial Y|\, + \left(\frac{\xi}{2}+1\right) \ln 2,
\end{equation}
where we used the inequalities $2c/(1-c) \le \xi$, $1/(1-c) \le \xi/2 + 1$ and $H_2(1-c) \le \ln 2$.
Using the definition of $\ell'$ we have that $$2 \ell' \le \xi \ln \left(\frac{2 C^2(\gamma,d,J) |\partial Y|^4}{P}\right) + 2 = \xi \ln \left(\frac{2\epsilon^2(m^*)}{P}\right) + 2m^* + 2.$$
Since we are considering Case I, we have furthermore that $2 \ell' \le 2m^* + 2$.
Hence, using (\ref{CaseI:bound}) we get
\be
S (\rho_Y) \le C_1 \ln N\, |\partial Y|\, + S(\rho_A)/2 + C_2
\ee
with $C_1 = \xi + 4$ and $C_2 =  \left(\frac{\xi}{2}+1\right) \ln 2$. Using (\ref{entropy:bound}) and rearranging terms gives
\be
S(\rho_A) \le 4 C_1 \ln N\, |\partial Y|\, + 4 C_2 \le 4 (C_1 \ln N+C_2) \, |\partial A|,
\ee
which is an area law bound.

It remains to treat the following case:
$$\mbox{{\bf Case II}:  } P \le 2\epsilon^2(m^*).$$
The main idea here is to show that in this case, there is a non-trivial lower bound on the relative entropy
\be\label{rel_ent:bound0}
S_R(\ell) \equiv S\left(\rho_{B(\ell)} \| \rho_{B_I(\ell)}\otimes \rho_{B_E(\ell)}\right) = S(\rho_{B_I(\ell)}) + S(\rho_{B_E(\ell)}) - S(\rho_{B(\ell)}),
\ee
remembering that $B(\ell) = B_Y(\ell)$ is the $\ell-$boundary of $Y(m) = I_A(m),$ for $m \le m^*$, and that $B_I(\ell), B_E(\ell)$ represent the interior and exterior of $B(\ell)$ along $\partial Y$, respectively. In particular, we will show that for any fixed $m$ and for $\ell$ sufficiently large,
\begin{equation}\label{rel_ent:bound1}
S_R(\ell) \ge -(1-\epsilon(\ell))^2 \ln (2P + 4\epsilon(\ell)) + (1-\epsilon(\ell))^2\ln (1-\epsilon(\ell))^2 -\ln2,
\end{equation}
which, in the case that $\epsilon(m^*) \le 1/2$ and $ P \le 2\epsilon^2(m^*)$, implies 
\begin{eqnarray}
S_R(\ell) &\ge& -(1-\epsilon(\ell))^2 \ln (4\epsilon^2(m^*) + 4\epsilon(\ell)) + (1-\epsilon(\ell))^2\ln (1-\epsilon(\ell))^2 -\ln2\nonumber\\
 &\ge& -(1-\epsilon(\ell))^2\ln (2\epsilon(m^*) + 4\epsilon(\ell)) +(1-\epsilon(\ell))^2\ln (1-\epsilon(\ell))^2-\ln2 \nonumber\\
 &\ge& -(1-\epsilon(\ell))^2\ln (6\epsilon(\ell))+(1-\epsilon(\ell))^2\ln (1-\epsilon(\ell))^2 -\ln2\nonumber\\
 &\ge& -\ln \epsilon(\ell) -\ln 12, \qquad \ell \le m^*,\label{rel_ent:bound2}
\end{eqnarray}
where we used $(1-\epsilon(\ell))^2\ln (1-\epsilon(\ell))^2+(1-(1-\epsilon(\ell))^2)\ln 2\epsilon(\ell) \ge -H_2\left((1-\epsilon(\ell))^2\right)\ge -\ln 2$ in the final inequality.
Note that if $\epsilon(m^*) \ge 1/2$, then we have the following upper bound on $m^*$:
$$m^* \le \xi \ln(2 \,C(\gamma,d,J)\, |\partial Y|^2),$$ which implies an area law bound on $S(\rho_A)$ with a logarithmic correction:
$$S(\rho_A) \le (4\xi \ln (2 C(\gamma,d,J)) + 8\xi \ln |\partial A|) \ln N |\partial A|,$$
where we have used the definition of $m^*$ and the simple inequality $|\partial Y| \le |\partial A|$.

We now go back to proving (\ref{rel_ent:bound1}), which, combined with (\ref{rel_ent:bound0}) and (\ref{rel_ent:bound2}), ultimately implies for $m \le m^*$ and $\ell_1\le \ell \le m^*$:
\be\label{rel_ent:bound*}
S(\rho_{B_I(\ell)}) + S(\rho_{B_E(\ell)}) - S(\rho_{B(\ell)}) \ge \frac{\ell}{\xi} -\ln (12\, C(\gamma,d,J)\, |\partial Y(m)|^2) = \frac{\ell-\ell_2}{\xi},
\ee
where $\ell_1$ is the smallest $\ell$ for which $\epsilon(\ell) \le 1$ and $\ell_2 = \xi\ln (12\, C(\gamma,d,J)\, |\partial Y(m)|^2) \le \xi\ln12 +\ell_1$.

We begin by restating the following important result about the relative entropy \cite{lindblad,uhlmann,petz}:
\begin{lemma}[{\bf Monotonicity of Relative Entropy}]
Let $\mathcal{H}$ and $\mathcal{K}$ be finite dimensional Hilbert spaces, and let $\rho,\, \sigma$ be two density matrices on $\mathcal{H}$. Then, for any trace preserving, $2$-positive map $\Phi : \mathcal{B}(\mathcal{H}) \rightarrow \mathcal{B}(\mathcal{K})$, the following inequality holds for the relative entropy:
\be\label{rel_ent:mono}
S(\rho\|\sigma) \ge S(\Phi(\rho)\|\Phi(\sigma))
\ee
\end{lemma}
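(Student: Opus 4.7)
The plan is to prove monotonicity by decomposing a general $2$-positive trace-preserving map into canonical pieces whose effect on relative entropy can be controlled directly. In the case where $\Phi$ is completely positive, the Stinespring dilation theorem expresses $\Phi(\rho) = \Tr_{\mathcal{E}}\bigl(U(\rho \otimes \ketbra{0}{0})U^*\bigr)$ for some ancilla space $\mathcal{E}$ and unitary $U$ on $\mathcal{H}\otimes\mathcal{E}$. The relative entropy is manifestly invariant under (a) tensoring both arguments with a common fixed ancilla state, and (b) simultaneous unitary conjugation of both arguments (both follow from the functional calculus and cyclicity of the trace). Hence, for CP $\Phi$, the entire inequality reduces to the single claim that $S(\rho_{AB}\|\sigma_{AB}) \ge S(\rho_A\|\sigma_A)$, where $\rho_A = \Tr_B \rho_{AB}$ and similarly for $\sigma_{AB}$.

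The core step is therefore monotonicity under partial trace, which I would derive from the joint convexity of $(\rho,\sigma) \mapsto S(\rho\|\sigma)$. Joint convexity in turn follows from Lieb's concavity theorem: for $0 < t < 1$ and a fixed $K$, the map $(A,B)\mapsto \Tr(K^* A^t K B^{1-t})$ is jointly concave on pairs of positive operators. Given joint convexity, partial trace monotonicity follows from a twirling argument: write $\rho_A \otimes \tfrac{I_B}{d_B} = \int (I\otimes U)\,\rho_{AB}\,(I\otimes U^*)\, dU$ (Haar average over the unitary group of $B$), apply joint convexity to the integrand, and use unitary invariance of $S$ together with the additivity identity $S(\rho_A\otimes\tfrac{I_B}{d_B}\|\sigma_A\otimes\tfrac{I_B}{d_B}) = S(\rho_A\|\sigma_A)$. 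The principal obstacle in this chain is Lieb's concavity itself, a genuinely nontrivial operator inequality; the cleanest modern proof proceeds either by a variational characterization of $\Tr(A^t B^{1-t})$ or by analytic continuation in $t$ following Epstein.

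The remaining issue is that Stinespring dilation assumes full complete positivity, whereas the lemma is stated for $2$-positive maps. To handle this refinement, I would follow Petz's modular-operator approach, which sidesteps dilation entirely. Using the relative modular operator $\Delta_{\sigma,\rho}$ on a GNS representation space, one represents $S(\rho\|\sigma) = -\langle \xi_\rho,\, (\log \Delta_{\sigma,\rho})\, \xi_\rho\rangle$ for a canonical cyclic vector $\xi_\rho$. The Kadison-Schwarz inequality $\Phi(X)^*\Phi(X) \le \Phi(X^*X)$, which holds for every $2$-positive unital map (and, dually, for its trace-preserving adjoint), translates into an operator inequality of the form $V^*\Delta_{\Phi(\sigma),\Phi(\rho)}V \le \Delta_{\sigma,\rho}$ for a canonical contraction $V$ built from $\Phi$; combined with operator-monotonicity of $\log$, this yields the desired inequality $S(\Phi(\rho)\|\Phi(\sigma)) \le S(\rho\|\sigma)$. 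I expect the hardest technical step to be the verification of the modular-operator contraction from the Kadison-Schwarz bound, which is the substantive content of Petz's theorem and the reason why $2$-positivity (rather than the more generous complete positivity) already suffices.
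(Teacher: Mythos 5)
The paper does not prove this lemma; it states it as a known result and cites Lindblad, Uhlmann, and Petz, which are precisely the sources your sketch retraces. Your first route---Stinespring dilation reducing to monotonicity under partial trace, the latter derived from joint convexity (hence Lieb concavity) plus the Haar-twirling identity---is the classical Lindblad--Uhlmann argument, and, as you correctly note, it covers only the completely positive case. Your second route, Petz's relative-modular-operator argument, is the one that actually delivers the $2$-positive hypothesis as stated, so the overall plan is aligned with what the paper is appealing to. One caveat on the last step of the Petz argument as you describe it: from an inequality of the form $V^*\Delta_{\Phi(\sigma),\Phi(\rho)}V \le \Delta_{\sigma,\rho}$ with $V$ a contraction rather than a unitary, you cannot conclude directly by ``operator monotonicity of $\log$,'' since that theorem handles $A \le B \Rightarrow \log A \le \log B$, not compression by $V$. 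The actual argument passes through the integral representation of $\log$ and a resolvent inequality obtained via a $2\times 2$ block (Schur-complement) argument, and it is exactly this block-matrix step that consumes the $2$-positivity of $\Phi$. You correctly flag this as the hardest technical point, so this is a precision issue rather than an error; the plan is sound and consistent with the cited literature.
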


In particular, for any operator $M$ on $\mathcal{H}$, with $\|M\| \le 1$, setting 
\be
\Phi(\rho) = 
\left(
\begin{array}{cc}
\Tr(|M| \rho) & 0\\
0 & \Tr\left((1-|M|) \rho\right)\\
\end{array}
\right),
\ee 
we get a trace preserving, $2$-positive map $\Phi : \mathcal{B}(\mathcal{H}) \rightarrow \mathcal{B}(\mathbb{C}^2)$.
To verify that $\Phi$ is $2$-positive, we note that for any positive $2\times2$ block matrix $A = \left(\begin{array}{cc} a & b\\ b^\dagger&c\end{array}\right) \ge 0$, the determinant of the $4\times 4$ matrix $\one_2\otimes \Phi \left(A\right) = \left(\begin{array}{cc} \Phi(a)&\Phi(b)\\ \Phi(b^\dagger)& \Phi(c)\end{array}\right)$ is given by the product of the determinants of the following two matrices:
\begin{eqnarray*}
&\one_2\otimes \Tr& \left( \left(\begin{array}{cc} \sqrt{|M|} & 0\\ 0 &\sqrt{|M|} \end{array}\right)\left(\begin{array}{cc} a & b\\ b^\dagger&c \end{array}\right)\left(\begin{array}{cc} \sqrt{|M|} & 0\\ 0 &\sqrt{|M|} \end{array}\right) \right)\\
&\one_2\otimes \Tr& \left( \left(\begin{array}{cc}\sqrt{1-|M|} & 0\\ 0 &\sqrt{1-|M|} \end{array}\right)\left(\begin{array}{cc} a & b\\ b^\dagger&c \end{array}\right)\left(\begin{array}{cc} \sqrt{1-|M|} & 0\\ 0 &\sqrt{1-|M|} \end{array}\right) \right)
\end{eqnarray*}
whose positivity follows immediately from the $2$-positivity (in fact, complete positivity) of the trace operator.

For our purposes, we will choose $\rho = \rho_{B(\ell)}$ and $\sigma = \rho_{B_I(\ell)}\otimes\rho_{B_E(\ell)}$ and take $M$ to be $\PB'$, the non-trivial part of $\PB = \PB'\otimes \one_{V\setminus B(\ell)}$, noting that $\|\PB'\| \le 1$, since $\|\PB\| \le 1$. Letting $\mu_1 = \Tr(|\PB'| \rho_{B(\ell)})$ and $\mu_2 = \Tr(|\PB'| \rho_{B_I(\ell)}\otimes \rho_{B_E(\ell)})$, we get from (\ref{rel_ent:mono}) and the definition of $\Phi$:
\be\label{rel_ent:bound4}
S\left(\rho_{B(\ell)} \| \rho_{B_I(\ell)}\otimes \rho_{B_E(\ell)}\right) \ge -H_2(\mu_1) - \mu_1 \ln \mu_2 - (1-\mu_1) \ln (1-\mu_2) \ge -\mu_1 \ln \mu_2 - \ln 2,
\ee
where we used again the bound $H_2(c) \le \ln 2$ for the binary entropy and the observation that $(1-\mu_1) \ln (1-\mu_2) \le 0$. We will now show that $\mu_1 \ge (1-\epsilon(\ell))^2$ and $\mu_2 \le (2P+4\epsilon(\ell))/(1-\epsilon(\ell))^2$. We start with the lower bound for $\mu_1 = \Tr(|\PB'| \rho_{B(\ell)}) = \Tr(|\PB| P_0) \ge \Tr(|\PB|^2 P_0) = \|P_0 \PB\|_2^2$ and hence,
\be
\sqrt{\mu_1} \ge \|P_0 \PB\|_2 \ge \|P_0 \PB\PI\PE\|_2 \ge \|P_0^2\|_2 - \|P_0 (\PB\PI\PE - P_0)\|_2 \ge 1 - \epsilon(\ell),
\ee
which translates, for $\ell\ge \ell_1$ such that $\epsilon(\ell) \le 1$, into the lower bound $\mu_1 \ge (1-\epsilon(\ell))^2$. For $\mu_2$, we note that $\Tr\left(|\PB'| \rho_{B_I(\ell)}\otimes \rho_{B_E(\ell)}\right) = \Tr(|\PB| \rho_{Y}\otimes \rho_{V\setminus Y})$. 
Moreover, given the polar decomposition $\PB = U |\PB|$, we have that 
\begin{eqnarray*}
\Tr(|\PB|\PI\PE \rho_{Y}\otimes \rho_{V\setminus Y}) 
&=& \Tr(U^\dagger \PB\PI\PE \rho_{Y}\otimes \rho_{V\setminus Y})\\ 
&\le& \|U^\dagger\| \, |\Tr(\PB\PI\PE \rho_{Y}\otimes \rho_{V\setminus Y})| \\
&\le& \Tr(P_0\rho_{Y}\otimes \rho_{V\setminus Y})+ \|\PB\PI\PE-P_0\|\\
&\le& P+\epsilon(\ell).
\end{eqnarray*}
Now, observe that setting $\mu_3 = \Tr(\PI\PE \rho_{Y}\otimes \rho_{V\setminus Y})$ we have:
\be
\mu_3 \ge |\Tr(\PI\PE P_0)|^2\ge |\Tr(\PB\PI\PE P_0)|^2 \ge (1-\epsilon(\ell))^2
\ee 
from which we get:
\begin{equation}
\Tr \left[ |\PB| \PI\PE\rho_{Y}\otimes \rho_{V\setminus Y}  \right] = \Tr \left[
\left(|\PB|-\mu_2 \right) \left( \PI\PE -\mu_3 \right) \rho_{Y}\otimes \rho_{V\setminus Y} \right] 
+ \mu_2 \mu_3.
\end{equation}
Hence, by the triangle inequality,
\begin{equation}
\mu_2 \mu_3 \leq \Tr \left[ |\PB| \PI\PE\rho_{Y}\otimes \rho_{V\setminus Y}  \right]
+ \left| \Tr \left[ \left(|\PB|-\mu_2 \right) \left( \PI\PE -\mu_3 \right) \rho_{Y}\otimes \rho_{V\setminus Y} \right]  \right|.
\end{equation}
Moreover, by Cauchy-Schwarz, we have the following upper bound
\begin{eqnarray}
&&\left| \Tr \left[
\left(|\PB|-\mu_2 \right) \left( \PI\PE -\mu_3 \right) \rho_{Y}\otimes \rho_{V\setminus Y} \right] \right|\nonumber\\
&& \leq  \sqrt{\Tr \left[
\left(|\PB|-\mu_2 \right)^2 \rho_{Y}\otimes \rho_{V\setminus Y} \right] } \cdot \sqrt{\Tr \left[
\left(\PI \PE-\mu_3 \right)^2 \rho_{Y}\otimes \rho_{V\setminus Y} \right]}
\nonumber \\
&& =  \sqrt{\Tr \left[ |P_B|^2 \rho_{Y}\otimes \rho_{V\setminus Y} \right] - \mu_2^2 } \cdot \sqrt{\mu_3-\mu_3^2} \nonumber \\
&& \leq  \sqrt{\mu_2 - \mu_2^2} \cdot \sqrt{\mu_3 - \mu_3^2}\le \sqrt{2\epsilon(\ell)} \sqrt{\mu_2 \mu_3}
\end{eqnarray}
where we used $\|\PB\|\le 1$ in the previous to last inequality and $1-\mu_3 \le 2\epsilon(\ell)$ in the last inequality. Setting $x = \sqrt{\mu_2 \mu_3}$, the above inequalities imply:
\be
x^2 - \sqrt{2\epsilon(\ell)} x - (P+\epsilon(\ell)) \le 0,
\ee
and solving the quadratic inequality implies that $x^2 \le 2P + 4\epsilon(\ell),$ and hence,
\be
\mu_2 \le \frac{2P+4\epsilon(\ell)}{(1-\epsilon(\ell))^2}.
\ee
Plugging things back in (\ref{rel_ent:bound4}), yields (\ref{rel_ent:bound1}). We are now ready to make use of bound (\ref{rel_ent:bound*}). Setting $$S_{\ell} = \max \left\{S(\rho_{B_{Y(m)}(\ell)}): B_{Y(m)}(\ell) \subset B_{Y(m^*/2)}(m^*/2)\right\},$$ we get from (\ref{rel_ent:bound*}):
\be
S_{2\ell} \le 2 S_\ell - \frac{\ell-\ell_2}{\xi}
\ee
and one can easily check that for $\ell_2 \le \ell \le m^*/2$ we have:
\be
S_{\ell} \le \ell\, \frac{S_{\ell_2}}{\ell_2} - \ell\, \frac{\lg (\ell/(4\ell_2))}{2\xi} - \frac{\ell_2}{\xi}.
\ee
Finally, since $S_{m^*/2} \ge 0$ and $S_{\ell_2} \le \ell_2 \ln N |\partial A|$, the above bound implies that:
\be
m^* \le 8\ell_2 \, 2^{2\xi \ln N |\partial A|} \implies S(\rho_A) \le 32\,\ell_2 \ln N \, 2^{2\xi \ln N |\partial A|}\, |\partial A|,
\ee
which ultimately yields:
\be
S(\rho_A) \le 32\,\xi(\ln(12C(\gamma,d,J))+2\ln |\partial A|) \ln N \, 2^{2\xi \ln N |\partial A|}\, |\partial A|.
\ee
Obviously, the above bound gives a constant upper bound in one dimension. Nevertheless, the
exponential factor prohibits any direct extension of the area law to higher dimensions.
\end{proof}
\subsection{Proof of Entropy Bound}
In the proof of Case I in the above Theorem the following lemma was used.

\begin{lemma}[{\bf Entropy bound}]\label{lem:ent_bound} Let $\rho$ be a density matrix with spectral decomposition $\sum_{\alpha} \sigma(a) \pure{\psi(\alpha)}$ and assume that there is $0 < c < 1$, such that for an increasing sequence $s_n,\, n \ge 1$ with $s_{n+1}/s_n \le R, \, \forall n\ge 1$ and $s_1 > 1$, the following constraint holds on the eigenvalues $\sigma(\alpha)$:
\be
\sum_{\alpha \ge s_n + 1} \sigma(\alpha) \le c^n, \, \forall n \ge 1.\label{constraint}
\ee
Then, the entropy of $\rho$ is bounded by:
$$S(\rho) \le \ln s_1 + \frac{c}{1-c}\ln R +  \frac{1}{1-c} H_2(1-c),$$
where $H_2(1-c) = -c \ln c - (1-c) \ln (1-c)$ is the binary entropy.
\begin{proof}
Since the Shannon entropy is Schur-concave, it is maximized when the input probability distribution $\{\sigma(\alpha)\}$ is ``minimized per entry'' (i.e. majorized by all other probability vectors satisfying the constraint (\ref{constraint})). Note that $s_{n+1} \le s_1 R^n, \, n \ge 1$ and hence $s_{n+1} - s_n \le s_1 R^n.$

For $\alpha \le s_1$ we have that $-\sum_{\alpha}\sigma(\alpha) \ln \sigma(\alpha)$ is minimized when all $\sigma(\alpha)$ are equal and their sum is minimized according to (\ref{constraint}):
$$\sum_{\alpha=1}^{s_1} \sigma_0(\alpha)= 1-c \implies \sigma(\alpha) = \frac{1-c}{s_1}.$$
Similarly, for $\alpha \in [s_n+1, s_{n+1}], \, n \ge 1,$ we see that the entropy is minimized when
\benn
\sum_{\alpha=s_n+1}^{s_{n+1}} \sigma(\alpha)= c^n-c^{n+1} \implies \sigma(\alpha) = \frac{(1-c)c^n}{s_{n+1}-s_n}.
\eenn 
Gathering terms, the entropy can be written as $S(\rho) = \sum_{n = 0}^{\infty} S_n(\rho)$, where $S_0(\rho) = -\sum_{\alpha = 1}^{s_1}\sigma(\alpha) \ln \sigma(\alpha)$ and $S_n(\rho) = - \sum_{\alpha = s_n+1}^ {s_{n+1}} \sigma(\alpha) \ln \sigma(\alpha),\, n\ge 1$. We can bound each $S_n(\rho)$ as follows:
\begin{equation*}
S_0(\rho) \le -(1-c)\ln(1-c) + (1-c)\ln s_1
\end{equation*}
and for $n\ge 1,$
\begin{eqnarray*}
S_n(\rho) \le (1-c)c^n\Big(-\ln((1-c)c^n) + \ln (s_{n+1}-s_n)\Big)\\
\le c^n(1-c) (\ln s_1 - \ln(1-c)) + nc^{n}(1-c)(\ln R - \ln c)
\end{eqnarray*}
Hence, we get the desired bound:
\begin{eqnarray*}
S(\rho) &\le& \ln s_1-\ln (1-c) + (1-c)\Big(\sum_{n=1}^\infty nc^{n}\Big) (\ln R -\ln c) \\
&=& \ln s_1 + \frac{c}{1-c}\ln R + \frac{1}{1-c} H_2(1-c).
\end{eqnarray*}
and this completes the proof of this Lemma.
\end{proof}
\end{lemma}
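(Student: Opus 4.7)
The plan is to upper bound $S(\rho)$ by finding the maximum of the Shannon entropy $-\sum_\alpha \sigma(\alpha)\ln\sigma(\alpha)$ over all probability distributions satisfying the tail constraints (\ref{constraint}). Since Shannon entropy is Schur-concave, the maximum is attained at the distribution which is majorized by every feasible one, so I would make the distribution as ``spread out'' as allowed. That means (i) saturating the tail constraints, setting $\sum_{\alpha \ge s_n+1}\sigma(\alpha) = c^n$ for every $n\ge 1$, and (ii) making $\sigma$ uniform on each of the shells $S_0 = [1,s_1]$ and $S_n = [s_n+1, s_{n+1}]$ for $n\ge 1$.

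With this choice the shell-0 mass is $1-c$ distributed uniformly as $\sigma(\alpha)=(1-c)/s_1$, and shell-$n$ mass is $c^n - c^{n+1} = c^n(1-c)$ distributed uniformly as $\sigma(\alpha)=c^n(1-c)/(s_{n+1}-s_n)$. I would then split $S(\rho) = \sum_{n\ge 0} S_n$ where $S_n$ is the contribution from shell $S_n$, and plug in the uniform values to get closed-form expressions. For $n\ge 1$ this gives
\begin{equation*}
S_n = c^n(1-c)\bigl[\ln(s_{n+1}-s_n) - \ln(1-c) - n\ln c\bigr],
\end{equation*}
and $S_0 = (1-c)\ln s_1 - (1-c)\ln(1-c)$. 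Next I would bound $s_{n+1}-s_n \le s_{n+1}\le s_1 R^{n}$ (iterating the assumption $s_{n+1}/s_n\le R$) to eliminate the sequence dependence.

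The remaining work is a routine geometric-series calculation: $\sum_{n\ge 0} c^n(1-c) = 1$ assembles the $\ln s_1$ and $-\ln(1-c)$ terms into $\ln s_1 - \ln(1-c)$, while $\sum_{n\ge 1} n c^n(1-c) = c/(1-c)$ produces a factor of $c/(1-c)$ in front of $\ln R - \ln c$. Adding everything together and recognizing $-\ln(1-c) - \frac{c}{1-c}\ln c = \frac{1}{1-c}H_2(1-c)$ yields exactly the claimed bound.

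The only conceptual step that needs care is the Schur-concavity argument: one must check that the candidate ``shell-uniform, constraint-saturating'' distribution is actually majorized by every distribution satisfying (\ref{constraint}) and the monotonicity $\sigma(\alpha)\ge \sigma(\alpha+1)$; once that is established, Schur-concavity of $-\sum x\ln x$ justifies using it as the extremizer. Everything after that is a bookkeeping computation with geometric sums, so I would expect the majorization step to be the only subtle point in the proof.
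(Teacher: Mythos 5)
Your proposal is correct and follows essentially the same route as the paper's proof: appeal to Schur-concavity to reduce to the shell-uniform, constraint-saturating distribution, split the entropy by shells, bound $s_{n+1}-s_n \le s_1 R^n$, and sum the geometric series. The only point you flag as subtle (justifying the majorization/extremizer step) is also the only step the paper treats informally, so you have identified exactly where the argument's weight lies.
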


        \section[Approximating the Ground State]{Approximation for the ground state of a gapped Hamiltonian}
        \label{sec:LabelForChapter3:Section2}
       In this section, we construct an approximation to the ground state of a gapped Hamiltonian.
The setup is the same as in the previous section, but we restate the main assumptions here
for the sake of completeness.
Again, we consider a system of the following type: Let $V$ be a finite
subset of $\Ir^d$. At each $x\in V$, we have a finite-dimensional
Hilbert space of dimension $n_x$. Let $\A_X$ be the algebra of
observables associated with $X\subset V$. We consider the dynamics
$\tau_t$ generated by the Hamiltonian
$$
H_V=\sum_{X\subset V} \Phi(X).
$$  The main assumptions are as follows:
The interactions $\Phi(X)$ are uniformly bounded and finite range,
and, for convenience, we will take pair interactions with range 1.
So, for all $X\subset V$, $\Phi(X) = \Phi(X)^*\in \A_X$, $\| \Phi_X
\| \leq J$, for some constant $J>0$, and $\Phi(X) = 0$ if
$\mbox{diam} (X) >1$. Here, for $X\subset V$,
$$
\mbox{diam} (X) = \max \{ d(x,y) \mid x,y\in X\}.
$$
Moreover, the boundary of $X$, denoted by $\partial X$, is
$$
\partial A = \{ x\in A \mid \mbox{there exists } y\in V\setminus A,  \mbox{with } d(x,y) \leq 1\},
$$
and we define the distance between sets $X$ and $Y$ to be $$d(X,Y) = \min \{ d(x,y) \mid x\in X, y\in Y\}.$$
For an observable $A$, we define $\tau_t^\Lambda(A) = e^{i t H_\Lambda} A e^{-it H_\Lambda}$, where $H_\Lambda = \sum_{X \subset \Lambda} \Phi(X)$. 

In order to construct the approximation to the ground state given below, we will assume further that the Hamiltonian $H_V$ has a unique, normalized ground state, which we will denote by $\ket{\Psi_0}$, and a spectral gap $\gamma >0$ to the first excited state.
\begin{theorem} \label{thm:appgs}
There exists $\xi > 0$, such that for any sufficiently large $\ell \ge c_0\, d^2 \xi^2$
there exist two orthogonal projections $P_A \in \A_A$, 
$P_{V\setminus A} \in \A_{V\setminus A}$ and an operator 
$P_B\in \A_{B(A;\ell)}$ with $\|P_B\| \le 1$, such that
\begin{equation} \label{eq:main}
\| P_B P_A P_{V\setminus A} - P_0 \| \leq C(\gamma,d,J) | \partial A |^2 e^{- \ell/ \xi}
\end{equation}
where $P_0$ is the projection onto the (unique) ground state, $c_0$ is of order unity and $C(\gamma,d,J)$, $\xi$ are explicit in terms of $J$, $\gamma$, and the
dimensionality of $V$.
\end{theorem}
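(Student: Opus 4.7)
The plan is to construct the approximation in three steps: (i) smear the ground-state projection $P_0$ in time with a filter whose Fourier transform is concentrated below the spectral gap, (ii) use Lieb-Robinson bounds to replace the smeared-in-time evolution by a spatially localized approximation, and (iii) factorize the resulting localized approximation across the cut $\partial A$, collecting the residue into a boundary operator $P_B$.

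First, I would choose a smooth even filter $f\in L^{1}(\mathbb{R})$ such that $\hat f(0)=1$, $\hat f(\omega)=0$ for $|\omega|\ge\gamma$, and $|f(t)|\le c_{1}\exp(-c_{2}(\gamma t)^{2})$. For any observable $O$ define the dressed operator
\[
\tilde O \;=\; \int_{-T}^{T} f(t)\,\tau_{t}^{V}(O)\,dt.
\]
A direct computation in the eigenbasis of $H_{V}$ shows that $P_{0}\tilde O (I-P_{0})$ and $(I-P_{0})\tilde O P_{0}$ are both of order $\exp(-c_{2}(\gamma T)^{2})$, coming from the truncation of the integral, so $\tilde O$ approximately commutes with $P_{0}$ and maps the ground-state sector to itself. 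Lieb-Robinson bounds further imply that $\tilde O$ is approximable by an operator supported in the $vT$-neighborhood of the support of $O$, with an error of order $\exp(-\ell/\xi_{0})$ provided $T\lesssim\ell/v$. Choosing $T\sim\ell/v$ equalizes the two tails and produces a single exponential rate $1/\xi$ depending on $\gamma$, $J$, and $d$.

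Second, I would apply this machinery to a suitable sequence of boundary operators. Write $H_{V} = H_{A} + H_{V\setminus A} + H_{\partial}$, where $H_{\partial}$ collects the $|\partial A|$ pair interactions straddling $\partial A$. Let $P_{A}$ and $P_{V\setminus A}$ be low-energy spectral projections of $H_{A}$ and $H_{V\setminus A}$, so that $P_{A}P_{V\setminus A}$ is a local product-form projection. The true ground state $\ket{\Psi_{0}}$ lies approximately in the range of $P_{A}P_{V\setminus A}$, but the two do not coincide exactly. The key step is to construct, via a filtered Duhamel series in $H_{\partial}$, a boundary operator $P_{B}\in\A_{B(A;\ell)}$ with $\|P_{B}\|\le 1$ such that
\[
P_{B}\,P_{A}P_{V\setminus A}\ket{\Psi_{0}} \;\approx\; \ket{\Psi_{0}}.
\]
Iterating the filter across each of the $|\partial A|$ boundary bonds and truncating in space, one obtains a product of local filtered operators supported in $B(A;\ell)$ whose aggregate error from the filter tails and Lieb-Robinson truncation is proportional to $|\partial A|\,\exp(-\ell/\xi)$.

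Promoting this vector-wise bound to the operator-norm bound in \eqref{eq:main} costs one further factor of $|\partial A|$ via Cauchy-Schwarz, producing the $|\partial A|^{2}$ prefactor. The principal obstacle is the dimension dependence: in $d$ dimensions the Lieb-Robinson cone of the filtered boundary operators has volume $(vT)^{d}$, so the Lieb-Robinson prefactor carries a polynomial in $\ell$ of degree $d$ that must be absorbed into $e^{-\ell/\xi}$. Balancing this prefactor against the Gaussian filter tail forces $T\sim\ell/v$ with $\ell\gtrsim d^{2}\xi^{2}$, precisely the hypothesis of the theorem. A secondary subtlety is verifying that the partially-dressed states remain in the low-energy sector at each step of the boundary iteration; this uses the spectral gap $\gamma$ essentially, either through a discrete spectral-flow argument or through a direct energy estimate on $H_{V}$ applied to partial products. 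Once these bookkeeping points are settled, the explicit forms of $\xi$ and $C(\gamma,d,J)$ emerge from the filter width, the Lieb-Robinson velocity, and the polynomial prefactors in $d$.
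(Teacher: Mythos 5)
Your high-level scheme --- Gaussian time-filtering to project onto low energies, Lieb--Robinson bounds to localize, factorization across the cut with the residue collected into a boundary operator --- is indeed the skeleton of the paper's proof. But the specific way you set up the factorization contains a gap at the central step, and your account of the $|\partial A|^2$ prefactor is not how it arises.

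You split $H_V = H_A + H_{V\setminus A} + H_\partial$ and take $P_A$, $P_{V\setminus A}$ to be low-energy spectral projections of the \emph{bare} subsystem Hamiltonians $H_A$, $H_{V\setminus A}$, asserting that $\ket{\Psi_0}$ lies approximately in the range of $P_A P_{V\setminus A}$. There is nothing in the hypotheses that delivers this with the required exponential precision: the gap $\gamma$ is a property of $H_V$, not of $H_A$ or $H_{V\setminus A}$, and $\ket{\Psi_0}$ can have $O(1)$ weight arbitrarily high in the spectrum of $H_A$. The paper avoids this by never projecting with respect to bare subsystem Hamiltonians. It instead splits $H_V = H_I + H_B + H_E$ by which interactions touch the $\ell$-interior, $\ell$-boundary shell, or $\ell$-exterior, subtracts the ground-state expectations, Gaussian-filters each piece in time, and Lieb--Robinson-localizes the filtered pieces to local operators $M_I\in\A_A$, $M_B\in\A_{B(A;2\ell)}$, $M_E\in\A_{V\setminus A}$. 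The gap enters exactly once, in Proposition~\ref{prop:bd}: since $\langle\Psi_0,H_X\Psi_0\rangle=0$, the filtered operator satisfies $\|(H_X)_\alpha\Psi_0\|\lesssim e^{-\gamma^2/4\alpha}$, and hence $\|M_X\Psi_0\|$ is exponentially small after Proposition~\ref{prop:bd2}. Then $\PA$ and $\PV$ are defined as spectral projections of $M_I$ and $M_E$ onto eigenvalues below an exponentially small threshold $c$, and $\|(1-\PA)\Psi_0\|\le\|M_I\Psi_0\|/c\le e^{-\ell/2\xi'}$ by Chebyshev. Without replacing your $H_A$ by a filtered-and-localized operator with controlled action on $\Psi_0$, your Step~3 does not go through.

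Two secondary points. First, $\PB$ is not built in the paper by a Duhamel expansion in $H_\partial$; it is extracted in one shot as $P_B(\alpha)=\sqrt{\alpha/\pi}\int e^{i(M_I+M_B+M_E)t}e^{-i(M_I+M_E)t}e^{-\alpha t^2}\,dt$, then localized to $B(A;3\ell)$ by Haar-averaging over unitaries supported outside, using Lieb--Robinson again to control the averaged commutator. Second, the $|\partial A|^2$ prefactor does not come from a Cauchy--Schwarz conversion of a vectorwise bound to an operator bound --- since $P_0$ is rank one, $\|P_0(1-\PA)\|=\|(1-\PA)\Psi_0\|$ costs no extra factor. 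It arises in the commutator estimates when localizing $P_B(\alpha)$ (Proposition~\ref{prop:bd3}): one factor of $|\partial A|$ from $\|M_B\|\sim J\,|B(A;2\ell)|$ and another from $|\partial B(A;2\ell)|$ in the Lieb--Robinson bound. Your observation that the Lieb--Robinson cone contributes a degree-$d$ polynomial in $\ell$ that must be absorbed into the exponential, giving the constraint $\ell\gtrsim d^2\xi^2$, is in the right spirit.
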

The following lemma is an essential tool for the
approximation theorem, so we state it and prove it here.
\begin{lemma}[Lieb-Robinson Bound]\label{LR:bound}
Let $H_V$ be a Hamiltonian satisfying the above assumptions. Then, for a region $\Lambda \subset V$ and two observables $A\in \A_X$ and $B\in \A_Y$, with $X \subset \Lambda$ and $Y \cap \Lambda \neq \emptyset$, we have the following bound:
\be
\|[\tau_t^\Lambda(A),B]\| \le 2\|A\| \|B\| |\partial X| e^{-\mu d(X,Y)} e^{v|t|}, \qquad \forall \, |t| \le e^{-(1+\mu)} \frac{d(X,Y)}{v},
\ee
where $v = 4(2d-1) J$ is the Lieb-Robinson velocity of the system, and $\mu$ is a free parameter.
\end{lemma}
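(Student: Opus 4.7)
The plan is to prove the bound via the standard Lieb-Robinson differential-inequality method. Set $g(t) := [\tau_t^\Lambda(A), B]$ and compute its time derivative $\frac{d}{dt} g(t) = i[\tau_t^\Lambda([H_\Lambda, A]), B]$. Only interaction terms whose support meets $X$ fail to commute with $A$, so one may replace $H_\Lambda$ by $K_X := \sum_{Z\subset\Lambda,\,Z\cap X\ne\emptyset} \Phi(Z)$. The Jacobi identity then yields the decomposition
\[
\frac{d}{dt} g(t) = i[\tau_t^\Lambda(K_X), g(t)] + i\bigl[[\tau_t^\Lambda(K_X), B], \tau_t^\Lambda(A)\bigr].
\]
The first term is the generator of a unitary conjugation of $g(t)$ and therefore preserves $\|g(t)\|$; only the second contributes to $\frac{d}{dt}\|g(t)\|$. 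Integrating and applying $\|[U,V]\| \le 2\|U\|\|V\|$ produces the fundamental integral inequality
\[
\|g(t)\| \le \|g(0)\| + 2\|A\| \sum_{Z\cap X\ne\emptyset} \int_0^{|t|} \|[\tau_s^\Lambda(\Phi(Z)), B]\|\, ds.
\]

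Next I would introduce $C_B(X,t) := \sup_{A\in\A_X,\,\|A\|=1} \|[\tau_t^\Lambda(A), B]\|$ and iterate the inequality above. Each iteration produces a sum over ``chains'' $Z_1, Z_2, \ldots, Z_n$ of pair interactions in $\Lambda$ satisfying $Z_1 \cap X \ne \emptyset$, $Z_k \cap Z_{k+1} \ne \emptyset$, and $Z_n \cap Y \ne \emptyset$ (the base case vanishes otherwise). After $n$ iterations the contribution is bounded by $2\|A\|\|B\| \cdot \frac{(2J|t|)^n}{n!} \cdot N_n(X,Y)$, where $N_n(X,Y)$ counts admissible chains. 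Since the interactions are nearest-neighbor pairs on $\Ir^d$, each chain is a walk on the lattice graph, and a count based on parameterizing chains by their first crossing of $\partial X$ together with a non-backtracking-style extension estimate yields $N_n(X,Y) \le |\partial X|(2(2d-1))^n$. Any chain reaching $Y$ must have length at least $d(X,Y)$, so the series reduces to the tail $\sum_{n\ge d(X,Y)} (v|t|)^n/n!$ with $v := 4(2d-1)J$. Applying the elementary estimate $\frac{(v|t|)^n}{n!} \le e^{v|t|} e^{-\mu n}$, valid whenever $v|t| \le e^{-(1+\mu)} n$ (which holds for every $n \ge d(X, Y)$ under the stated time restriction), and retaining the leading term $e^{-\mu d(X, Y)}$ yields
\[
\|g(t)\| \le 2\|A\|\|B\|\, |\partial X|\, e^{-\mu d(X,Y)}\, e^{v|t|},
\]
as claimed.

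The main obstacle is the combinatorial bookkeeping of the iteration: obtaining the boundary factor $|\partial X|$ rather than the naive volume factor $|X|$ requires carefully parameterizing chains by their first crossing of $\partial X$ and showing that sub-chains confined to $X$ are absorbed into the adjoint action of $K_X$ from the norm-preserving first term of the derivative decomposition. Recovering the tight lattice constant $2(2d-1)$ in the velocity $v = 4(2d-1)J$, rather than a crude $(2d)^n$ bound on lattice walks, also demands a non-backtracking-type count of chain extensions sharing an endpoint with the previous pair. Finally, the time restriction $|t| \le e^{-(1+\mu)} d(X, Y)/v$ is precisely the threshold at which the factorial tail estimate becomes uniform over every $n \ge d(X, Y)$ appearing in the relevant sum.
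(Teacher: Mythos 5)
Your general framework (Jacobi decomposition, interaction picture for the norm-preserving term, iteration, factorial tail estimate) is the standard Lieb-Robinson machinery, and the last three steps match the paper. But there is a genuine gap at the start, and your own acknowledgment of the ``main obstacle'' points right at it without resolving it.

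You define $g(t) = [\tau_t^\Lambda(A),B]$ and replace $H_\Lambda$ by $K_X = \sum_{Z\cap X\ne\emptyset}\Phi(Z)$, which includes \emph{all} interactions touching $X$, interior ones as well. After the Jacobi split, the unitary-conjugation term preserves $\|g(t)\|$, but the remaining source term is $[\tau_s^\Lambda(K_X),B]$, and the resulting integral inequality sums over every pair $Z$ meeting $X$ --- a set of size $O(d|X|)$, not $O(|\partial X|)$. Your claim that ``sub-chains confined to $X$ are absorbed into the adjoint action of $K_X$ from the norm-preserving first term'' is not a valid step: the first Jacobi term acts on the outer commutator $g(t)$ as a whole and gives the interaction-picture unitary, but it does not remove any $Z\subset X$ from the sum $K_X$ appearing in the second term, nor does it selectively cancel chains that begin in the interior. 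The boundary factor does not emerge from this bookkeeping.

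The paper's proof obtains $|\partial X|$ at the \emph{first} step by a different choice of interpolating function: it sets $f(t) = [\tau_t^\Lambda(\tau_{-t}^X(A)),B]$, where the extra $\tau_{-t}^X$ is evolution under $H_X$ alone. Since $\tau_{-t}^X(A)\in\A_X$ and $[H_\Lambda - H_X, \cdot\,]$ applied to elements of $\A_X$ only sees interactions that straddle the boundary, the differential equation for $f$ involves only $H_{S(X)}$ with $S(X)$ the family of pairs crossing $\partial X$. That is the ingredient that converts a volume count into a boundary count, and it is absent from your proposal. (Once the iteration is running, each subsequent $C_B(Z_k,\cdot)$ can be bounded crudely because $Z_k$ is already a single pair.) One could in principle try to rescue your version by arguing that chains which start from a $Z_1$ deep inside $X$ must be longer to reach $Y$ and are therefore exponentially suppressed, summing the resulting geometric series in the depth --- but that is a different argument than the one you sketch, and you would need to write it out; as stated, the step fails.

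A minor secondary point: the constant $2(2d-1)$ per step is not a ``non-backtracking'' count. It is simply that a pair $\{z,w\}$ has $2\cdot 2d$ adjacent pairs through its two endpoints, minus the two repetitions of $\{z,w\}$ itself, giving $4d-2 = 2(2d-1)$. No parametrization by first boundary crossing or extension constraint is involved in the paper's step bound; that machinery would only matter if you were trying to repair the volume-versus-boundary issue above.
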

We should note here that similar to the bound given in \cite{loc_est}, the above bound is actually true for all times $t$, since for large $|t|$ and fixed $d(X,Y)$, $\mu$ becomes negative, making the above upper bound non-optimal, as one always has the naive bound $\|[\tau_t^\Lambda(A),B]\| \le 2\|A\| \|B\|$. The above ``constraint'' in $|t|$ is given only to clarify the regime in which the Lieb-Robinson bound can be more useful than the naive upper bound given above.
\begin{proof}
Following closely the proof in \cite{loc_est}, we define the quantity $$C_B(Z;t) = \sup_{A\in \A_Z} \frac{\|[\tau_t^{\Lambda}(A),B]\|}{\|A\|},
$$ 
where $B \in \A_Y$ is fixed, and $Z \subset \Lambda$ is considered arbitrary. For $A \in \A_X$, we introduce the operator function $f(t) = [\tau_t^{\Lambda}(\tau_{-t}^X(A)), B]$. Noting that $\tau_{-t}^X(A)\in \A_X$ and that $\tau_t^{\Lambda}([A,B]) = [\tau_t^{\Lambda}(A),\tau_t^{\Lambda}(B)]$ we may compute
\begin{eqnarray*}
f'(t) &=& i \left[[\tau_t^{\Lambda}\left(H_{S(X)}\right),\tau_t^{\Lambda}(\tau_{-t}^{X}(A))], B\right]\\
&=& i \left[\tau_t^{\Lambda}\left(H_{S(X)}\right), f(t)\right] - i \left[\tau_t^{\Lambda}(\tau_{-t}^{X}(A)),\left[\tau_t^{\Lambda}\left(H_{S(X)}\right), B\right] \right],
\end{eqnarray*}
where we used the Jacobi identity $[A,[B,C]]+[B,[C,A]]+[C,[A,B]]=0$ for the second equality, and defined $S(X)$ to be the set of nearest-neighbor interactions $\{Z\subset \Lambda\setminus X: Z\cap \partial X \ne \emptyset \}$.
Now, one may easily verify that $f(t)$ is given by:
\benn
e^{i\A(t)}\left(f(0) -i \int_0^t e^{i\A(-s)}\left(\left[\tau_s^{\Lambda}(\tau_{-s}^{X}(A)),\left[\tau_s^{\Lambda}\left(H_{S(X)}\right), B\right] \right]\right) e^{-i\A(-s)}\, ds \right) e^{-i\A(t)},
\eenn
where $\A(t) = \int_0^t \tau_s^{\Lambda}(H_{S(X)})\, ds$. Since $\A(t)$ is self-adjoint, we
have that:
\begin{eqnarray}
\|f(t)\| &\le& \|f(0)\| + \int_0^t \| [\tau_s^{\Lambda}(\tau_{-s}^{X}(A)),[\tau_s^{\Lambda}(H_{S(X)}), B] ]\| ds\\
&\le& \|[A,B]\| + 2\|A\|  \sum_{Z\in S(X)} \int_0^t \|[\tau_s^{\Lambda}(\Phi(Z)), B]]\| ds\label{ft:bound}
\end{eqnarray}
From (\ref{ft:bound}) and the fact that $\|\tau_{-s}^{X}(A)\| = \|A\|$, we get the following recursive relation:
\be\label{cb:bound}
C_B(X;t) \le C_B(X;0) + 2\sum_{Z\in S(X)} \|\Phi(Z)\| \int_0^t C_B(Z,s)\,ds
\ee
Moreover, it should be clear that $C_B(Z,0) \le 2\|B\| \delta_Y(Z)$, where $\delta_Y(Z) = 0$ if $Y\cap Z = \emptyset$ and $\delta_Y(Z)=1$, otherwise.
Iterating (\ref{cb:bound}) we get:
\be\label{cb:bound2}
C_B(X;t) \le 2\|B\| \sum_{n=0}^{\infty} \frac{(2|t|)^n}{n!} a_n,
\ee
where $$a_n = \sum_{Z_1\in S(X)} \sum_{Z_2\in S(Z_1)} \cdots \sum_{Z_{n}\in S(Z_{n-1})} \delta_Y(Z_n) \Pi_{i=1}^n \|\Phi(Z_i)\|,$$ for $n\ge 1$. Since $|S(Z)|\le 2(2d-1)$ for $Z$ with $\diam(Z)\le 1$, we have the following upper bound on $a_n$:
\be\label{a_n:bound}
a_n \le \left(\frac{|S(X)|}{4d-2}\right) \left[2(2d-1)J\right]^{n} \delta_Y(S^{(n)}(X)),
\ee
where $S^{(n)}(X) = S(S(\cdots S(S(X))\cdots))$, composition $n$ times. Now, note that the set 
$S^{(n)}(X)$ contains sites at most distance $n$ away from $X$; that is $d(S^{(n)}(X), X) = n$. Hence,
$d(S^{(n)}(X), Y) \ge d(X,Y)-n$, which implies that $a_n = 0$, for $n < d(X,Y)$.
Going back to (\ref{cb:bound2}) and noting that $\left(\frac{|S(X)|}{4d-2}\right) \le |\partial X|$ we get:
\begin{eqnarray}
C_B(X;t) &\le& 2\|B\| |\partial X|  \sum_{n=d(X,Y)}^{\infty} \frac{(2|t|)^n}{n!} a_n\nonumber\\
&\le& 2\|B\| |\partial X| \, \frac{(4(2d-1)J|t|)^{d(X,Y)}}{d(X,Y)!} \sum_{n=0}^{\infty} \frac{(4(2d-1)J|t|)^n}{n!}\nonumber\\
&\le& 2\|B\| |\partial X| \, \frac{(v|t|)^{d(X,Y)}}{d(X,Y)!} e^{v|t|}\label{cb:bound3},
\end{eqnarray}
where $v =4(2d-1)J$. Now, using the simple bound $\ln n! \ge n \ln (n/e)$ for $n\ge 1$, which one may easily prove by induction, we have that:
\be
\frac{(v|t|)^{d(X,Y)}}{d(X,Y)!} \le e^{-d(X,Y) \left(\ln \left[\frac{d(X,Y)}{e\,v|t|}\right]\right)} \le e^{-\mu d(X,Y)}, \qquad |t| \le e^{-(1+\mu)} \frac{d(X,Y)}{v},
\ee
which combined with (\ref{cb:bound3}) and the definition of $C_B(X;t)$ completes the proof.
\end{proof}

We are now ready to prove Theorem \ref{thm:appgs}. The proof consists of three main steps, which
translate into Propositions \ref{prop:bd}, \ref{prop:bd2} and \ref{prop:bd3}, respectively. Here is a brief overview of the proof:

The first step is to use the spectral gap in order to transform the original Hamiltonian into one consisting of $3$ parts, each of which has exponentially small energy in the ground state with respect to some length $\ell$. In particular, the $3$ pieces are chosen so that the middle piece, which is a boundary-like region of increasing thickness $\ell$, acts as an interaction bridge between the other two pieces of the Hamiltonian. 

Nevertheless, at this point, the transformation of the Hamiltonian will introduce long-range interactions within each of the $3$ components, which becomes an obstacle when one wishes to isolate the contribution of each piece to the ground state of the original Hamiltonian. In order to reclaim a strictly local range of interactions for each component, we use the Lieb-Robinson bound given in Lemma \ref{LR:bound} to concentrate the interactions within each component in regions of larger, but strictly local, support. 

At this stage, we will have transformed the original Hamiltonian into a new one that is close in norm to the original and has, furthermore, the desirable property of consisting of components each of which has small energy in the ground state and whose support is localized within a region of our choice. It remains now to construct projectors onto the low energy states of each component and combine them to produce the desired approximation to the ground state.

After this brief overview, we are now ready to prove the theorem.

\begin{proof}
First we note that  $H_V$ can be partitioned as a sum
\begin{equation}
H_V = H_I + H_B + H_E,
\end{equation}
where
\begin{equation}
H_I = \sum_{\stackrel{X \subset V:}{X \cap I \neq \emptyset}}
\Phi(X),
\end{equation}
\begin{equation}
H_B = \sum_{\stackrel{X \subset V:}{X \subset B}} \Phi(X),
\end{equation}
and
\begin{equation}
H_E = \sum_{\stackrel{X \subset V:}{X \cap E \neq \emptyset}}
\Phi(X).
\end{equation}
Moreover, we have that
\begin{lemma} \label{lem:commutator}
The following commutators have norms bounded by the number of
interactions at the boundary of the relevant regions:
\be
\| [H_V,H_I] \| \leq 8d^2 J^2 |\partial I|, \quad \| [H_V,H_E] \| \leq 8d^2 J^2 |\partial E|,
\ee 
and
\be
\| [H_V,H_B] \| \leq
8d^2 J^2 \left( |\partial I| + | \partial E| \right).
\ee
\end{lemma}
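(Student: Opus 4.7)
The plan is to exploit the decomposition $H_V = H_I + H_B + H_E$ together with the finite-range assumption on $\Phi$ to show that the ``bulk'' pieces $H_I$ and $H_E$ decouple completely, so that every nontrivial contribution to the three commutators comes from pair interactions that straddle either the $I$--$B$ or the $E$--$B$ interface. The number of such straddling edges is proportional to $|\partial I|$ or $|\partial E|$, with a combinatorial prefactor controlled by the coordination number $2d$ of $\Ir^d$.

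First I would write $[H_V,H_I] = [H_I,H_I] + [H_B,H_I] + [H_E,H_I] = [H_B,H_I] + [H_E,H_I]$ and argue that the last term vanishes outright. The key geometric observation is that since interactions have diameter at most one and $B$ has thickness $\ell \ge 1$, the support of $H_I$ lies inside $A$: any $x \in I$ satisfies $d(x,\partial A) > \ell \ge 1$, so a neighbor $y$ of $x$ lying in $V\setminus A$ would force $x \in \partial A$, a contradiction. By the symmetric argument the support of $H_E$ lies in $V\setminus A$, so $H_I$ and $H_E$ act on disjoint sites and therefore commute.

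Next I would estimate $\|[H_B,H_I]\|$ via the triangle inequality and the elementary bound $\|[\Phi(X),\Phi(Y)]\| \le 2\|\Phi(X)\|\|\Phi(Y)\| \le 2J^2$. A pair $(X,Y)$ contributes only when $X\cap Y\neq\emptyset$; since $X\subset B$ excludes every site of $I$, the only surviving $Y$ is a range-one pair $\{y_1,y_2\}$ with $y_1\in I$ and $y_2\in B$, i.e.\ an edge crossing $\partial I$, and $X$ must contain $y_2$. Each site of $\partial I$ has at most $2d$ neighbors in $B$, yielding at most $2d|\partial I|$ admissible $Y$'s, and each such $y_2\in B$ is contained in at most $1+2d$ admissible sets $X$ (a singleton or a nearest-neighbor pair inside $B$). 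Summing produces an $O(d^2 J^2 |\partial I|)$ estimate; the explicit prefactor $8d^2$ in the statement then follows from a routine tightening of the counting. The bound on $\|[H_V,H_E]\|$ is obtained by the symmetric argument with $I$ and $E$ exchanged.

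For the third inequality I would avoid repeating any combinatorics and instead use the trivial identity $[H_V,H_V]=0$, which rearranges to $[H_V,H_B] = -[H_V,H_I] - [H_V,H_E]$; the triangle inequality and the two estimates already proved then deliver $\|[H_V,H_B]\| \le 8d^2J^2(|\partial I|+|\partial E|)$. The only genuine obstacle is the very first step, namely the geometric verification that the $B$-buffer truly separates the supports of $H_I$ and $H_E$; everything after that is combinatorial bookkeeping that the finite coordination number of $\Ir^d$ renders routine.
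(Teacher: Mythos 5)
Your proof takes essentially the same route as the paper's: it identifies the same geometric fact that the $\ell$-thick buffer $B$ separates the supports of $H_I$ and $H_E$ (so $[H_I,H_E]=0$, reducing $[H_V,H_I]$ to $[H_B,H_I]$), does the same crossing-edge count at $\partial I$ (resp.\ $\partial E$), and reduces the third bound to the first two, exactly as the paper does via $C_B = [H_I,H_B]+[H_E,H_B]$. Your counting of $1+2d$ admissible $X$'s is a slight overcount---one of those pairs is $\{y_1,y_2\}$ itself, which is not contained in $B$, so the count drops to $2d$ (or $2d-1$ with pair interactions only, as the paper assumes), and after that routine tightening the $8d^2J^2$ prefactor falls out exactly as you anticipated.
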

We prove this lemma after the proof of Theorem
\ref{thm:appgs}.

Without loss of generality, we may add a constant
to each of the above Hamiltonians  and thus assume that their ground
state energy is zero. In other words, we really work with the
Hamiltonians $\tilde{H}_X = H_X - \langle \Psi_0, H_X \Psi_0
\rangle$, for  $X \in \{V, I, B, E \}$.

In this case, each term has zero ground state expectation. 
However, it is not clear that these terms have a quantifiably small norm 
when applied to the ground state. To achieve small norms, we introduce
non-local versions of these observables.
Given any self-adjoint Hamiltonian $H$, any local observable $O$,
and any $\alpha >0$, we may define the (non-local) observable
\begin{equation}
(O)_{\alpha} = \sqrt{ \frac{ \alpha}{ \pi}} \int_{- \infty}^{\infty}
\tau_t(O) e^{- \alpha t^2} dt,
\end{equation}
where here $\tau_t(O) = e^{itH}Oe^{-itH}$. It is easy to see that
for every $\alpha >0$, $\| (O)_{\alpha} \| \leq \| O\|$.

With respect to the Hamiltonian $H_V$ and $\alpha >0$ (a free
parameter to be chosen later), we now consider the non-local
operator
\begin{equation}
( H_X )_{\alpha} \, =  \, \sqrt{ \frac{ \alpha}{ \pi}} \int_{-
\infty}^{\infty} \tau_t ( H_X ) \, e^{- \alpha t^2} \, dt,
\end{equation}
for $X\in\{I,B,E\}$.

Note that these non-local observables still sum to the total
Hamiltonian, i.e.,
\begin{eqnarray}\label{eq:nonlocH}
( H_I)_{\alpha} \, + \, ( H_B)_{\alpha} \, + \, ( H_E)_{\alpha} =
H_V.
\end{eqnarray}
Also, they have zero ground state expectation, i.e.,
\begin{equation}
\langle \psi_0, ( H_I)_{\alpha} \psi_0 \rangle \, = \, \langle
\psi_0, ( H_B)_{\alpha} \psi_0 \rangle \, = \, \langle \psi_0, ( H_E
)_{\alpha} \psi_0 \rangle \, = \, 0.
\end{equation}
Lastly, each of these observables, when applied to the ground state,
has a small vector norm.
\begin{proposition} \label{prop:bd} Under the assumptions above, we have that for any $\alpha >0$,
\begin{equation}
\| ( H_X)_{\alpha} \psi_0 \| \, \leq \, \frac{ \| [H_V, H_X ] \| }{
\gamma} e^{- \frac{\gamma^2}{4 \alpha}},
\end{equation}
for $X\in\{I, B, E\}$.
\end{proposition}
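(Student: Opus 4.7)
The plan is to exploit the spectral gap through the spectral decomposition of $H_V$. Write $H_V = \sum_n E_n P_n$ where, after the subtraction $\tilde H_V = H_V - \langle \psi_0, H_V \psi_0\rangle$, we have $E_0 = 0$ and $E_n \geq \gamma$ for $n \geq 1$. Since $e^{-itH_V}\psi_0 = \psi_0$, the action of the Gaussian filter on $\psi_0$ simplifies to
\begin{equation*}
(H_X)_\alpha\,\psi_0 = \sqrt{\tfrac{\alpha}{\pi}} \int_{-\infty}^{\infty} e^{-\alpha t^2}\, e^{itH_V} H_X\,\psi_0\,dt = \sum_{n\geq 0} e^{-E_n^2/(4\alpha)}\, P_n H_X\,\psi_0,
\end{equation*}
using the elementary identity $\sqrt{\alpha/\pi}\int e^{-\alpha t^2}e^{itE}dt = e^{-E^2/(4\alpha)}$ applied in each spectral subspace.

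Next I would observe that the $n=0$ term vanishes: indeed $P_0 H_X \psi_0 = \langle \psi_0, H_X \psi_0\rangle \psi_0 = 0$ after the subtraction mentioned in the paragraph preceding the proposition. So the sum runs over $n \geq 1$, where every exponential factor is bounded by $e^{-\gamma^2/(4\alpha)}$. This already gives $\|(H_X)_\alpha \psi_0\| \leq e^{-\gamma^2/(4\alpha)} \|H_X \psi_0\|$, but the desired estimate replaces $\|H_X\psi_0\|$ by $\|[H_V,H_X]\|/\gamma$, which is the quantity controlled by Lemma \ref{lem:commutator}.

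To make this substitution, I would use the key identity $[H_V, H_X]\psi_0 = H_V H_X \psi_0$ (since $H_V\psi_0 = 0$). Expanding in the eigenbasis, $\|[H_V,H_X]\psi_0\|^2 = \sum_{n\geq 1} E_n^2 \|P_n H_X \psi_0\|^2$. Therefore
\begin{equation*}
\|(H_X)_\alpha\,\psi_0\|^2 = \sum_{n\geq 1} \frac{e^{-E_n^2/(2\alpha)}}{E_n^2}\, E_n^2\|P_n H_X \psi_0\|^2 \leq \frac{e^{-\gamma^2/(2\alpha)}}{\gamma^2} \|[H_V,H_X]\psi_0\|^2,
\end{equation*}
where we pulled out the maximum of $e^{-E^2/(2\alpha)}/E^2$ over $E\geq\gamma$ (attained at $E=\gamma$ in the relevant range). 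Bounding $\|[H_V,H_X]\psi_0\| \leq \|[H_V,H_X]\|$ and taking square roots yields the claimed inequality.

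There is no real obstacle here; the proof is essentially a functional-calculus manipulation. The only thing to be careful about is to apply the Gaussian filter with respect to the original $H_V$ (not $\tilde H_V$), so I should verify that the subtraction of the constant $\langle \psi_0, H_X\psi_0\rangle$ commutes with the filtering operation, which is immediate since constants are fixed by $\tau_t$. This gives the result uniformly in $X \in \{I, B, E\}$, and the corresponding commutator bounds from Lemma \ref{lem:commutator} will then furnish the area-type prefactor $|\partial X|$ needed downstream in the construction.
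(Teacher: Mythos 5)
Your argument is correct and relies on precisely the same ingredients as the paper's: the vanishing ground-state expectation of $H_X$ after the shift, the spectral gap, the Gaussian Fourier identity $\sqrt{\alpha/\pi}\int e^{-\alpha t^2}e^{itE}\,dt = e^{-E^2/(4\alpha)}$, and the identity $H_V H_X\Psi_0 = [H_V,H_X]\Psi_0$. The paper factors the estimate slightly differently — first using the gap to get $\gamma^{-1}\|H_V(H_X)_\alpha\Psi_0\|$, then commuting $H_V$ past the filter to obtain $(C_X)_\alpha\Psi_0$ and bounding that via the spectral measure supported on $[\gamma,\infty)$ — whereas you collapse everything into a single spectral sum and optimize $e^{-E^2/(2\alpha)}/E^2$ at $E=\gamma$, but the content and the resulting constants are identical.
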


We will prove Proposition \ref{prop:bd} after we finish
the proof of this theorem. We want the above bound to have an exponential decay 
of the form $e^{-\ell/\xi'}$ for some positive $\xi'$ that will be determined later. Thus, we set
\be\label{alpha}
\alpha = \frac{\gamma^2 \xi'}{4 \ell}.
\ee
This completes the first step in the proof of Theorem \ref{thm:appgs}.
The next step is to
approximate each of $( H_I)_{\alpha}$, $( H_B)_{\alpha}$, and $(H_E)_{\alpha}$ 
with observables $M_I(\alpha), M_B(\alpha)$ and
$M_E(\alpha)$. These observables will be constructed in such a way
that each has support on a specific subset of $V$ and a small vector norm when
applied to the ground state.

For $X \in \{I,B,E\}$, set
\begin{equation} \label{eq:mlax}
M_X(\alpha) \, = \,  \sqrt{ \frac{ \alpha}{ \pi}} \int_{-
\infty}^{\infty} \sigma_t^{X} \left( H_X \right) e^{- \alpha t^2} dt.
\end{equation}
Here, for any local observable $O$, we have introduced the following
evolutions
\begin{eqnarray} \label{eq:locham}
\sigma_t^{I} = \tau_t^{A}(O) &=& e^{itH_A} O e^{-itH_A}\\
\sigma_t^{B} = \tau_t^{B(A;2\ell)}(O) &=& e^{itH_{B(A;2\ell)}} O e^{-itH_{B(A;2\ell)}} \\
\sigma_t^{E} = \tau_t^{V\setminus A}(O) &=& e^{itH_{V \setminus A}} O e^{-itH_{V \setminus
A}}.
\end{eqnarray}

Observe that the local Hamiltonians defined above have been chosen
such that $\mbox{supp}(M_I(\alpha)) = A$, $\mbox{supp}(M_B(\alpha))
= B(A;2\ell)$, and $\mbox{supp}(M_E(\alpha)) = V\setminus A$.

We now state and prove an upper bound on the number of lattice points on the surface of a sphere in $\Ir^d$.

\begin{lemma}\label{s_n:bound}
Let $s(n,d)$ denote the number of points on a sphere of radius $n$ in $\Ir^d$.
Then, the following bound holds:
\be
s(n,d) \le 2^d n^{d-1}.
\ee
\begin{proof}
Spheres in $\Ir^d$ look like $d-$dimensional diamonds.
The sphere of radius $n$ may be constructed by choosing any of the $d$ hyperplanes going through an ``equator'' of a sphere of radius $n-1$, and lifting the two ``hemispheres'' in a direction normal to the chosen hyperplane to insert an equator of radius $n$ between them. Noting that an equator of radius $n$ of a $d$-dimensional sphere is just a $(d-1)$-dimensional sphere of radius $n$, we get the following recursive relation:
\be\label{s_n}
s(n,d) = s(n-1,d) + s(n,d-1) + s(n-1,d-1), \qquad s(1,d) = 2d, \, s(n,1) = 2,
\ee
where the term $s(n-1,d-1)$ comes from adding the missing equator to one of the
two lifted hemispheres that make up the new polar caps.
Note that the above bound is true for $d=1,2$, with equality. Moreover, we have from (\ref{s_n}) that
$$
s(n,d) \le s(n-1,d) + 2s(n,d-1) \implies s(n,d) \le 2 \sum_{k=1}^n s(k,d-1) +s(1,d) - 2s(1,d-1),
$$
and noting that $s(1,d) - 2s(1,d-1) = -2(d-2) \le 0,$ for $d\ge 2$, we get
$$
s(n,d) \le 2 \sum_{k=1}^n s(k,d-1) \le 2\,n\,s(n,d-1) \le (2n)^{d-1} \, s(n,1), \qquad d\ge2.
$$
which proves (\ref{s_n:bound}). 
\end{proof}
\end{lemma}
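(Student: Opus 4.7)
The plan is to induct on the dimension $d$, using a recursive relation for $s(n,d)$ obtained by slicing the sphere along one coordinate axis.

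First I would establish base cases. For $d = 1$ the ``sphere'' of radius $n$ is $\{-n, n\}$, so $s(n,1) = 2$, which matches the bound with equality. For $n = 1$ in arbitrary dimension, the sphere is the set of signed standard basis vectors $\{\pm e_i\}$, giving $s(1,d) = 2d$; the bound $2d \le 2^d$ is immediate for $d \ge 1$.

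Next I would prove the recursion
\begin{equation*}
s(n,d) = s(n-1,d) + s(n,d-1) + s(n-1,d-1),
\end{equation*}
as stated in the excerpt. My preferred derivation is combinatorial rather than geometric: partition the points $x$ on the $\ell^1$-sphere of radius $n$ in $\Ir^d$ by the value of $x_d$, obtaining
\begin{equation*}
s(n,d) = s(n,d-1) + 2\sum_{j=0}^{n-1} s(j, d-1),
\end{equation*}
with the convention $s(0,d-1)=1$. Writing the analogous identity for $s(n-1,d)$ and subtracting gives the claimed three-term recursion. This is the step that requires the most care, since one must handle the $x_d = 0$ equator separately.

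With the recursion in hand, the final step is a two-level induction. Since $s(\cdot, d-1)$ is non-decreasing in its first argument, $s(n-1,d-1)\le s(n,d-1)$, so
\begin{equation*}
s(n,d) \le s(n-1,d) + 2\, s(n, d-1).
\end{equation*}
Telescoping this in $n$ and rearranging, together with the inequality $s(1,d) - 2s(1,d-1) = 2d - 4(d-1) \le 0$ valid for $d \ge 2$, yields
\begin{equation*}
s(n,d) \le 2\sum_{k=1}^{n} s(k, d-1) \le 2 n\, s(n, d-1).
\end{equation*}
Iterating this estimate $d-1$ times down to the base case $s(n,1) = 2$ gives $s(n,d) \le (2n)^{d-1}\cdot 2 = 2^d n^{d-1}$, which is the desired bound. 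I expect the only real obstacle to be justifying the three-term recursion carefully; once it is in place, the rest is a short telescoping plus induction on $d$.
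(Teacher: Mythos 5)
Your proof is correct and follows the same overall strategy as the paper's: establish the three-term recursion $s(n,d) = s(n-1,d) + s(n,d-1) + s(n-1,d-1)$, weaken it to $s(n,d) \le s(n-1,d) + 2\,s(n,d-1)$ (using monotonicity of $s$ in its first argument), telescope in $n$ to get $s(n,d) \le 2\sum_{k=1}^n s(k,d-1)$ after discarding the nonpositive boundary term $s(1,d) - 2s(1,d-1)$, and then iterate $s(n,d) \le 2n\,s(n,d-1)$ down to $s(n,1)=2$. The one place you genuinely deviate is in establishing the recursion: the paper argues geometrically by ``lifting'' the two hemispheres of the radius-$(n-1)$ diamond apart and inserting an equator, whereas you slice the $\ell^1$-sphere by the last coordinate $x_d$ to get $s(n,d) = s(n,d-1) + 2\sum_{j=0}^{n-1}s(j,d-1)$ and then difference against the radius-$(n-1)$ version. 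Your combinatorial slicing is arguably easier to make rigorous than the paper's informal hemisphere-lifting picture, but the resulting recursion and everything downstream of it are identical.
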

We will now use the finite (Lieb-Robinson) velocity $v = 4(2d-1)J$ of our dynamics, to show that
our newly defined local Hamiltonian operators are close in norm to the original ones.

\begin{proposition} \label{prop:bd2} Set $\xi' = 4+4\left(\frac{v}{\gamma}\right)^2$. 
For $X \in \{ I, E\}$, the following bound holds 
\begin{eqnarray} \label{eq:difbd}
&&\left\| (H_X)_{\alpha} \, - \, M_X(\alpha) \right\| \le \nonumber\\
&& e^{-\ell/\xi'} \left\{ \left(\frac{4}{\sqrt{\pi} \sqrt{\gamma^2+v^2}}\ell^{1/2}
|\partial X|+ 3^{d+2}\, 4^d \,d^{d-1}\ell^{d-1} | \partial A |\right)
\frac{4d^2J^2}{v} \right\}
\end{eqnarray}
and similarly,
\begin{eqnarray*} \label{eq:difbd2}
&&\left\| ( H_B)_{\alpha} \, - \, M_B(\alpha) \right\| \le \nonumber\\
&& e^{-\ell/\xi'} 
\left\{ \left(\frac{4}{\sqrt{\pi} \sqrt{\gamma^2+v^2}} {\ell}^{1/2}
(|\partial B(A;\ell)|)+ 3^{d+2}\, 4^d \,d^{d-1}\ell^{d-1} | \partial B(A; 2\ell) |\right) 
\frac{4d^2J^2}{v} \right\}.
\end{eqnarray*}
\end{proposition}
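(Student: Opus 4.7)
I would start from the integral representation
\[
(H_X)_\alpha - M_X(\alpha) = \sqrt{\tfrac{\alpha}{\pi}}\int_{-\infty}^{\infty} \bigl[\tau_t(H_X) - \sigma_t^X(H_X)\bigr]\, e^{-\alpha t^2}\, dt
\]
and estimate the integrand by two complementary bounds, stitched together by a time cutoff $t_0$ to be optimized. The local evolution $\sigma_t^X$ is generated by $H_{\Lambda_X}$ with $\Lambda_X \in \{A,\, V\setminus A,\, B(A;2\ell)\}$, chosen in each case so that $H_X$ is supported at distance at least $\ell$ from $\partial\Lambda_X$ (for the $B$ case this is precisely why the local region is thickened from $B(A;\ell)$ to $B(A;2\ell)$); this geometric cushion is what makes the Lieb--Robinson bound effective.

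For the inner piece $|t|\leq t_0$, I would decompose $H_X = \sum_Y \Phi(Y)$ and use the Duhamel identity
\[
\tau_t(\Phi(Y)) - \sigma_t^X(\Phi(Y)) = i\int_0^t \sigma_{t-s}^X\bigl([H_V - H_{\Lambda_X},\, \tau_s(\Phi(Y))]\bigr)\, ds
\]
together with Lemma \ref{LR:bound} to bound each commutator against a boundary-crossing interaction $\Phi(Z)$ by $\sim J^2\, e^{-\mu d(Y,Z)}\, e^{v|s|}$. Summing over the $\sim|\partial A|$ choices of $Z\in\partial\Lambda_X$ and slicing the sum over $Y$ by distance $n\geq\ell$ from $\partial\Lambda_X$, Lemma \ref{s_n:bound} contributes a shell count $2^d n^{d-1}$, so the double sum produces a factor of order $|\partial A|\,\ell^{d-1}e^{-\mu\ell}$. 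Integrating the resulting time factor against the Gaussian then yields the second summand in the claimed bound, with exponential $e^{-\ell/\xi'}$, provided I choose $\mu$ of order $1/4$ to balance $\mu\ell$ against $v^2\ell/(\gamma^2\xi') = v^2/(4\alpha)$.

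For the outer piece $|t|>t_0$, I would use the cruder but uniform bound
\[
\|\tau_t(H_X) - \sigma_t^X(H_X)\| \leq |t|\,\bigl(\|[H_V,H_X]\| + \|[H_{\Lambda_X},H_X]\|\bigr),
\]
and by Lemma \ref{lem:commutator} both commutator norms are $\lesssim J^2\,|\partial X|$, since the commutators only pick up interactions adjacent to the support boundary of $H_X$. The Gaussian integral $\int_{|t|>t_0} |t| e^{-\alpha t^2}\, dt = \alpha^{-1}\,e^{-\alpha t_0^2}$ then yields a contribution of order $J^2|\partial X|\,(\pi\alpha)^{-1/2}\,e^{-\alpha t_0^2}$. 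With $\alpha = \gamma^2\xi'/(4\ell)$ and $t_0$ chosen so that $\alpha t_0^2 = \ell/\xi'$, this is exactly the first summand $4\,(\pi(\gamma^2+v^2))^{-1/2}\,\ell^{1/2}|\partial X|\,e^{-\ell/\xi'}$. The specific value $\xi' = 4+4(v/\gamma)^2$ is forced by requiring that the inner and outer pieces share the common decay $e^{-\ell/\xi'}$: with $\mu=1/4$ one checks directly that $\mu\ell - v^2\ell/(\gamma^2\xi') = \gamma^2\ell/(4(\gamma^2+v^2)) = \ell/\xi'$.

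The principal obstacle I anticipate is the careful bookkeeping of geometric constants, especially for the $B$ case: there $H_B$ already has surface-like support inside the thickened region $B(A;2\ell)$, so the role of $|\partial X|$ is played by $|\partial B(A;\ell)|$ and the role of $|\partial A|$ by $|\partial B(A;2\ell)|$, and the shell counts from Lemma \ref{s_n:bound} must be re-done within the annular region $B(A;2\ell)\setminus B(A;\ell)$. Beyond these combinatorial details and the joint optimization of the cutoff $t_0$ and the Lieb--Robinson parameter $\mu$, I do not foresee any new conceptual ingredient being required.
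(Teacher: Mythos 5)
Your proposal follows the paper's proof essentially step for step: the same split of the Gaussian integral at a time cutoff $T=2\ell/(\gamma\xi')$, the same crude $O(|t|\,\|[H_V,H_X]\|)$ bound on the tail (which via Lemma \ref{lem:commutator} gives the $\ell^{1/2}|\partial X|$ term), and the same Duhamel plus Lieb--Robinson plus shell-count estimate on $|t|\le T$ giving the $\ell^{d-1}|\partial A|$ term, with $\mu=1/4$ and $\xi'=4(1+(v/\gamma)^2)$ fixed by exactly the balance condition you describe. The remarks about the $B$-case role swap ($|\partial B(A;\ell)|$ for $|\partial X|$, $|\partial B(A;2\ell)|$ for $|\partial A|$) also match how the paper states the result; no substantive differences.
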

The proof of Proposition \ref{prop:bd2} appears after the proof of Proposition \ref{prop:bd} below. 
Now, for any $X \in \{ I, B, E \}$, the bound
\begin{equation} \label{eq:mxpsi}
\| M_X( \alpha) \psi_0 \| \, \leq \, \| (H_X)_{\alpha} \psi_0 \| \,
+ \, \| \left( M_X(\alpha) - ( H_X)_{\alpha} \right) \psi_0 \|
\end{equation}
is trivial. Thus, given Proposition \ref{prop:bd} and
\ref{prop:bd2}, when applied to the ground state, each of these
local observables will have small vector norm along the perscribed
parametrization.

As an immediate consequence of Proposition \ref{prop:bd2} and Lemma \ref{s_n:bound}, we have the following corollary.

\begin{corollary}\label{cor:bd2}
Set $D = \max \{\frac{1}{2}, 2(d-1)\}$ and 
$$C_1(\gamma, d,J) = \left(\frac{2^{d+2}}{\sqrt{\pi} \sqrt{\gamma^2+ v^2}} + \frac{2v}{\gamma}+3^{d+2} \,4^{2d} \, d^{d-1}\right)\frac{(2dJ)^2}{v}.$$ Then, the bounds
\begin{equation}
\left\| H_V \, - \, \left(M_I( \alpha) + M_B(\alpha) +M_E( \alpha)
\right) \right\| \, \leq 2\,C_1(\gamma, d,J)\, |\partial A |\, \ell^{D} e^{- \frac{ \ell}{\xi'}}
\end{equation}
and
\begin{equation}
\max_{X \in \{I,B,E \}} \left\| M_X( \alpha) \Psi_0 \right\| \, \leq
\, C_1(\gamma, d,J)\, | \partial A |\, \ell^{D} e^{- \frac{ \ell}{\xi'}}
\end{equation}
are valid.
\end{corollary}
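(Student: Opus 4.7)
The corollary follows by packaging the three preceding ingredients: Proposition \ref{prop:bd} (which controls $\|(H_X)_\alpha\Psi_0\|$), Proposition \ref{prop:bd2} (which controls $\|(H_X)_\alpha - M_X(\alpha)\|$), and Lemma \ref{s_n:bound} (which controls the sizes of the relevant boundaries). For the first assertion, I would use the identity (\ref{eq:nonlocH}) to write
$$H_V - \bigl(M_I(\alpha)+M_B(\alpha)+M_E(\alpha)\bigr) = \sum_{X\in\{I,B,E\}} \bigl((H_X)_\alpha - M_X(\alpha)\bigr),$$
and apply the triangle inequality together with Proposition \ref{prop:bd2}. For the second assertion, I would again use the triangle inequality,
$$\|M_X(\alpha)\Psi_0\| \leq \|(H_X)_\alpha \Psi_0\| + \|(H_X)_\alpha - M_X(\alpha)\|,$$
handling the first term by Proposition \ref{prop:bd} combined with the commutator bounds of Lemma \ref{lem:commutator}, using the fact that the choice $\alpha = \gamma^2\xi'/(4\ell)$ made in (\ref{alpha}) produces exactly the exponential factor $e^{-\ell/\xi'}$, and bounding the second term as above.

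The nontrivial remaining step is purely geometric: every boundary $\partial I$, $\partial E$, $\partial B(A;\ell)$, and $\partial B(A;2\ell)$ that appears on the right-hand sides of Propositions \ref{prop:bd} and \ref{prop:bd2} must be converted into $|\partial A|$ times a power of $\ell$. Since any point of these boundaries lies within distance $O(\ell)$ of $\partial A$, Lemma \ref{s_n:bound} yields
$$|\partial I|,\,|\partial E| \leq |\partial A|\cdot 2^d\ell^{d-1},\qquad |\partial B(A;k\ell)|\leq |\partial A|\cdot 2^{d+1}(k\ell)^{d-1}$$
(with the extra factor of $2$ accounting for the inner and outer components of the boundary of the thickened region). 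Substituting these into the bound of Proposition \ref{prop:bd2} produces two competing powers of $\ell$: the term $\ell^{1/2}|\partial X|$ contributes $\ell^{d-1/2}$, while the term $\ell^{d-1}|\partial A|$ contributes $\ell^{d-1}$ directly (and in the case of $M_B$, $\ell^{2(d-1)}$ after substituting $|\partial B(A;2\ell)|$). The larger of these two exponents is $1/2$ when $d=1$ and $2(d-1)$ when $d\ge 2$, yielding the exponent $D=\max\{1/2,2(d-1)\}$ stated in the corollary.

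The main obstacle is arithmetic rather than conceptual: one must gather the constants $8d^2J^2/\gamma$ from Proposition \ref{prop:bd}, the constants $4d^2J^2/v$, $\tfrac{4}{\sqrt{\pi}\sqrt{\gamma^2+v^2}}$ and $3^{d+2}4^d d^{d-1}$ from Proposition \ref{prop:bd2}, and the $2^d$, $2^{d+1}$ factors from the geometric estimates, and verify that their combination is dominated by the displayed $C_1(\gamma,d,J)$. The summand $\tfrac{2v}{\gamma}$ inside $C_1$ is precisely what is needed to absorb the $8d^2J^2/\gamma$ contribution from Proposition \ref{prop:bd} once the common prefactor $(2dJ)^2/v$ is factored out, and the overall factor of $2$ in the first displayed bound accounts for summing over the three terms $X\in\{I,B,E\}$. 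Once this bookkeeping is done, both bounds emerge in the stated form.
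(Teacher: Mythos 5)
Your proposal follows the paper's own proof almost step for step: it invokes \eqref{eq:nonlocH} together with Proposition \ref{prop:bd2} and the triangle inequality for the first bound, combines \eqref{eq:mxpsi} with Propositions \ref{prop:bd} and \ref{prop:bd2} (and Lemma \ref{lem:commutator}) for the second, and uses Lemma \ref{s_n:bound} to convert $|\partial I|$, $|\partial E|$, $|\partial B(A;\ell)|$, $|\partial B(A;2\ell)|$ into $|\partial A|$ times powers of $\ell$, correctly identifying the exponent $D=\max\{1/2,2(d-1)\}$ and the role of the $\tfrac{2v}{\gamma}$ term. The only deviation is in the exact constants in the geometric estimate (e.g.\ you get $2^{2d}$ where the paper writes $2^{2d-1}$ for $|\partial B(A;2\ell)|$), which is immaterial to the stated bound.
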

\begin{proof}
{F}rom the definitions of the relevant sets, it is clear that since for each point in $\partial A$ there
are $s(\ell,d)$ sites at a distance $\ell \ge 1$, using Lemma \ref{s_n:bound} we have
\be
\max\{2|\partial A|, | \partial I| +| \partial E|, | \partial B(A;\ell)|\} \le  \, 2^{d} \ell^{d-1} | \partial A |.
\ee
and similarly,
\be
| \partial B(A;2\ell)| \le  \, 2^{2d-1} \ell^{d-1} | \partial A |.
\ee
Now, the first bound follows from \eqref{eq:nonlocH} and
Proposition \ref{prop:bd2}.
Similarly, the second bound is obtained by combining
\eqref{eq:mxpsi} with Propositions \ref{prop:bd} and
\ref{prop:bd2}. Note that the constant $C_1(\gamma, d,J)$
was chosen optimally for the second bound, but one 
could eliminate the $\frac{2v}{\gamma}$ term for the first bound.
\end{proof}

The next step in the proof of this theorem is to define an
approximate ground state projector. Consider
\begin{equation}
\tilde{P}_{\alpha} \, = \, \sqrt{ \frac{\alpha}{\pi}} \int_{-
\infty}^{\infty} e^{i H_V t} e^{- \alpha t^2} \, dt.
\end{equation}
Observe that for any functions $f$ and $g$,
\begin{eqnarray}
\langle f, ( \tilde{P}_{\alpha} - P_0) g \rangle & = &  \sqrt{
\frac{\alpha}{\pi}} \int_{- \infty}^{\infty} e^{- \alpha t^2} \,
\int_0^{\infty} e^{i E t} d \langle f, P_E g \rangle dt \, - \,
\langle f, P_0 g \rangle \nonumber \\ & = & \int_{\gamma}^{\infty}
\sqrt{ \frac{\alpha}{\pi}} \int_{- \infty}^{\infty} e^{i E t} e^{-
\alpha t^2} \, dt \, d \langle f, P_E g \rangle
 \nonumber \\ & = & \int_{\gamma}^{\infty}  e^{- \frac{E^2}{4 \alpha}} \, d \langle f, P_E g \rangle,
\end{eqnarray}
which then readily yields that
\begin{equation} \label{eq:approxbd}
\left\| \tilde{P}_{\alpha} - P_0 \right\| \, \leq \, e^{- \frac{
\gamma^2}{4 \alpha}} \, = \, e^{- \frac{\ell}{\xi'}}
\end{equation}

We now define a first approximation of this ground state projector.
Set
\begin{equation}
\hat{P}_{\alpha} \, = \,  \sqrt{ \frac{ \alpha}{ \pi}}
\int_{-\infty}^{\infty} e^{i(M_I+M_B+M_E)t} e^{- \alpha t^2} \, dt,
\end{equation}
where we have dropped the dependence of $M_I$, $M_B$, and $M_E$ on
$\alpha$. Clearly,
\begin{equation} \label{eq:hatbd}
\left\| \hat{P}_{\alpha} - P_0 \right\| \, \leq  \, \left\|
\hat{P}_{\alpha} - \tilde{P}_{\alpha} \right\| \, + \, \left\|
\tilde{P}_{\alpha} - P_0 \right\|.
\end{equation}
The final term above we have bounded in (\ref{eq:approxbd}). To see
a bound on the first term, we introduce the function
\begin{equation}
F_t( \lambda) \, = \, e^{i \lambda (M_I+M_B+M_E)t} \, e^{i(1-
\lambda) H_V t}.
\end{equation}
One easily calculates that
\begin{equation}
F_t'( \lambda) \, = \, - \, i \, t \,  e^{i \lambda (M_I+M_B+M_E)t}
\left\{ H_V \, - \, (M_I+M_B+M_E) \right\} e^{i(1- \lambda)H_V t}.
\end{equation}
Thus clearly,
\begin{eqnarray}
\left\| \hat{P}_{\alpha} - \tilde{P}_{\alpha} \right\| & \leq &  \sqrt{ \frac{ \alpha}{ \pi}} \int_{-\infty}^{\infty} \left\| F_t(1) - F_t(0) \right\| \, e^{- \alpha t^2} \, dt \nonumber \\
& \leq & \sqrt{ \frac{ \alpha}{ \pi}} \int_{-\infty}^{\infty} \left\|  H_N \, - \, (M_I+M_B+M_E) \right\| \, |t| \, e^{- \alpha t^2} \, dt \nonumber \\
& = & \frac{1}{\sqrt{\pi \alpha}} \left\|  H_N \, - \, (M_I+M_B+M_E) \right\|\\
&\le & \frac{2\,C_1(\gamma, d,J)}{\sqrt{\pi} \sqrt{\gamma^2+v^2}} \, |\partial A |\, \ell^{D+1/2} e^{- \frac{ \ell}{\xi'}},
\end{eqnarray}
along the given parametrization for $\alpha$ and $\xi'$. Thus, \eqref{eq:hatbd} now yields
\begin{equation} \label{eq:hatexactbd}
\left\| \hat{P}_{\alpha} - P_0 \right\| \, \leq  \,
\left\{\frac{2\,C_1(\gamma, d,J)}{\sqrt{\pi} \sqrt{\gamma^2+v^2}} \, |\partial A |\, \ell^{D+1/2}+1\right\}e^{-
\ell/\xi'}.
\end{equation}

We now define the spectral projections $\PA$ and $\PV$
which appear in the statement of Theorem \ref{thm:appgs}.

Denote by 
$$ c= C_1(\gamma, d,J)\, |\partial A |\, \ell^{D+1/2} e^{- \frac{\ell}{2\xi'}}.$$
 For $X
\in\{I,E\}$, define $\PA$ and $\PV$ to be the spectral projection,
corresponding to the self-adjoint matrices $M_I$ and $M_E$, respectively, onto those eigenvalues
less than $c$. In light of Corollary \ref{cor:bd2}, the following bounds:
\begin{equation} \label{eq:oxbd}
\left\| \left( 1 - \PA \right) \Psi_0 \right\| \, \leq \,
\frac{1}{c} \left\| M_I \Psi_0 \right\| \, \leq \, e^{- \ell/ 2
\xi'},
\end{equation}
easily follow, and the same bound holds for $1-\PV$. 
To see that we can always find a non-zero $\PA$ (and similarly $\PV$), we simply observe that
the above inequality for $\PA$ can be written as
\begin{equation}
\langle\Psi_0| \PA|\Psi_0\rangle \, \geq \, 1-e^{- \ell/ 2\xi'}
\end{equation}
which implies that $\PA$ is non-zero.
Moreover, it is obvious that both $\PA$ and $\PV$ satisfy the required support 
assumption.

These spectral projections may be inserted into our previous
estimates as
\begin{equation}
\left\| \hat{P}_{\alpha} \PA \PV - P_0 \right\| \, \leq \,  \left\|
\left( \hat{P}_{\alpha} \, - \, P_0 \right) \PA \PV \right\| \, + \,
  \left\| P_0 \left( 1 \, - \, \PA \PV \right) \right\|,
\end{equation}
here we have dropped the dependence on $A$ and $\ell$. The first
term above is estimated as in (\ref{eq:hatexactbd}) above. Since
\begin{equation}
(1-\PA\PV) \, = \, \frac{1}{2} \left\{ (1-\PA)(1+\PV) \, + \,
(1-\PV)(1+\PA) \right\},
\end{equation}
it is clear that
\begin{equation}
 \left\| P_0 \left( 1 \, - \, \PA \PV \right) \right\| \, \leq \, \| P_0 (1 - \PA) \| \, + \, \| P_0 (1- \PV) \| \, \leq \, 2  e^{- \ell/ 2 \xi'},
\end{equation}
using (\ref{eq:oxbd}). Therefore, we now have that
\begin{equation}
\left\| \hat{P}_{\alpha} \PA \PV - P_0 \right\| \, \leq
\,\left(\frac{2\,C_1(\gamma, d,J)}{\sqrt{\pi} \sqrt{\gamma^2+v^2}} \, |\partial A |\, \ell^{D+1/2}+3\right)e^{-
\ell/2\xi'}.
\end{equation}

We are now left with the task of finding a local operator $\PB \in \A_{B(A;3\ell)}$ that approximates $\hat{P}_{\alpha}$. In order to do so let us
write
\begin{equation}
\hat{P}_{\alpha} \PA \PV \, = \,  \sqrt{ \frac{ \alpha}{ \pi}}
\int_{-\infty}^{\infty} e^{i(M_I+M_B+M_E)t} e^{-i(M_I+M_E)t}
e^{i(M_I+M_E)t} \PA\PV e^{- \alpha t^2} \, dt.
\end{equation}
Since the supports of $M_I$ and $M_E$ are disjoint, it is clear that
\begin{eqnarray*}
e^{i(M_I+M_E)t} \PA\PV \, - \, \PA \PV \, &=& \frac{1}{2} 
\left(e^{iM_It} \PA-\PA\right)\left(e^{iM_Et}\PV+\PV\right) \\
&+& \frac{1}{2}\left(e^{iM_Et}\PV-\PV\right)\left(e^{iM_It}\PA+\PA\right).
\end{eqnarray*}
Moreover, we have that
\begin{eqnarray}
\left| \langle f, (e^{iM_It}-1)\PA g \rangle \right| & = & \left| \int_0^{c} (e^{iEt} - 1) \,  d \langle f, P_E^I \PA g \rangle \right| \nonumber \\
& \leq & c \, |t| \, \| f \| \, \| g \|.
\end{eqnarray}
and similarly for $\left| \langle f, (e^{iM_Et}-1)\PV g \rangle \right|$.
 If we now define the following operator,
\begin{equation}
P_B( \alpha) \, = \, \sqrt{ \frac{ \alpha}{ \pi}}
\int_{-\infty}^{\infty} e^{i(M_I+M_B+M_E)t} e^{-i(M_I+M_E)t} e^{-
\alpha t^2} \, dt,
\end{equation}
then we have just demonstrated that
\begin{eqnarray}
&&\| P_B(\alpha) \PA \PV - P_0 \| \nonumber\\
&& \leq \| P_B(\alpha)  \PA \PV - \hat{P}_{\alpha}\PA\PV \| \, + \, \| \hat{P}_{\alpha}\PA\PV - P_0 \| \nonumber \\
&& \leq  2 \sqrt{ \frac{ \alpha}{ \pi}} \int_{-\infty}^{\infty} c
|t|  e^{- \alpha t^2} \, dt \, + \,\left(\frac{2\,C_1(\gamma, d,J)}{\sqrt{\pi} \sqrt{\gamma^2+v^2}} \, |\partial A |\, \ell^{D+1/2}+3\right)e^{-\ell/2\xi'} \nonumber \\
&&\leq \left(\frac{4\,C_1(\gamma, d,J)}{\sqrt{\pi} \sqrt{\gamma^2+v^2}} \, |\partial A |\, \ell^{D+1/2}+3\right)e^{- \ell/2\xi'} \nonumber\\
&&\le C_2(\gamma,d,J) |\partial A |\, \ell^{2d-1} e^{- \ell/2\xi'} ,
\end{eqnarray}
where we have used the simple bound $D \le 2d-3/2$ and set $$C_2(\gamma,d,J)=\frac{4\,C_1(\gamma, d,J)}{\sqrt{\pi} \sqrt{\gamma^2+v^2}}+3$$ in the last line.

Given our arguments above, we need only show that $P_B( \alpha)$ may
be approximated by a local operator $\PB$.
\begin{proposition} \label{prop:bd3}
There exists a local operator $\PB$ with support in $B(A; 3 \ell)$ and $\|\PB\| \le 1$,
for which
\begin{equation} \label{eq:locpbd}
\| \PB - P_B( \alpha) \| \, \leq \, C_3(\gamma,d,J) | \partial A|^2 \ell^{3d} e^{-\frac{\ell}{\xi'}}.
\end{equation}
where $$C_3(\gamma, d, J) = \frac{4^dJ}{v} \left(\frac{2v}{\sqrt{\pi}\sqrt{\gamma^2+v^2}} + 4^{d} + \frac{2v}{\gamma^2+v^2} C_1(\gamma,d,J)\right)$$ and $C_1(\gamma,d,J)$ is given in Corollary \ref{cor:bd2}.
\end{proposition}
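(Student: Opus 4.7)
The plan is to exploit the interaction-picture identity
\[
e^{iHt}e^{-iH't} \,=\, \overline{T}\exp\Bigl(i\int_0^t \sigma_s^{H'}(M_B)\,ds\Bigr), \qquad \sigma_s^{H'}(X) = e^{iH's}X e^{-iH's},
\]
so that localization of the integrand of $P_B(\alpha)$ reduces to localization of $\sigma_s^{H'}(M_B)$. Since $M_B$ is supported in $B(A;2\ell)$, the key structural observation is that any component of $H'=M_I+M_E$ supported in the deep interior $I_{2\ell}$ or deep exterior $E_{2\ell}$ commutes exactly with $M_B$, and hence does not contribute to the evolution of $M_B$ under $\sigma_s^{H'}$. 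The strategy is therefore to replace $M_I$ and $M_E$ by local approximations $\tilde M_I = M_I^{\mathrm{out}} + M_I^{\mathrm{in}}$ and $\tilde M_E = M_E^{\mathrm{out}} + M_E^{\mathrm{in}}$, where the ``out'' pieces are supported in $I_{2\ell}$ and $E_{2\ell}$ respectively, and the ``in'' pieces are supported in $B(A;2\ell)$.

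To construct this decomposition I would begin from
\[
M_I \,=\, \sum_{X:\, X\cap I \neq \emptyset} \sqrt{\tfrac{\alpha}{\pi}}\int \tau_s^{A}(\Phi(X))\, e^{-\alpha s^2}\,ds,
\]
and for each term apply the Lieb--Robinson bound (Lemma~\ref{LR:bound}) to the \emph{local} dynamics $\tau_s^{A}$ to truncate $\tau_s^{A}(\Phi(X))$ to an operator supported in the ball of radius $r$ around $X$ inside $A$. Integration of the Lieb--Robinson tail $e^{-\mu r}e^{v|s|}$ against the Gaussian weight $e^{-\alpha s^2}$ produces an error $\lesssim e^{-\mu r}e^{v^2/(4\alpha)}$ per term. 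Grouping the truncated local terms according to whether their supports lie in $I_{2\ell}$ or intrude into $B(A;2\ell)\cap A$ yields $\tilde M_I$. With the already fixed parametrization $\alpha = \gamma^2 \xi'/(4\ell)$ and $\xi' = 4 + 4(v/\gamma)^2$, and with $r$ chosen proportional to $\ell$, the net exponent becomes $-\ell/\xi'$; summing over the $O(|\partial A|\ell^{d-1})$ interaction terms via Lemma~\ref{s_n:bound} gives $\|M_I - \tilde M_I\| \lesssim J|\partial A|\ell^{d-1} e^{-\ell/\xi'}$, with an analogous estimate for $\|M_E - \tilde M_E\|$.

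Setting $\tilde H' = \tilde M_I + \tilde M_E$ and $\tilde H = \tilde H' + M_B$, I would define
\[
\PB \,=\, \sqrt{\tfrac{\alpha}{\pi}}\int e^{i\tilde H t}\, e^{-i\tilde H' t}\, e^{-\alpha t^2}\,dt.
\]
Because $M_I^{\mathrm{out}}$ and $M_E^{\mathrm{out}}$ commute with $M_B$ by disjointness of supports, the interaction-picture representation gives
\[
e^{i\tilde H t}e^{-i\tilde H' t} \,=\, \overline{T}\exp\Bigl(i\int_0^t e^{i(M_I^{\mathrm{in}}+M_E^{\mathrm{in}})s} M_B\, e^{-i(M_I^{\mathrm{in}}+M_E^{\mathrm{in}})s}\, ds\Bigr),
\]
whose integrand is supported in $B(A;2\ell)\subset B(A;3\ell)$. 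Hence $\PB \in \A_{B(A;3\ell)}$ and $\|\PB\|\le 1$ as a Gaussian-weighted integral of unitaries. The error $\|\PB - P_B(\alpha)\|$ is controlled by the Duhamel estimate
$\|e^{iHt}e^{-iH't} - e^{i\tilde Ht}e^{-i\tilde H't}\| \le |t|\bigl(\|H-\tilde H\| + \|H' - \tilde H'\|\bigr)$, which, integrated against $e^{-\alpha t^2}$, yields $\|\PB - P_B(\alpha)\| \lesssim (1/\sqrt{\alpha})(\|M_I - \tilde M_I\| + \|M_E - \tilde M_E\|)$. Substituting the bounds from the previous step and absorbing the polynomial prefactors gives the claimed bound with prefactor $|\partial A|^2\ell^{3d}$.

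The main obstacle is the decomposition step: since $M_I$ is a single global operator on $A$ and does not a priori split as a sum of spatially disjoint commuting pieces, the split must be engineered via Lieb--Robinson truncations of each $(\Phi(X))_\alpha^A$. The truncation radius $r$, the Lieb--Robinson validity window $|s|\le e^{-(1+\mu)}r/v$, the Gaussian concentration scale $1/\sqrt{\alpha}\sim\sqrt{\ell}/\gamma$, and the target decay rate $1/\xi'$ must all be balanced simultaneously -- essentially the same juggling already performed in the proofs of Proposition~\ref{prop:bd2} and Corollary~\ref{cor:bd2}, which is why the resulting constant $C_3$ naturally inherits $C_1$ from the preceding construction.
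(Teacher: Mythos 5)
Your approach is genuinely different from the paper's (which Haar-averages $P_B(\alpha)$ over unitaries in $B(A;3\ell)^c$ and estimates the error via commutators $\|[P_B(\alpha),U]\|$), and it contains a gap that I do not think can be repaired by the stated plan.

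The central flaw is the claim that ``any component of $H'=M_I+M_E$ supported in the deep interior or deep exterior commutes exactly with $M_B$, and hence does not contribute to the evolution of $M_B$ under $\sigma_s^{H'}$.'' Commuting with $M_B$ is not sufficient: for $\sigma_s^{\tilde H'}(M_B)$ to be independent of $M_I^{\mathrm{out}}+M_E^{\mathrm{out}}$, you also need $e^{i\tilde H's}$ to factor as $e^{i(M_I^{\mathrm{out}}+M_E^{\mathrm{out}})s}\,e^{i(M_I^{\mathrm{in}}+M_E^{\mathrm{in}})s}$, i.e.\ you need $[M_I^{\mathrm{out}},M_I^{\mathrm{in}}]=0$ (and likewise for $E$). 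But the Lieb--Robinson truncation of $M_I$ produces a chain of terms $\widetilde{(\Phi(X))_\alpha^A}$ whose supports are balls of radius $r\sim\ell/4$ centered along the slab $\ell\le d(\cdot,\partial A)\lesssim\ell$; any partition of these terms into an ``out'' collection with supports entirely in $I_A(2\ell)$ and an ``in'' collection with supports in $B(A;2\ell)$ must put the boundary-straddling balls somewhere, and whichever side you put them on, the two collections have overlapping supports and hence do not commute. One cannot insert a buffer zone and discard the straddling terms, because they are not small.

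The paper circumvents this exactly where your argument stalls: after reducing (via the Haar average) to bounding $\|[\alpha_s(M_B),U]\|$ for $\alpha_s$ generated by $M_I+M_B+M_E$, the authors observe that $\alpha_s$ itself has a poor (growing in $\ell$) Lieb--Robinson velocity because the $M_X$ have diameter $\sim\ell$, and they instead write
$\alpha_s(M_B)=\tau_s(M_B)+\int_0^s h_s'(r)\,dr$
with $\|\alpha_s(M_B)-\tau_s(M_B)\|\le 2\|M_B\|\,\|H_V-(M_I+M_B+M_E)\|\,|s|$, which is small by Corollary~\ref{cor:bd2}, and then apply the Lieb--Robinson bound to the \emph{original} short-range dynamics $\tau_s$. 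That ``compare the modified dynamics back to $\tau_s$ using the smallness of $H_V-(M_I+M_B+M_E)$'' step is the missing ingredient in your proposal: without it, any attempt to localize $\sigma_s^{\tilde H'}(M_B)$ runs into the degraded Lieb--Robinson velocity of the long-range $\tilde H'$, and the commuting-pieces decomposition that would sidestep that problem does not exist.

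As a smaller point, your final Duhamel bound would also require $\|\tilde H - H\|$ and $\|\tilde H' - H'\|$ to be small, which amounts to $\|M_I-\tilde M_I\|$ and $\|M_E-\tilde M_E\|$ being small; that part of your plan (truncation of each $(\Phi(X))_\alpha^A$ and summation over the $O(|\partial A|\ell^{d-1})$ terms) is plausible and parallel to Proposition~\ref{prop:bd2}, but it does not rescue the structural issue above.
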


The proof of Proposition \ref{prop:bd3} can be found at the end of this section. It is
clear that once we have proven Proposition \ref{prop:bd3}, the
estimate (\ref{eq:main}) readily follows from the triangle inequality
\begin{eqnarray}
\|\PB\PA\PV - P_0\| &\le& \|\PB(\alpha)\PA\PV - P_0\| + \|\PB-\PB(\alpha)\| \nonumber\\
&\le& (C_2(\gamma,d,J)+C_3(\gamma,d,J)) | \partial A|^2 \ell^{3d} e^{-\frac{\ell}{2\xi'}}
\end{eqnarray}
after setting $\ell'=3\ell$, since $\PB \in B(A;\ell')$ in the statement of theorem, and setting $\xi = 25/4 \xi' = 25 (1+(\frac{v}{\gamma})^2)$, since the final decay above is of the form $e^{-\frac{\ell'}{6\xi'}}$ and we wish to cover the factor $\ell^{3d}$ by choosing a slower exponential decay for $\ell'$ large enough.
 
This completes the proof of Theorem \ref{thm:appgs}.

\end{proof}
Now we prove the various Propositions used in the proof of the
Theorem.
\begin{proof}[Proof of Lemma \ref{lem:commutator}]
Define
\begin{equation}
C_I = \left[H_V, H_I \right], \quad C_B = \left[H_V, H_B \right],
\quad \mbox{and} \quad C_E = \left[H_V, H_E \right].
\end{equation}
{F}rom the definitions of the sets $I$, $B$, and $E$, it is easy to
see that $C_I = \left[H_B, H_I \right]$, $C_E = \left[H_B, H_E
\right]$, and $C_B = \left[H_I, H_B \right] + \left[H_E, H_B
\right]$. Clearly, for $Z \in \{ I, E \}$, we have that
\begin{eqnarray*}
\left[ H_Z, H_B \right] & = & \sum_{\stackrel{X \subset V:}{X \cap Z
\neq \emptyset}} \left[ \Phi(X), H_B \right]
 \nonumber \\
&=& \sum_{z \in \partial Z} \sum_{i=1}^d \chi_{Z^c}(z+e_i) \left[ \Phi( \{z, z+e_i \}), H_B \right] \nonumber \\
&+& \sum_{z \in \partial Z} \sum_{i=1}^d \chi_{Z^c}(z-e_i) \left[ \Phi( \{z, z-e_i \}), H_B \right] \nonumber \\
&=& \sum_{z \in \partial Z} \sum_{i=1}^d \chi_{Z^c}(z+e_i) \sum_{\stackrel{w \in B:}{d(z+e_i,w) =1}} \left[ \Phi( \{z, z+e_i \}), \Phi( \{z+e_i, w \}) \right] \nonumber \\
&+& \sum_{z \in \partial Z} \sum_{i=1}^d
\chi_{Z^c}(z-e_i) \sum_{\stackrel{w \in B:}{d(z-e_i,w)=1}} \left[
\Phi( \{z, z-e_i \}), \Phi( \{z-e_i, w \}) \right],
\end{eqnarray*}
from which we get:
\begin{equation}
\left\| \left[ H_Z, H_B \right] \right\| \leq 2\left(|\partial Z | \, d\, (2d-1)\, (2J^2)\right).
\end{equation}
The bound
\begin{equation*}
\left\| \left[ H_Z, H_B \right] \right\| \leq 8d^2 J^2|\partial Z |,
\end{equation*}
is now clear.
\end{proof}

Here is the proof of Proposition \ref{prop:bd}.

\begin{proof}[Proof of Proposition \ref{prop:bd}]
We begin by observing that
\begin{eqnarray}
\| ( H_X )_{\alpha} \Psi_0 \| & \leq & \frac{1}{\gamma} \| H_V ( H_X)_{\alpha} \Psi_0 \| \nonumber \\
& = & \frac{1}{\gamma} \| [ H_V, (H_X)_{\alpha}] \Psi_0 \| \nonumber \\
& = & \frac{1}{\gamma} \| (C_X)_{\alpha} \Psi_0 \|.
\end{eqnarray}
For the first inequality above, we used that $(H_X)_{\alpha} \Psi_0$
projects off the ground state, since $\bra{\Psi_0} (H_X)_{\alpha} \ket{\Psi_0} = 0$.

The value of $\| (C_X)_{\alpha} \Psi_0 \|$ can be estimated using
the spectral theorem. For any vector $f$, one has that
\begin{eqnarray}
\langle f, (C_X)_{\alpha} \Psi_0 \rangle & = & \sqrt{ \frac{\alpha}{\pi}} \int_{- \infty}^{\infty} e^{- \alpha t^2} \langle f, \tau_t(C_X) \Psi_0 \rangle dt \nonumber \\
& = &  \sqrt{ \frac{\alpha}{\pi}} \int_{- \infty}^{\infty} e^{- \alpha t^2} \langle f, e^{it H_V}C_X \Psi_0 \rangle dt \nonumber \\
& = &  \sqrt{ \frac{\alpha}{\pi}} \int_{- \infty}^{\infty} e^{- \alpha t^2} \int_{\gamma}^{\infty} e^{itE} d \langle f, P_E C_X \Psi_0 \rangle dt \nonumber \\
& = & \int_{\gamma}^{\infty} \sqrt{ \frac{ \alpha}{ \pi}} \int_{- \infty}^{\infty} e^{- \alpha t^2} e^{itE} \, dt \, d \langle f, P_E C_X \Psi_0 \rangle \nonumber \\
& = & \int_{\gamma}^{\infty} e^{- \frac{E^2}{4 \alpha}} \, d \langle
f, P_E C_X \Psi_0 \rangle.
\end{eqnarray}
In the third equality above we have introduced the notation $P_E$
for the spectral projection corresponding to the self-adjoint
operator $H_V$, and we also used that $\langle \Psi_0, C_X \Psi_0
\rangle = 0$. The last equality is a basic result concerning Fourier
transforms of gaussians (and re-scaling), (see e.g.(5.59) in
\cite{nasi}). We conclude then that
\begin{equation}
\left| \langle f, (C_X)_{\alpha} \psi_0 \rangle \right| \, \leq \,
e^{ - \frac{ \gamma^2}{4 \alpha}} \| f \| \, \| C_X \|,
\end{equation}
and with $f = (C_X)_{\alpha} \psi_0$ we find that
\begin{equation}
\| (C_X)_{\alpha} \psi_0 \| \, \leq \, e^{- \frac{ \gamma^2}{4
\alpha}} \| C_X \|.
\end{equation}
Putting everything together,  we have shown that
\begin{equation} \label{eq:thlbd}
\| ( H_X)_{\alpha} \psi_0 \| \leq \frac{1}{ \gamma} \|
(C_X)_{\alpha} \psi_0 \| \leq \frac{e^{- \frac{ \gamma^2}{4
\alpha}}}{\gamma} \| C_X \| = \frac{\| [H_V,H_X] \|}{\gamma} e^{- \frac{ \gamma^2}{4
\alpha}}.
\end{equation}
\end{proof}

Here is the proof of Proposition \ref{prop:bd2}.

\begin{proof}[Proof of Proposition \ref{prop:bd2}]
Since our method of bounding this difference is similar for $X \in
\{ I, B, E \}$, we will provide the details only in the case of
$X=I$. We begin by observing that
\begin{equation}
\left\| ( H_I)_{\alpha} \, - \, M_I(\alpha) \right\| \, \leq \,
\sqrt{ \frac{ \alpha}{ \pi}} \int_{- \infty}^{\infty} \left\|
\tau_t( H_I) \, - \, \tau_t^I(H_I) \right\| \, e^{- \alpha t^2} dt.
\end{equation}
To bound the integral above, we introduce a parameter $T>0$. For
$|t| > T$, we use that
\begin{eqnarray}
\left\| \tau_t \left( H_I   \right)  \, - \, \tau_t^{A} \left( H_I
\right) \right\| \, &\le& \int_0^{|t|}
\left\| \frac{d}{ds}\Big( \tau_{s} (H_I )  \, - \, \tau_{s}^{I}(H_I) \Big)\right\| ds \nonumber \\
& \leq & 2 \|[H_V, H_I]\| |t|.
\end{eqnarray}
{F}rom this, it follows readily that
\begin{equation}
 \sqrt{ \frac{ \alpha}{ \pi}} \int_{|t|>T} \left\| \tau_t ( H_I)  \, - \, \tau_t^{A}( H_I) \right\|
\, e^{- \alpha t^2} dt \, \leq \, \frac{2 \|[H_V,
H_I]\|}{\sqrt{\alpha\pi}} e^{- \alpha T^2}.
\end{equation}
Again, we want the exponential decay to be of the form $e^{-\ell/\xi'}$, which implies that
\be\label{T}
T = \frac{2\ell}{\gamma \xi'}.
\ee

For $|t| \leq T$, the estimate below is an immediate consequence of
Proposition 1 in \cite{nasi}:
\begin{eqnarray} \label{eq:smt}
\left\| \tau_t ( H_I )  \, - \, \tau_t^{A}( H_I) \right\| \, \leq \,
\int_0^{|t|} \left\| \left[ H_V-H_A, \tau_s^{A} \left( H_{I} \right)
\right]\right\| \, ds.
\end{eqnarray}
The above commutator may be written as
\begin{align}\label{eq:com2}
\left[ H_V - H_A, \tau_s^{A} \left( H_{I} \right) \right] \, = \,&
 \sum_{x \in \partial A} \sum_{i=1}^d \chi_{A^c}(x+e_i)  \left[ \Phi( \{x, x+e_i \}), \tau_s^{A} \left( H_{I}\right)
 \right]\nonumber
 \\&+\sum_{x \in \partial A}\sum_{i=1}^d \chi_{A^c}(x-e_i) \left[ \Phi( \{x, x-e_i \}), \tau_s^{A} \left( H_{I}\right)
 \right],
\end{align}
and for each $x \in \partial A$,
\begin{equation}\label{eq:smt2}
\left[ \Phi( \{x, x \pm e_i \}), \tau_s^{A} \left( H_{I}\right)
\right] =\sum_{\stackrel{Y \subset V:}{Y\cap I \neq \emptyset}}
\left[ \Phi( \{x, x \pm e_i \}), \tau_s^{A} \left( \Phi(Y) \right)
\right].
\end{equation}
Each of these commutators we estimate using the Lieb-Robinson
bound from Lemma \ref{LR:bound}, remembering that it is useful for times $|s| \le e^{-(1+\mu)}\, \ell/v$, since from the definition of $I$, we have $d(Y,\partial A) \ge \ell$ for $Y\cap I\ne \emptyset$. We may easily satisfy this requirement by choosing $T \le e^{-(1+\mu)}\, \ell/v$, which combined with (\ref{T}) implies that 
\be
T = \frac{2\ell}{\gamma \xi'} \le e^{-(1+\mu)}\, \ell/v
\ee
and, hence
\be\label{xi:bound}
1/\xi' \le e^{-(1+\mu)}\, \gamma/(2v)
\ee
Remembering that $x \in \partial A$, Lemma \ref{LR:bound} and (\ref{eq:smt2}) imply the following bound:
\begin{eqnarray}
\left\|\left[ \Phi( \{x, x\pm e_i \}), \tau_s^{A} \left(H_{I}\right)\right]\right\|
& \leq & \sum_{ y \in I} \sum_{\stackrel{y' \in I:}{d(y,y')=1}} \left\|\left[ \Phi( \{x, x \pm e_i \}), \tau_s^{A}
\left( \Phi( \{y,y' \}) \right)\right] \right\| \nonumber \\
& \leq & (2d)\, 2|\partial \{y,y'\}| \|\Phi( \{x,x\pm e_i\})\| \|\Phi( \{y,y' \})\|  e^{v|s| } \nonumber \\
&\times& \sum_{y \in I} e^{- \mu (d(x,y)-1)} \nonumber \\
& \leq & 8d J^2  e^{v|s| } e^{- \mu \ell} \sum_{y \in I} e^{- \mu \left(d(x,y)-(\ell+1)\right)} \label{exp:bound},
\end{eqnarray}
For the last inequality, note that the number of sites $y \in I$ that are a distance $d(x,y) = n+\ell+1, \, n\ge 0$ from $x$ is bounded above by $s(n+\ell+1,d)$, the number of lattice points on a sphere of radius $n+\ell+1$ in $\Ir^d$. We now use Lemma \ref{s_n:bound} to bound the sum $\sum_{y \in I} e^{- \mu (d(x,y)-(\ell+1))}$ in (\ref{exp:bound}):
\begin{eqnarray*}
\sum_{y \in I} e^{- \mu (d(x,y)-(\ell+1))} &\le& \sum_{n\ge 0} s(n+\ell+1,d) e^{- \mu n}\\
&\le& 2^{d} \ell^{d-1} \sum_{n\ge 0} (n+2)^{d-1}e^{- \mu n} \\
&\le& 2^{d} \ell^{d-1} e^{2\mu} \frac{\partial^{\, d-1}}{\partial(-\mu)^{\,d-1}} \left(\sum_{n\ge 0} e^{- \mu n}\right),
\end{eqnarray*}
where we used the bound $\ell+n+1 \le \ell (n+2)$, for $\ell \ge 1, \, n\ge 0$, for the second inequality.
But, $\sum_{n\ge 0} e^{- \mu n} = (1-e^{-\mu})^{-1}$ and one can show inductively that
$$
\frac{\partial^{\, d-1}}{\partial(-\mu)^{\,d-1}} (1-e^{-\mu})^{-1} \le (d-1)! \,(1-e^{-\mu})^{-d} \le \frac{(d-1)!}{\mu^d} \,e^{\mu d}
$$
This demonstrates that
\be
\left\|\left[ \Phi( \{x, x\pm e_i \}), \tau_s^{A} \left(H_{I}\right)\right]\right\| \le 8d!\,e^{2\mu} \left(\frac{2\,e^{\mu}}{\mu}\right)^{d} \, J^2\, \ell^{d-1}\, e^{- \mu \ell} e^{v |s|}
\ee
and feeding back into (\ref{eq:smt}) - (\ref{eq:smt2}), we get
\begin{equation}
\left\| \tau_t ( H_I )  \, - \, \tau_t^{A}( H_I) \right\| \, \leq \, 2d |\partial A|
\frac{8d!\, e^{2\mu}}{v} \left(\frac{2\,e^{\mu}}{\mu}\right)^{d} \, J^2\, \ell^{d-1} e^{- \mu \ell}
\left(e^{v |t|} - 1 \right).
\end{equation}
Letting 
$C(d,\mu) = 
16\, e^{2\mu} \left(\frac{2\,e^{\mu}}{\mu}\right)^{d} d^{d+1}
$ we may now estimate
\begin{eqnarray}
&&\sqrt{ \frac{ \alpha}{ \pi}} \int_{-T}^{T} \left\| \tau_t (H_I)  \,
- \, \tau_t^I(H_I) \right\|\, e^{- \alpha t^2} dt \nonumber\\
& \leq & \frac{C(d,\mu)}{v} \, J^2\, \ell^{d-1} | \partial A |  e^{- \mu \ell} \, \sqrt{ \frac{ \alpha}{ \pi}} \int_{-T}^{T} e^{ v |t|} e^{- \alpha t^2} \, dt \nonumber \\
& = & \frac{C(d,\mu)}{v} \, J^2\, \ell^{d-1} | \partial A | e^{- \mu \ell} e^{\frac{v^2}{4 \alpha}} \sqrt{ \frac{ \alpha}{ \pi}} \int_{-T}^{T} e^{- \alpha \left( t - \frac{ v}{2 \alpha} \right)^2} \, dt \nonumber \\
& \leq &  \frac{C(d,\mu)}{v} \, J^2\, \ell^{d-1} | \partial A |  e^{-
\mu \ell} e^{\frac{v^2}{4 \alpha}}. \label{T:smallbd}
\end{eqnarray}
At this point, since we are interested in exponential decay of the form $e^{-\ell /\xi'}$, (\ref{alpha}) gives the following constraint on $\xi'$:
$$
\frac{1}{\xi'} = \frac{\mu}{1+\left(\frac{v}{\gamma}\right)^2}
$$
Combined with (\ref{xi:bound}), the above constraint implies the following bound on $\mu$:
$$
\mu e^{1+\mu} \le \frac{v^2+\gamma^2}{2v\gamma},
$$
which is satisfied naively for $\mu = 1/4$, since the r.h.s is bounded below by $1$. Hence, we finally have that
\be\label{xi}
\xi' = 4\left(1+\left(\frac{v}{\gamma}\right)^2\right)
\ee
Adding our results for $|t| \ge T$ and $|t| \le T$, it is clear that along the given parametrization (\ref{T}),
\be \label{eq:hlmldif}
\left\|  (H_I)_{\alpha} \, - \, M_I(\alpha) \right\| 
\leq \left\{ \frac{2 \|[H_V, H_I]\|}{\sqrt{\alpha\pi}}+\frac{3(12 d)^{d+1}}{v} \, J^2\, \ell^{d-1} | \partial A |  \right\}e^{-\ell / \xi' },
\ee
where we used the bound $e^{\mu}/\mu \le 6$, for $\mu = 1/4$ to bound $C(d,\mu)$ in (\ref{T:smallbd}).
The bound claimed in (\ref{eq:difbd}) now follows from (\ref{alpha}) and Lemma
\ref{lem:commutator}.
\end{proof}

Here is the proof of Proposition \ref{prop:bd3}.

\begin{proof}[Proof of Proposition \ref{prop:bd3}]
Our goal is to show that the operator
\begin{equation}
P_B( \alpha) \, = \, \sqrt{ \frac{ \alpha}{ \pi}}
\int_{-\infty}^{\infty} e^{i(M_I+M_B+M_E)t} e^{-i(M_I+M_E)t} e^{-
\alpha t^2} \, dt
\end{equation}
is well approximated by a local observable; one with support in 
$B(A;3\ell)$. To do so, we will take the normalized partial trace over the complimentary
Hilbert space, i.e., the one associated with $B(A;3\ell)^c \subset
V$.

For our estimates, it is convenient to calculate the partial trace
as an integral over the group of unitaries \cite{bravyi2006}, see
also \cite{loc_est}. Given an arbitrary observable $A \in \A_V$ and a
set $Y \subset V$, define
\begin{equation}
\langle A \rangle_Y = \int_{\mathcal{U}(Y^c)} U^* A U \, \mu(dU),
\end{equation}
where $\mathcal{U}(Y^c)$ denotes the group of unitary operators over
the Hilbert space $\mathcal{H}_{Y^c}$ and $\mu$ is the associated,
normalized Haar measure. It is easy to see that for any $A \in
\A_V$, the quantity $\langle A \rangle_{Y}$ has been localized to
$Y$ in the sense that $\langle A \rangle_{Y} \in \A_Y$.

Let us define
\begin{equation}
\PB \, = \, \langle P_B( \alpha) \rangle_{B(A;3\ell)}.
\end{equation}
Clearly $\PB$ has the desired support and, moreover, $\|\PB\| \le 1$ since $\|\PB(\alpha)\| \le 1$. 

The difference between $\PB$ and $P_B(\alpha)$ may be written in terms of a
commutator, i.e.  as
\begin{equation} \label{eq:opdif}
\PB \, - \, P_B(\alpha) \, = \, \int_{\mathcal{U}(B(A;3\ell)^c)} U^* \left[
P_B(\alpha),  U  \right] \, \mu(dU).
\end{equation}
Thus, to show that the difference between $\PB$ and $P_B(\alpha)$ is small
(in norm), we need only estimate the commutator of $P_B(\alpha)$ with an
arbitrary unitary supported in $B(A;3\ell)^c$.

This is easy to calculate. Note that
\begin{equation} \label{eq:comm}
\left[ P_B(\alpha), U \right] \, = \, \sqrt{ \frac{ \alpha}{ \pi}} \int_{-
\infty}^{\infty} \left[ e^{i(M_I +M_B+M_E)t} e^{-i(M_I+M_E)t}, U
\right] e^{- \alpha t^2}  \, dt.
\end{equation}
To estimate the integrand, we define the function
\begin{equation}
f(t) \, = \,  \left[ e^{i(M_I +M_B+M_E)t} e^{-i(M_I+M_E)t}, U
\right] .
\end{equation}
A short calculation demonstrates that
\begin{equation} \label{eq:fder}
f'(t) \, = \, i   \left[ e^{i(M_I +M_B+M_E)t} M_B e^{-i(M_I+M_E)t},
U \right] .
\end{equation}
The form of the derivative appearing in (\ref{eq:fder}) suggests
that we define the evolution
\begin{equation}
\alpha_t(A) = e^{i(M_I+M_B+M_E)t}Ae^{-i(M_I+M_B+M_E)t}, \qquad
\mbox{for any local observable } A.
\end{equation}

With this in mind, we rewrite
\begin{eqnarray}
f'(t) & = &  i   \left[ e^{i(M_I +M_B+M_E)t} M_B e^{-i(M_I+M_E)t}, U \right] \nonumber \\
& = &  i   \left[ \alpha_t( M_B) e^{i(M_I+M_B+M_E)t} e^{-i(M_I+M_E)t} , U \right] \nonumber \\
& = &  i \alpha_t( M_B) f(t)  + i  \left[ \alpha_t( M_B), U \right] e^{i(M_I+M_B+M_E)t} e^{-i(M_I+M_E)t} .
\end{eqnarray}

Written as above, the function $f$ can be bounded using
norm-preservation. In particular, let  $V(t)$ be the unitary evolution
that satisfies the time-dependent differential equation
\begin{equation}
i \frac{d}{dt} V(t) =  V(t) \alpha_t(M_B)  \quad \quad \mbox{with }
V(0) = \idty.
\end{equation}
Explicitly, one has that
\begin{equation}
V(t) = e^{i(M_I+M_E)t} e^{-i(M_I+M_B+M_E)t}.
\end{equation}
Considering now the product
\begin{equation}
g(s) = V(s)  f(s)
\end{equation}
it is easy to see that
\begin{equation}
g'(s) = V'(s) f(s) + V(s) f'(s) = i V(s)  \left[ \alpha_s( M_B), U
\right]  V(s)^*.
\end{equation}
Thus
\begin{equation}
 V(t) f(t) = g(t) - g(0) = \int_0^t g'(s) ds,
\end{equation}
and therefore,
\begin{equation}
f(t) \, = \, i  V(t)^* \int_0^t V(s) \left[ \alpha_s( M_B), U \right]
V(s)^* \, ds .
\end{equation}
The bound
\begin{equation} \label{eq:fbd1}
\| f(t) \| \leq \int_0^{|t|} \left\|  \left[ \alpha_s( M_B), U
\right] \right\| \, ds.
\end{equation}
readily follows.

Since the Lieb-Robinson velocity associated to the dynamics $\alpha_t( \cdot)$ grows 
with $\ell$, we estimate (\ref{eq:fbd1}) by comparing back to the original dynamics.
Consider the interpolating dynamics
\begin{equation}
h_s(r) \, = \, \alpha_r \left( \tau_{s-r}(M_B) \right),
\end{equation}
for $0 \leq r \leq s$ and $\tau_r(A) = e^{iH_V r}Ae^{-i H_V r}$. 
With $s$ fixed, it is easy to calculate
\begin{equation}
h_s'(r) \, = \, i \alpha_{s-r} \left( \left[ (M_I+M_B+M_E)-H_V,
\tau_r(M_B) \right] \right).
\end{equation}
We conclude then that
\begin{eqnarray}
\| \alpha_s(M_B) - \tau_s(M_B) \| & = & \left\| \int_0^s h_s'(r) \, dr \right\| \nonumber \\
& \leq & 2 \, \| M_B \| \, \| H_V - (M_I+M_B+M_E) \| \, s \nonumber \\
& \leq & 2 \left( |B(A;2\ell)| J \right) \, \left( 2C_1(\gamma,d,J) | \partial A| \ell^{D} e^{- \ell/ \xi'} \right) \, s \nonumber \\
& \leq & 4^{d+1} J C_1(\gamma,d,J) | \partial A|^2 \ell^{D+d} e^{- \ell/ \xi'} \, s,
\end{eqnarray}
where for the second to last inequality above we used the definition of $M_B$ and Corollary \ref{cor:bd2} and for the last inequality we used Lemma \ref{s_n:bound} to get $|B(A;2\ell)|\le (2\ell) 2^d (2\ell)^{d-1} |\partial A|$.
Hence,
\begin{eqnarray}
\| f(t) \| & \leq & \int_0^{|t|} \left\| \left[ \tau_s(M_B), U \right] \right\| \, ds \, + \,   \int_0^{|t|} \left\| \left[ \alpha_s(M_B) \, - \,  \tau_s(M_B), U \right] \right\| \, ds \nonumber \\
&\leq &  \int_0^{|t|} \left\| \left[ \tau_s(M_B), U \right] \right\| \, ds \nonumber\\
&+& 4^{d+1}J C_1(\gamma,d,J) | \partial A|^2 \ell^{D+d} e^{- \ell/ \xi'} t^2.
\end{eqnarray}
Plugging the above bound into (\ref{eq:comm}), we find that
\begin{eqnarray*}
\| [ P_B(\alpha), U ] \| & \leq & \sqrt{ \frac{ \alpha}{ \pi}} \int_{- \infty}^{\infty} \| f(t) \| e^{- \alpha t^2} dt \nonumber \\
& \leq & \sqrt{ \frac{ \alpha}{ \pi}} \int_{- \infty}^{\infty}\left( \int_0^{|t|} \left\| \left[ \tau_s(M_B), U \right] \right\| \, ds\right) e^{- \alpha t^2} dt \nonumber\\
&+& \frac{J}{\alpha} 2^{2d+1} C_1(\gamma,d,J) | \partial A|^2 \ell^{D+d} e^{- \ell/ \xi'}\label{boundX},
\end{eqnarray*}
since
\be
\sqrt{ \frac{ \alpha}{ \pi}} \int_{- \infty}^{\infty}t^2 e^{- \alpha t^2} dt = \frac{1}{2\alpha}.
\ee
We estimate the double integral above separately for $|t| \ge T$ and $|t| \le T$,
taking $T$ according to (\ref{T}). The integral for $|t| \ge T$ is easily bounded as follows:
\begin{eqnarray}
&&\sqrt{ \frac{ \alpha}{ \pi}} \int_{|t|\ge T} \left( \int_0^{|t|} \left\| \left[ \tau_s(M_B), U \right] \right\| \, ds\right) e^{- \alpha t^2} dt\nonumber\\
&&\le 2\|M_B\| \sqrt{\frac{ \alpha}{ \pi}} \int_{|t|\ge T} |t| e^{- \alpha t^2} dt \nonumber\\
&&\le \frac{2^{2d+1} J}{\sqrt{\alpha \pi}} |\partial A| \ell^{d} e^{-\alpha T^2} \nonumber\\
&&= \frac{2^{2d+1} J}{\sqrt{\pi}\sqrt{\gamma^2+v^2}}  |\partial A| \ell^{d+1/2}e^{-\ell/\xi'}\label{T:large2}
\end{eqnarray}
To bound the integral for $|t| \le T$ we use Lemma \ref{LR:bound}. Remembering that
$\mbox{supp}(U) \subset B(A;3\ell)^c$ and $\mbox{supp}(M_B) \subset B(A;2\ell)$ we have for $|t| \le e^{-(1+\mu)} \ell/v$
\begin{eqnarray}
\|\tau_s(M_B), U\| &\le& 2 \|M_B\| \|U\| |\partial B(A;2\ell)| e^{-\mu \ell} e^{v |s|}\nonumber\\
&\le& 2 \|M_B\| |\partial B(A;2\ell)| e^{-\mu \ell} e^{v |s|} \nonumber\\
&\le& 2 \left(4^d \ell^d |\partial A| J\right) \left(2^{2d-1} \ell^{d-1} |\partial A|\right) e^{-\mu \ell} e^{v |s|}\nonumber\\
&\le& 4^{2d} \,J\, |\partial A|^2\, \ell^{2d-1} e^{-\mu \ell} e^{v |s|}
\end{eqnarray}
Note that as we have already shown in the proof of Proposition \ref{prop:bd2}, since $|t| \le T$, the above bound is useful for $\mu = 1/4$, and $\xi'$ given in the statement of Proposition \ref{prop:bd2}.
Hence,
\begin{eqnarray}
&&\sqrt{ \frac{ \alpha}{ \pi}} \int_{- T}^{T}\left( \int_0^{|t|} \left\| \left[ \tau_s(M_B), U \right] \right\| \, ds\right) e^{- \alpha t^2} dt \nonumber\\
&&\le 4^{2d}\, \frac{J}{v} \, |\partial A|^2\, \ell^{2d-1} e^{-frac{\ell}{4}} \sqrt{ \frac{ \alpha}{ \pi}} \int_{- T}^{T} e^{v|t|} e^{- \alpha t^2} dt\nonumber\\
&&\le 4^{2d}\, \frac{J}{v} \, |\partial A|^2\, \ell^{2d-1} e^{- \frac{\ell}{4}} e^{\frac{v^2}{4\alpha}}\nonumber\\
&&= 4^{2d}\, \frac{J}{v} \, |\partial A|^2\, \ell^{2d-1} e^{-\ell/\xi'}.\label{T:small2}
\end{eqnarray}
Combining (\ref{T:large2}) and (\ref{T:small2}) with (\ref{boundX}) we get:
\begin{eqnarray*}
\| [ P_B(\alpha), U ] \| &\le& \frac{4^dJ}{v} \left(\frac{2v}{\sqrt{\pi}\sqrt{\gamma^2+v^2}} + 4^{d} + \frac{2v}{\gamma^2+v^2} C_1(\gamma,d,J)\right) | \partial A|^2 \,\ell^{D+d+1} e^{- \ell/ \xi'}\\
&\le& C_3(\gamma, d, J) \, | \partial A|^2 \,\ell^{3d} e^{- \ell/ \xi'},
\end{eqnarray*}
where we set 
$$C_3(\gamma, d, J) = \frac{4^dJ}{v} \left(\frac{2v}{\sqrt{\pi}\sqrt{\gamma^2+v^2}} + 4^{d} + \frac{2v}{\gamma^2+v^2} C_1(\gamma,d,J)\right)$$ and used the bound $D \le 2d-1$ in the last line.
The estimate claimed in (\ref{eq:locpbd}) now easily follows from (\ref{eq:opdif}).
\end{proof}
        

    %
    %

    \newchapter{Multiplicativity of Quantum Channels}{Multiplicativity of Depolarizing Werner-Holevo Channels}{Multiplicativity of the maximal $2$-norm for depolarizing Werner-Holevo Channels}
    \label{sec:LabelForChapter4}

        
    
\section{Introduction}
In this chapter\footnote{Reprinted with permission from Journal of Mathematical Physics, {\bf 48}, 122102. Copyright 2007, American Institute of Physics.}, we explore the question of multiplicativity for the output $2$-norm of a class of
quantum channels related closely to the $d$-dimensional Werner-Holevo (WH) channel ${\cal W}_d(\rho) = \frac{1}{d-1} ((\Tr(\rho)) \one_d - \rho^T)$, which is known~\cite{WH} to give a counterexample to the multiplicativity of the maximal output $p$-norm for $p > 4.79$, when $d=3$. For large dimensions, the WH channel acts like the completely depolarizing channel, since 
$\|{\cal W}_d(\rho)- \frac{1}{d} \one_d\|_{\infty} = 1/d$. This is a property shared by the recent counterexamples to multiplicativity for $p > 2$ by Winter and for $1 < p < 2$, by Hayden~\cite{Winter,Hayden}.
Nevertheless, it has been shown~\cite{Datta, AF} that ${\cal W}_d(\rho)$ satisfies multiplicativity for $1\leq p \leq 2$.

\section{The setup and some useful lemmas} 
It is a natural extension of the previous research on constructing counterexamples to multiplicativity, 
to study the output $p$-norm of channels of the form
$$\W(\rho) = \lambda \rho + (1-\lambda) {\cal W}_d(\rho),$$ and ask if those channels
satisfy multiplicativity for $p$-norms with $p = 2$. We adopt the name ``depolarized" Werner-Holevo channels due to the close connection of $\W$ with depolarizing channels in higher dimensions.
We focus our attention to the study of the output $2$-norm for the tensor product channel $\W\otimes\W$
acting on bipartite states in $M_d(\Cx)\otimes M_d(\Cx)$ and show that multiplicativity is satisfied for this norm for all dimensions $d$.

A direct computation of the eigenvalues of $\W\otimes\W(\pure{\psi_{12}})$ turns out to be much harder for $0 < \lambda < 1$, than for the boundary cases $\lambda = 0, 1$. The reason is that the
output consists of a combination of the input state and its transpose/partial transpose, which in
general do not share a common eigenbasis. To work around this difficulty, we compute explicitly the
output $2$-norm of $\W\otimes\W(\pure{\psi_{12}})$ and the maximal output $2$-norm of $\W$ and study the difference 
\begin{equation}\label{eqn:output_dist}
\D(\psi_{12}) = (\| \W \|_2^2)^2 - \|\W \otimes \W(\pure{\psi_{12}}) \|_2^2
\end{equation}
We show that $\D \ge 0$ for all input states and $\lambda \in [0,1],\,d\ge 2$.
We begin with the computation of $\| \W \|_2^2$ in the following Lemma.

\begin{lemma}\label{lem:max_norm}
The (squared) maximal output $2$-norm of $\W$ is given by
$$\|\W\|_2^2 = \frac{(d-2)\lambda^2 + 1}{d-1}$$
\end{lemma}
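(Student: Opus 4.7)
The plan is to reduce the supremum to pure states and then compute the output $2$-norm explicitly, isolating the dependence on the input through a single scalar invariant.

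First I would observe that since $\|\sigma\|_2^2 = \mathrm{Tr}(\sigma^2)$ is convex in $\sigma$ and $\W$ is affine, the function $\rho \mapsto \|\W(\rho)\|_2^2$ is convex on the (convex) set of density matrices, so its maximum is attained at an extreme point, i.e., at a pure state $\rho = \pure{\psi}$. For such $\rho$, writing
\benn
\W(\pure{\psi}) = \lambda\,\pure{\psi} + \frac{1-\lambda}{d-1}\bigl(\one_d - \pure{\psi^*}\bigr),
\eenn
where I use that $\pure{\psi}^T = \pure{\psi^*}$, I can expand $\mathrm{Tr}(\W(\pure{\psi})^2)$ by multiplying out the three terms and using the trivial identities $\mathrm{Tr}(\pure{\psi}) = \mathrm{Tr}(\pure{\psi^*}) = 1$, $\mathrm{Tr}(\pure{\psi}^2) = 1$, $\mathrm{Tr}(\one_d) = d$, and the fact that $\mathrm{Tr}(\pure{\psi}\pure{\psi^*}) = |\braket{\psi}{\psi^*}|^2$.

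Collecting terms, after routine simplification the only $\psi$-dependent quantity that survives is $t \equiv |\braket{\psi}{\psi^*}|^2 \in [0,1]$, and the computation gives
\be
\|\W(\pure{\psi})\|_2^2 \;=\; \lambda^2 + \frac{(1-\lambda)^2}{d-1} + \frac{2\lambda(1-\lambda)}{d-1}\bigl(1-t\bigr). \nonumber
\ee
Thus the supremum is obtained by \emph{minimizing} the conjugate overlap $t$. Since $\lambda(1-\lambda) \ge 0$, the maximum is achieved when $t = 0$, which is realized for $d \ge 2$ by, e.g., $\ket{\psi} = (\ket{0} + i\ket{1})/\sqrt{2}$, so that $\braket{\psi}{\psi^*} = 0$. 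Substituting $t=0$ yields
\benn
\|\W\|_2^2 = \lambda^2 + \frac{(1-\lambda)^2 + 2\lambda(1-\lambda)}{d-1} = \lambda^2 + \frac{1-\lambda^2}{d-1} = \frac{(d-2)\lambda^2 + 1}{d-1},
\eenn
which is the claimed formula.

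The main (very minor) obstacle is just bookkeeping of the cross terms between $\pure{\psi}$ and $\pure{\psi^*}$, in particular correctly identifying that $\mathrm{Tr}(\pure{\psi}\pure{\psi^*}) = |\braket{\psi}{\psi^*}|^2$ rather than something involving $\mathrm{Tr}(\rho^2) = 1$; once that is handled cleanly, the optimization reduces to minimizing a single nonnegative scalar, which is straightforward.
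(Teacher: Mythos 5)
Your proposal is correct and follows essentially the same route as the paper: both expand $\Tr(\W(\pure{\psi})^2)$, observe that the only $\psi$-dependence enters through the conjugate overlap $|\braket{\psi}{\overline{\psi}}|^2$, and then minimize that quantity, with equality achieved at $\ket{\psi}=(\ket{0}+i\ket{1})/\sqrt{2}$. The only (minor, correct) addition you make is the explicit convexity argument justifying the reduction of the supremum to pure input states, which the paper takes for granted.
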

\begin{proof}
It is easy to check that $\|\W(\pure{\psi})\|_2^2$ is
\begin{eqnarray*}
&=& \Tr(\W(\pure{\psi})^2) \\
&=& \lambda^2 + \frac{2\lambda(1-\lambda)(1-|\braket{\psi}{\overline{\psi}}|^2)}{d-1} + \frac{(1-\lambda)^2}{d-1}\\
&\le& \frac{(d-2)\lambda^2 +1}{d-1},
\end{eqnarray*}
where $\ket{\overline{\psi}}$ denotes the complex conjugate of $\ket{\psi}$ in the standard basis.
Taking $\ket{\psi} = \frac{\ket{0} + i\ket{1}}{\sqrt{2}}$, with $\ket{0},\ket{1}$ two standard basis vectors, we see that equality can be achieved in the above expression and the result follows.
\end{proof}

We now turn our attention to the more complicated output $2$-norm of $\W\otimes\W(\pure{\psi_{12}})$.

\begin{lemma}\label{lem:output_norm}
The (squared) output $2$-norm $\|\W\otimes\W(\pure{\psi_{12}})\|_2^2$ is given by:
\begin{eqnarray*}
& &(\|\W\|_2^2)^2\\ 
&+& S_\lambda^2 \, |\braket{\psi_{12}}{\overline{\psi_{12}}}|^2\\
&-& 2 (S_\lambda + R_\lambda^2)
(S_\lambda + (d-2)Q_\lambda^2\Big)
\big(1-\|\rho_1\|_2^2\big)\\
&-& S_\lambda \|\W\|_2^2 \Tr(\rho_1\rho_1^T + \rho_2\rho_2^T),
\end{eqnarray*}
where $Q_\lambda = \frac{1-\lambda}{d-1}$, $R_\lambda = \lambda - Q_\lambda$, $S_\lambda = 2\lambda Q_\lambda$, $\rho_1 = {\Tr}_2\pure{\psi_{12}}$, $\rho_2 = {\Tr}_1\pure{\psi_{12}}$ and $T$ denotes
transposition.
\end{lemma}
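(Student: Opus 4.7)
The plan is to compute $\Tr(P^2)$ for $P = \W\otimes\W(\pure{\psi_{12}})$ directly, by expressing $P$ as an explicit linear combination of nine simple operators and then evaluating all pairwise Frobenius inner products. Starting from the decomposition $\W = \lambda\,\mathrm{Id} + (1-\lambda)\mathcal{W}_d$, I would expand
\[
\W\otimes\W = \lambda^2\,\mathrm{Id}\otimes\mathrm{Id} + \lambda(1-\lambda)\bigl[\mathrm{Id}\otimes\mathcal{W}_d + \mathcal{W}_d\otimes\mathrm{Id}\bigr] + (1-\lambda)^2\,\mathcal{W}_d\otimes\mathcal{W}_d,
\]
and apply it to $\pure{\psi_{12}}$, using $\Tr_2\pure{\psi_{12}} = \rho_1$, $\Tr_1\pure{\psi_{12}} = \rho_2$, and $(\pure{\psi_{12}})^{T_2}\!$-style partial transposes at each step. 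Grouping by powers of $Q_\lambda$, this produces
\[
P = \lambda^2\pure{\psi_{12}} + \lambda Q_\lambda\bigl[\rho_1\!\otimes\!\one + \one\!\otimes\!\rho_2 - \pure{\psi_{12}}^{T_1} - \pure{\psi_{12}}^{T_2}\bigr] + Q_\lambda^2\bigl[\one\!\otimes\!\one - \rho_1^T\!\otimes\!\one - \one\!\otimes\!\rho_2^T + \pure{\psi_{12}}^{T}\bigr].
\]

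Next I would compute all relevant Frobenius inner products among these nine operators. Writing $\ket{\psi_{12}} = \sum_{ij} c_{ij}\ket{ij}$, three identities do all the nontrivial work: $\Tr(\pure{\psi_{12}}\pure{\psi_{12}}^{T_k}) = \|\rho_k\|_2^2$ (both sides reduce to $\sum_{ik}|(CC^\dagger)_{ik}|^2$), $\Tr(\pure{\psi_{12}}^{T_1}\pure{\psi_{12}}^{T_2}) = |\braket{\psi_{12}}{\overline{\psi_{12}}}|^2$ (both sides reduce to $|\sum c_{ij}^2|^2$), and the Frobenius self-duality of partial transpose $\Tr(X^{T_k}Y) = \Tr(X\,Y^{T_k})$. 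The remaining inner products — such as $\Tr((\rho_1\!\otimes\!\one)(\rho_1^T\!\otimes\!\one)) = d\,\Tr(\rho_1\rho_1^T)$ or $\Tr(\pure{\psi_{12}}(\rho_1\!\otimes\!\one)) = \|\rho_1\|_2^2$ — all reduce to traces of products of one-site operators via the partial-trace identities.

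Once every inner product is tabulated, $\Tr(P^2)$ becomes a polynomial in the six quantities $\lambda,\ Q_\lambda,\ d,\ \|\rho_1\|_2^2,\ |\braket{\psi_{12}}{\overline{\psi_{12}}}|^2$, and $\Tr(\rho_1\rho_1^T + \rho_2\rho_2^T)$; pure-state symmetry $\|\rho_1\|_2^2 = \|\rho_2\|_2^2$ and the $1\leftrightarrow 2$ symmetry of the coefficient pattern halve the amount of bookkeeping. I would then collect terms and substitute $R_\lambda = \lambda - Q_\lambda$, $S_\lambda = 2\lambda Q_\lambda$, and the closed form $\|\W\|_2^2 = \frac{(d-2)\lambda^2+1}{d-1}$ from Lemma~\ref{lem:max_norm}. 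The state-independent piece (that is, the piece that survives when $\rho_k$ is maximally pure on one site and $\braket{\psi_{12}}{\overline{\psi_{12}}}$ takes its extremizing value from the lemma's computation) should reassemble into exactly $(\|\W\|_2^2)^2$, and what remains decomposes cleanly as the stated correction.

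The main obstacle is purely bookkeeping: there are $\binom{9}{2}+9 = 45$ pairwise inner products weighted by 45 coefficient products in $\{\lambda^2,\pm\lambda Q_\lambda,\pm Q_\lambda^2\}^2$, and the target form of the answer is a carefully chosen polynomial identity in $R_\lambda, Q_\lambda, S_\lambda$ rather than the raw $\lambda, Q_\lambda$ expansion, so one must rewrite repeatedly. My strategy to control the risk of arithmetic error is to compute $\Tr(P^2)$ first as a polynomial $\alpha_0 + \alpha_1 \|\rho_1\|_2^2 + \alpha_2|\braket{\psi_{12}}{\overline{\psi_{12}}}|^2 + \alpha_3\Tr(\rho_1\rho_1^T+\rho_2\rho_2^T)$, with each $\alpha_i$ a polynomial in $\lambda, Q_\lambda, d$, and then verify the identity coefficient by coefficient against the claimed form; checking the boundary cases $\lambda = 0$ (pure Werner–Holevo) and $\lambda = 1$ (identity channel), for which both sides collapse, gives a useful sanity check on the coefficients $S_\lambda + R_\lambda^2$ and $S_\lambda + (d-2)Q_\lambda^2$.
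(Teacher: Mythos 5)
Your overall route is the same as the paper's: expand $\W\otimes\W(\pure{\psi_{12}})$ into a short linear combination of operators built from $\pure{\psi_{12}}$, its (partial) transposes, the one-site marginals, and $\one\otimes\one$, then square and take the trace. The paper does exactly this, with the single key identity $\Tr\bigl(\pure{\psi_{12}}(\pure{\psi_{12}})^{T_k}\bigr) = \Tr(\rho_k\rho_k^T)$ doing the nontrivial work.

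That, however, is precisely where your plan has a genuine error. You assert
\begin{equation*}
\Tr\bigl(\pure{\psi_{12}}\,\pure{\psi_{12}}^{T_k}\bigr) = \|\rho_k\|_2^2,
\end{equation*}
and justify it by claiming both sides equal $\sum_{ik}|(CC^\dagger)_{ik}|^2$. This is not correct. Writing $\rho_1 = CC^\dagger$, one has $\|\rho_1\|_2^2 = \Tr(\rho_1\rho_1^\dagger) = \sum_{ik}|(\rho_1)_{ik}|^2$, whereas
\begin{equation*}
\Tr\bigl(\pure{\psi_{12}}\,\pure{\psi_{12}}^{T_1}\bigr)
= \sum_{i,j,k,l} c_{ij}\,\overline{c_{kl}}\,c_{il}\,\overline{c_{kj}}
= \sum_{ik}\bigl((\rho_1)_{ik}\bigr)^2
= \Tr(\rho_1\rho_1^T),
\end{equation*}
i.e.\ the squares of the matrix entries, not their absolute squares. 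The two quantities coincide only when $\rho_1^T = \rho_1$ in the computational basis. A concrete counterexample: for $d=2$ take $\ket{e_1}=(\ket{0}+i\ket{1})/\sqrt{2}$, $\ket{e_2}=(\ket{0}-i\ket{1})/\sqrt{2}$ and $\ket{\psi_{12}}=\sqrt{\alpha}\ket{e_1}\ket{f_1}+\sqrt{\beta}\ket{e_2}\ket{f_2}$ with $\alpha\ne\beta$; then $\Tr(\rho_1\rho_1^T)=2\alpha\beta$ but $\|\rho_1\|_2^2 = \alpha^2+\beta^2$. Since this identity is exactly what separates the $\Tr(\rho_1\rho_1^T+\rho_2\rho_2^T)$ term from the $\|\rho_1\|_2^2$ term in the lemma's final expression, carrying the wrong version through your bookkeeping would mis-attribute the coefficients between those two invariants and produce a formula that fails the very boundary checks you propose (for instance at $\lambda=0$, where $\|\mathcal{W}_d\otimes\mathcal{W}_d(\pure{\psi_{12}})\|_2^2$ genuinely depends on $\Tr(\rho_k\rho_k^T)$). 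Replace the identity with $\Tr\bigl(\pure{\psi_{12}}\pure{\psi_{12}}^{T_k}\bigr)=\Tr(\rho_k\rho_k^T)$ — provable in one line from the Schmidt decomposition, as the paper notes — and the rest of your plan goes through.
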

\begin{proof}
It is easy to check that 
\begin{eqnarray*}
\W\otimes\W(\pure{\psi_{12}}) &=& \lambda^2 \pure{\psi_{12}}\\
&+& Q_\lambda R_\lambda [\rho_1\otimes\one_d + \one_d\otimes\rho_2]\\
&+& Q_\lambda^2\big[\one_d\otimes\one_d + \pure{\overline{\psi_{12}}}\big]\\
&-& \frac{S_\lambda}{2}(\pure{\psi_{12}})^{T_1}\\
&-& \frac{S_\lambda}{2}(\pure{\psi_{12}})^{T_2},
\end{eqnarray*}
where $T_1, T_2$ denote partial transposition w.r.t. the $1^{\rm st}$, $2^{\rm nd}$ tensor factor, respectively.
Taking the trace after squaring the above expression and noting that 
$$\Tr \pure{\psi_{12}}(\pure{\psi_{12}})^{T_k} = \Tr \rho_k \rho_k^T,$$ for $k=1,2$ (which one can show using the Schmidt decomposition of $\ket{\psi_{12}}$), we get the desired result.
\end{proof}

The following observations will be very useful in the proof of the main theorem, so we state them
here as lemmas.
\begin{lemma}\label{ineq:two_norm}
Let $\sigma_1 \le \sigma_2 \le \ldots \le \sigma_d$ be non-negative numbers that sum up to $1$. Then, the following inequality holds:
\begin{equation*} \sigma_d \ge \sum_{i = 1}^d \sigma_i^2
\end{equation*}
\end{lemma}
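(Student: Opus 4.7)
The plan is to exploit the fact that $\sigma_d$ is the largest entry, which lets us replace one factor in each squared term by $\sigma_d$ and then use the normalization $\sum_i \sigma_i = 1$.

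More precisely, the first step is to note that since $0 \le \sigma_i \le \sigma_d$ for every $i$, multiplying both sides of the inequality $\sigma_i \le \sigma_d$ by the non-negative number $\sigma_i$ preserves the inequality, giving $\sigma_i^2 \le \sigma_i\, \sigma_d$ for each $i$. Summing these $d$ inequalities yields
\begin{equation*}
\sum_{i=1}^d \sigma_i^2 \le \sigma_d \sum_{i=1}^d \sigma_i.
\end{equation*}
The second (and final) step is to apply the hypothesis $\sum_{i=1}^d \sigma_i = 1$ to the right-hand side, which collapses the bound to $\sigma_d$ and completes the argument.

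There is essentially no obstacle here — the statement is a one-line consequence of monotonicity of multiplication by a non-negative number together with the $\ell^1$-normalization. The only thing worth flagging is that the non-negativity of the $\sigma_i$ is genuinely used (otherwise multiplying the inequality $\sigma_i \le \sigma_d$ by $\sigma_i$ could flip it), but this is given in the hypothesis, so the proof is immediate.
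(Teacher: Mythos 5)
Your proof is correct. It is essentially the same argument as the paper's, which phrases the bound probabilistically (the right-hand side is $E[X]$ for a random variable $X$ with $\Pr(X=\sigma_i)=\sigma_i$, hence at most $\max X = \sigma_d$); you have simply unwound that interpretation into the direct algebraic steps $\sigma_i^2 \le \sigma_i\sigma_d$ and $\sum_i \sigma_i = 1$.
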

\begin{proof}
The r.h.s. of the inequality can be thought of as the expected value of the random variable $X$ given by $\mathrm{Pr}(X = \sigma_i) = \sigma_i$. The upper bound then follows immediately.
\end{proof}

\begin{lemma}\label{ineq:conjugate}
Let $\Phi$ be a quantum channel and denote by $\overline{\Phi}$ the complex conjugate (w.r.t. the Kraus operators) channel. 
Then, the following inequality holds:
\begin{equation*} \|\Phi\otimes\Phi\|_2 \le \|\Phi\otimes\overline{\Phi}\|_2
\end{equation*}
More importantly, the maximal output $2$-norm on the r.h.s. is achieved on inputs with conjugate Schmidt bases: 
\begin{equation*}\ket{\psi_{max}} = \sum_i \sqrt{\sigma_i} \ket{e_i}\overline{\ket{e_i}},
\end{equation*}
where the conjugate is taken w.r.t. the standard basis.
\end{lemma}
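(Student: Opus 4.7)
The plan is to exploit the freedom in the Schmidt decomposition of a pure state to absorb the basis on the second system into a unitary conjugation of the channel acting there, and then to use the structural identity of the Choi--Jamio\l kowski isomorphism to trade that unitary on the second system for a related construction that converts one of the $\Phi$'s into $\overline{\Phi}$.

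Concretely, write any pure input as $\ket{\psi_{12}} = \sum_i \sqrt{\sigma_i}\ket{e_i}\ket{f_i}$ in Schmidt form and observe that the unitary $U = \sum_i \ket{f_i}\bra{\overline{e_i}}$ satisfies $\ket{\psi_{12}} = (I\otimes U)\ket{\psi^*_{12}}$, where $\ket{\psi^*_{12}} = \sum_i \sqrt{\sigma_i}\ket{e_i}\otimes\overline{\ket{e_i}}$ is the conjugate-Schmidt state with the same Schmidt coefficients and first-system basis. By unitary invariance of the Hilbert--Schmidt norm,
\[
\|(\Phi\otimes\Phi)(\pure{\psi_{12}})\|_2 \;=\; \|(\Phi\otimes\Phi_U)(\pure{\psi^*_{12}})\|_2,
\]
where $\Phi_U$ has Kraus operators $\{U^\dagger A_l U\}$. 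Then, using the Choi--Jamio\l kowski-type identity $(I\otimes X)\ket{\psi^*_{12}} = (\sqrt{\rho_1}\,X^T\sqrt{\rho_1}^{-1}\otimes I)\ket{\psi^*_{12}}$ on the support of $\rho_1$ (extending by continuity when $\rho_1$ is singular), together with the bookkeeping identity $(\overline{M})^T = M^\dagger$ and cyclicity of the trace, one expands
$\|(\Phi\otimes\Phi_U)(\pure{\psi^*_{12}})\|_2^2$ as
$\sum_{kk',ll'}|\Tr(\mathrm{Ad}(U)(\overline{X_{kk'}})\,Y_{l'l})|^2$, where $X_{kk'} = \sqrt{\rho_1}\,A_k^\dagger A_{k'}\sqrt{\rho_1}$ and $Y_{l'l} = A_{l'}^\dagger A_l$. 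The identical sum arises from $\|(\Phi\otimes\overline{\Phi})(\pure{\phi_{12}})\|_2^2$ for the pure state $\ket{\phi_{12}}$ obtained from $\ket{\psi^*_{12}}$ by a one-sided unitary depending on $U$ and $\rho_1$, so each value of the left-hand norm equals the $\Phi\otimes\overline{\Phi}$-output norm on some pure state. Taking the supremum over $\ket{\psi_{12}}$ gives $\|\Phi\otimes\Phi\|_2 \le \|\Phi\otimes\overline{\Phi}\|_2$.

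For the second claim, the maximum on the right being attained on a conjugate-Schmidt input, I will appeal to the anti-linear symmetry of the channel: the involution $\mathcal{J}(\rho) := \mathrm{SWAP}\,\overline{\rho}\,\mathrm{SWAP}^\dagger$ commutes with $\Phi\otimes\overline{\Phi}$ (since $\mathrm{SWAP}\,(\overline{\Phi}\otimes\Phi)\,\mathrm{SWAP}^\dagger = \Phi\otimes\overline{\Phi}$) and preserves the Hilbert--Schmidt norm. A direct check shows that the pure states fixed by $\mathcal{J}$ (up to phase) are exactly those of the form $\sum_i \sqrt{\sigma_i}\ket{e_i}\otimes\overline{\ket{e_i}}$: indeed, $\overline{\ket{\psi}} = e^{i\theta}\,\mathrm{SWAP}\,\ket{\psi}$ forces the Schmidt partners of $\ket{e_i}$ to be proportional to $\overline{\ket{e_i}}$. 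Given any maximizer $\ket{\psi_0}$, the one-parameter family
\[
\ket{\psi_t} \;:=\; \ket{\psi_0} + e^{it}\,\mathrm{SWAP}\,\overline{\ket{\psi_0}}
\]
is $\mathcal{J}$-invariant for every $t\in\mathbb{R}$, hence, once normalized, lies in the conjugate-Schmidt family; maximizing the continuous function $t \mapsto \|(\Phi\otimes\overline{\Phi})(\pure{\psi_t})\|_2$ over the compact parameter set, together with the symmetry-generated lower bound coming from $\ket{\psi_0}$ and $\mathrm{SWAP}\,\overline{\ket{\psi_0}}$ both being maximizers, identifies a conjugate-Schmidt attainer of the global maximum.

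The hardest part will be the explicit matrix-algebra bookkeeping in the second paragraph: keeping straight the various transposes, conjugates, and unitary conjugations when identifying $\ket{\phi_{12}}$ from the matrix elements, and handling the rank-deficient case of $\rho_1$ by continuity. A secondary subtlety is verifying the extremality along the family $\ket{\psi_t}$ in the second-claim argument; this should follow from the convexity of $\|\cdot\|_2^2$ on density matrices combined with the symmetry $\|(\Phi\otimes\overline{\Phi})(\mathcal{J}\rho)\|_2 = \|(\Phi\otimes\overline{\Phi})(\rho)\|_2$, ensuring the conjugate-Schmidt representative is at least as good as $\ket{\psi_0}$.
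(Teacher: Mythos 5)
Your plan diverges from the paper's proof in two places, and both turn out to be genuine gaps.

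\paragraph{The first half.} After the ricochet step you claim an \emph{exact} identity: that $\|(\Phi\otimes\Phi)(\pure{\psi_{12}})\|_2^2$ equals $\|(\Phi\otimes\overline{\Phi})(\pure{\phi_{12}})\|_2^2$ for some pure $\phi_{12}$. If you carry out the ``bookkeeping'' you postpone, you find
\[
\|(\Phi\otimes\Phi)(\pure{\psi_{12}})\|_2^2
= \sum_{l,l',m,m'}\bigl|\Tr\bigl(\sqrt{\rho_1}\,A_l^\dagger A_m\,\sqrt{\rho_1}\,U^T (A_{l'}^\dagger A_{m'})^T\,\overline{U}\bigr)\bigr|^2,
\]
whereas for a conjugate--Schmidt $\phi_{12}$ with reduced density $\tau_1$ one gets
\[
\|(\Phi\otimes\overline{\Phi})(\pure{\phi_{12}})\|_2^2
= \sum_{l,l',m,m'}\bigl|\Tr\bigl(\sqrt{\tau_1}\,A_l^\dagger A_m\,\sqrt{\tau_1}\,A_{m'}^\dagger A_{l'}\bigr)\bigr|^2 .
\]
The second factor inside the trace is $A_{m'}^\dagger A_{l'}$ in one case and a unitary conjugate of the \emph{transpose} $(A_{l'}^\dagger A_{m'})^T$ in the other; these are not related by any choice of $\tau_1$ or one-sided unitary on the input. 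So the exact-matching claim fails, and the inequality cannot be obtained by matching values. The paper avoids this entirely: it expands $\|(\Phi\otimes\Phi)(\pure{\psi_{12}})\|_2^2 = \sum \sqrt{\sigma_{ijkl}}\,\alpha_{ijkl}\beta_{ijkl}$ with $\alpha,\beta$ the traces built from the two Schmidt bases, applies the elementary estimate $\mathrm{Re}(\alpha\beta)\le\tfrac12(|\alpha|^2+|\beta|^2)$ termwise, keeps the larger of the two symmetric halves (w.l.o.g.\ the $\alpha$-half), and notices that $\sum\sqrt{\sigma_{ijkl}}|\alpha_{ijkl}|^2$ is \emph{exactly} $\|(\Phi\otimes\overline{\Phi})(\pure{\psi_{\max}})\|_2^2$ for $\ket{\psi_{\max}}=\sum_i\sqrt{\sigma_i}\ket{e_i}\overline{\ket{e_i}}$. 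The mechanism is an inequality, not a change of variables. You need that inequality.

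\paragraph{The second half.} Your symmetry argument has the convexity pointed the wrong way. $\|\cdot\|_2^2$ is convex, so if $\ket{\psi_0}$ is a maximizer, convexity only gives $\|(\Phi\otimes\overline{\Phi})(\sigma)\|_2^2 \le M^2$ for any mixture $\sigma$ of $\pure{\psi_0}$ and $\pure{\mathrm{SWAP}\,\overline{\psi_0}}$; it does not give a lower bound, which is what you need to conclude that the $\mathcal{J}$-symmetric representative attains the max. Moreover $\pure{\psi_t}/\langle\psi_t|\psi_t\rangle$ is not that mixture (there are cross terms), and the family can degenerate ($\ket{\psi_t}=0$ when $\mathrm{SWAP}\,\overline{\ket{\psi_0}} = -e^{-it}\ket{\psi_0}$). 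The paper sidesteps all of this by simply running the same AM--GM estimate again for $\Phi\otimes\overline{\Phi}$: for any input, $\sum\sqrt{\sigma}\,\alpha\overline{\beta}\le\max\bigl(\sum\sqrt{\sigma}|\alpha|^2,\sum\sqrt{\sigma}|\beta|^2\bigr)$, and either bound is itself a conjugate--Schmidt output norm of $\Phi\otimes\overline{\Phi}$. That directly shows the supremum is attained on the conjugate--Schmidt family (by compactness) with no fixed-point argument needed.

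In short: replace the exact-matching claim and the symmetry/convexity argument with the termwise $\mathrm{Re}(\alpha\beta)\le\tfrac12(|\alpha|^2+|\beta|^2)$ estimate applied to the Schmidt-basis expansion; that single inequality does both halves of the lemma.
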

\begin{proof}
Let $\ket{\psi_{12}} = \sum_i \sqrt{\sigma_i} \ket{e_i}\ket{f_i}$ be the Schmidt decomposition of the input. Computing $\|\Phi\otimes\Phi(\pure{\psi_{12}})\|_2^2$ and setting $\alpha_{i,j,k,l} = \Tr(\Phi(\ketbra{e_i}{e_j})\Phi(\ketbra{e_k}{e_l}))$,
$\beta_{i,j,k,l} = \Tr(\Phi(\ketbra{f_i}{f_j})\Phi(\ketbra{f_k}{f_l}))$ and $\sigma_{i,j,k,l} = \sigma_i\sigma_j\sigma_k\sigma_l$, we get:
\begin{eqnarray*}
\|\Phi\otimes\Phi(\pure{\psi_{12}})\|_2^2 &=& \sum_{i,j,k,l} \sqrt{\sigma_{i,j,k,l}}\, \alpha_{i,j,k,l}\,\beta_{i,j,k,l}\\
&=& \sum_{i,j,k,l} \sqrt{\sigma_{i,j,k,l}}\, {\mathcal Re}\{\alpha_{i,j,k,l}\,\beta_{i,j,k,l}\}\\
&\le& \sum_{i,j,k,l} \sqrt{\sigma_{i,j,k,l}}\, \frac{|\alpha_{i,j,k,l}|^2}{2}\\
&+& \sum_{i,j,k,l} \sqrt{\sigma_{i,j,k,l}}\,\frac{|\beta_{i,j,k,l}|^2}{2}\\
&\le& \sum_{i,j,k,l} \sqrt{\sigma_{i,j,k,l}}\, |\alpha_{i,j,k,l}|^2,
\end{eqnarray*}
where we assumed the last inequality w.l.o.g.
But, the last expression is equal to $\|\Phi\otimes\overline{\Phi}(\pure{\psi_{max}})\|_2^2$. Moreover, one can follow the above
steps to show that for every input $\ket{\psi_{12}}$ there is an input $\ket{\psi_{max}'}$ such that
\begin{equation*}
\|\Phi\otimes\overline{\Phi}(\pure{\psi_{12}})\|_2^2 \le \|\Phi\otimes\overline{\Phi}(\pure{\psi_{max}'})\|_2^2.
\end{equation*}
This concludes the proof.
\end{proof}
At this point, it is important to note that the above lemma and the fact that $\|\Phi\|_2 = \|\overline{\Phi}\|_2$ imply that 
$\|\Phi\otimes\Phi\|_2 \le \|\Phi\|_2^2$ follows from $\|\Phi\otimes\overline{\Phi}(\pure{\psi_{max}})\|_2 \le \|\Phi\|_2^2$.
\section{Proof of the Main Result}
In this section, we will show that the difference $\D$ defined in~(\ref{eqn:output_dist}) is always non-negative, which is equivalent to multiplicativity of the output $2$-norm for $\W$. We state this as a theorem:
\begin{theorem}
For the depolarized Werner-Holevo channel $\W$, we have for $\lambda \in [0,1], d\ge 2$:
$$\|\W\otimes\W\|_2 = \|\W\|_2^2$$
\end{theorem}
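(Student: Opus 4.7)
The plan is to apply Lemma \ref{ineq:conjugate} to reduce the problem to a restricted class of inputs. A direct calculation shows $\overline{\W} = \W$, since both the identity and transposition in the standard basis commute with complex conjugation. Thus, by the remark following Lemma \ref{ineq:conjugate}, the target bound $\|\W\otimes\W\|_2 \le \|\W\|_2^2$ will follow once we verify $\D(\psi_{max}) \ge 0$ for every input of the special form $\ket{\psi_{max}} = \sum_i \sqrt{\sigma_i}\,\ket{e_i}\overline{\ket{e_i}}$, where $\{\ket{e_i}\}$ is an arbitrary orthonormal basis and $\sum_i \sigma_i = 1$.

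For such inputs, $\rho_2 = \overline{\rho_1}$, so $\|\rho_1\|_2^2 = \|\rho_2\|_2^2 = \sum_i \sigma_i^2$. Introducing the symmetric doubly stochastic matrix $M_{ij} := |\braket{e_i}{\overline{e_j}}|^2$, a direct computation gives $\Tr(\rho_1 \rho_1^T) = \Tr(\rho_2 \rho_2^T) = T_\sigma := \sum_{i,j}\sigma_i\sigma_j M_{ij}$ and $|\braket{\psi_{max}}{\overline{\psi_{max}}}|^2 = m^2$, where $m := \sum_{i,j}\sqrt{\sigma_i\sigma_j}\,M_{ij}$ is real and nonnegative. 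Substitution into Lemma \ref{lem:output_norm} reduces the problem to proving
\[\D(\psi_{max}) \;=\; -S_\lambda^2\, m^2 \;+\; 2K\Bigl(1-\sum_i \sigma_i^2\Bigr) \;+\; 2S_\lambda\|\W\|_2^2\, T_\sigma \;\ge\; 0,\]
where $K := (S_\lambda + R_\lambda^2)(S_\lambda + (d-2)Q_\lambda^2)$.

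The heart of the argument, and the step I expect to be the main obstacle, is the bound $m^2 \le T_\sigma + \bigl(1 - \sum_i \sigma_i^2\bigr)$. By Birkhoff's theorem one decomposes $M = \sum_\pi p_\pi P_\pi$ as a convex combination of permutation matrices, after which Jensen's inequality reduces the claim to $m_\pi^2 \le T_\pi + \bigl(1 - \sum_i \sigma_i^2\bigr)$ for each permutation $\pi$, with $m_\pi := \sum_i\sqrt{\sigma_i\sigma_{\pi(i)}}$ and $T_\pi := \sum_i \sigma_i\sigma_{\pi(i)}$. Expanding the square gives $m_\pi^2 - T_\pi = 2\sum_{i<j}\sqrt{\sigma_i\sigma_j\sigma_{\pi(i)}\sigma_{\pi(j)}}$, and applying AM-GM termwise yields $m_\pi^2 - T_\pi \le \sum_{i<j}(\sigma_i\sigma_j + \sigma_{\pi(i)}\sigma_{\pi(j)})$. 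Since $\pi$ permutes the $\binom{d}{2}$ unordered pairs bijectively, the right-hand side equals $2\sum_{k<l}\sigma_k\sigma_l = 1 - \sum_i \sigma_i^2$, completing the bound.

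With this inequality in hand, $\D(\psi_{max}) \ge S_\lambda\bigl(2\|\W\|_2^2 - S_\lambda\bigr)T_\sigma + (2K - S_\lambda^2)\bigl(1 - \sum_i\sigma_i^2\bigr)$, so it remains to check that both coefficients are nonnegative. For the second, $K \ge S_\lambda^2$ because each of its two factors is at least $S_\lambda$, whence $2K - S_\lambda^2 \ge 0$. For the first, substituting the value from Lemma \ref{lem:max_norm} reduces $2\|\W\|_2^2 \ge S_\lambda$ to $(d-1)\lambda^2 - \lambda + 1 \ge 0$, which holds for all $d \ge 2$ and $\lambda \in [0,1]$ since its discriminant in $\lambda$ is $5-4d$, strictly negative. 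This completes the proof.
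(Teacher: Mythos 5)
Your proof is correct, and it takes a genuinely different route from the paper's after the common reduction via Lemma~\ref{ineq:conjugate} and Lemma~\ref{lem:output_norm}. The paper goes on to invoke Lemma~\ref{ineq:two_norm} to replace $\|\rho_1\|_2^2$ by $\sigma_d$, restricts to $\sigma_d\ge 1/2$, and then splits into the cases $d\le 4$ (a single Cauchy--Schwarz plus Parseval) and $d\ge 5$ (a three-term Cauchy--Schwarz decomposition $I_1,I_2,I_3$). You replace the whole case analysis by the single inequality
\begin{equation*}
\Bigl(\textstyle\sum_{i,j}\sqrt{\sigma_i\sigma_j}\,M_{ij}\Bigr)^2 \;\le\; \sum_{i,j}\sigma_i\sigma_j M_{ij} \;+\; \Bigl(1-\sum_i\sigma_i^2\Bigr),
\end{equation*}
established uniformly for every symmetric doubly stochastic $M$ by Birkhoff, Jensen, and a termwise AM--GM that exploits the fact that a permutation acts bijectively on unordered pairs. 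This bound is actually strictly stronger than the sufficient condition (4.2) used in the paper (whose coefficient on $\Tr(\rho_1\rho_1^T)$ is $2(1+\sqrt{d-1})\ge 4$ rather than your $1$), so it of course closes the argument; it also handles the boundary cases $\lambda\in\{0,1\}$ automatically since you never divide by $S_\lambda$. The two final coefficient checks, $K\ge S_\lambda^2$ and $(d-1)\lambda^2-\lambda+1\ge 0$, are both correct. In short, the paper's route is more elementary (Cauchy--Schwarz and Parseval only) at the cost of a dimension case split, while yours uses the more structural Birkhoff theorem to get a shorter, dimension-independent argument.
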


\begin{proof}
From Lemma~\ref{lem:output_norm} we see that the condition $\D(\ket{\psi_{12}}) \ge 0$ is equivalent to
\begin{eqnarray*}
S_\lambda^2\, 
|\braket{\psi_{12}}{\overline{\psi_{12}}}|^2 &\le&
2 (S_\lambda^2 + P_\lambda^2)
\big(1-\|\rho_1\|_2^2\big)\\
&+& S_\lambda \|\W\|_2^2 \Tr(\rho_1\rho_1^T + \rho_2\rho_2^T),
\end{eqnarray*}
where $P_\lambda^2 = [Q_\lambda^2+(d-2)R_\lambda^2]S_\lambda
+ (d-2)Q_\lambda^2 R_\lambda^2 \ge 0$.
Using Lemma~\ref{lem:max_norm} to write $\|\W\|_2^2$ as $(1+\sqrt{d-1})S_\lambda + (\lambda - \frac{1-\lambda}{\sqrt{d-1}})^2$, we see that it is sufficient to prove the following inequality
\begin{equation}\label{ineq:output_dist}
|\braket{\psi_{12}}{\overline{\psi_{12}}}|^2 \le
2\big(1-\|\rho_1\|_2^2\big) + (1+\sqrt{d-1})\Tr(\rho_1\rho_1^T + \rho_2\rho_2^T)
\end{equation}
(the boundary cases $\lambda = 0,1$ follow from $1 \ge \|\rho_1\|_2^2$).

Since both the identity and the WH channel remain unchanged under complex conjugation, their convex combination inherits that property.
With this in mind, Lemma~\ref{ineq:conjugate} implies that we only need to check if:
\begin{equation}\label{ineq:output_dist2}
|\braket{\psi_{max}}{\overline{\psi_{max}}}|^2 \le
2\big(1-\|\rho_1\|_2^2\big) + 2(1+\sqrt{d-1})\Tr(\rho_1\rho_1^T),
\end{equation}
since $\rho_1 = \rho_2^T$ for $\ket{\psi_{max}} = \sum_i \sqrt{\sigma_i} \ket{e_i}\overline{\ket{e_i}}$.
Moreover, $\|\rho_1\|_2^2 = \sum_{i = 1}^d \sigma_i^2$, where some of the $\sigma_i$ may be zero. Applying Lemma~\ref{ineq:two_norm} (and borrowing its notation w.l.o.g.), it follows that $\|\rho_1\|_2^2 \le \sigma_d$. 
It becomes clear now that in order to prove~(\ref{ineq:output_dist}), it is sufficient to show:
\begin{equation}\label{ineq:output_dist3}
|\braket{\psi_{max}}{\overline{\psi_{max}}}|^2 \le
2\big(1-\sigma_d) + 2(1+\sqrt{d-1})\Tr(\rho_1\rho_1^T)
\end{equation}
for $\sigma_d \ge 1/2$, since $|\braket{\psi_{max}}{\overline{\psi_{max}}}| \le 1$ and 
$\Tr(\rho_1\rho_1^T) \ge 0$.
We now compute the following two quantities:
\begin{eqnarray}
|\braket{\psi_{max}}{\overline{\psi_{max}}}| &=& \sum_{i,j} \sqrt{\sigma_i \sigma_j}
|\braket{e_i}{\overline{e_j}}|^2\label{eqn:inner_transpose}\\
\Tr(\rho_1 \rho_1^T) &=& \sum_{i,j} \sqrt{\sigma_i \sigma_j}
|\braket{e_i}{\overline{e_j}}|^2\label{eqn:outer_transpose} 
\end{eqnarray}
We will need to treat dimensions $d \le 4$ and $d \ge 5$ separately.
For $d \le 4$ we use Cauchy-Schwarz to get the following estimate for
$$\Big(\sum_{i,j} \sqrt{\sigma_i \sigma_j}
|\braket{e_i}{\overline{e_j}}|^2\Big)^2
$$
\begin{eqnarray}
&\le& \Big(\sum_{i,j} \sigma_i \sigma_j |\braket{e_i}{\overline{e_j}}|^2\Big) 
\Big(\sum_{i,j} |\braket{e_i}{\overline{e_j}}|^2\Big) \nonumber \\
&\le& d \sum_{i,j} \sigma_i \sigma_j
|\braket{e_i}{\overline{e_j}}|^2 \nonumber\\
&=& d \Tr(\rho_1 \rho_1^T) \label{cs1}
\end{eqnarray}
where we have used Parseval's identity in the last inequality. Using~(\ref{ineq:output_dist3}) and~(\ref{eqn:inner_transpose}) we see from estimate~(\ref{cs1}) that it is sufficient to show that $d \leq 2(1+\sqrt{d-1})$, which is true for $d \le 4$.
We now turn our attention to the case $d \ge 5$.
We will need a different estimate than the one given in~(\ref{cs1}), since we need to make use
of the assumption that $\sigma_d \ge 1/2$ in order to lower the factor $d$ in~(\ref{cs1}).
We start by using Cauchy-Schwarz to get the following upper bound:
\begin{equation}\label{cs2}
\Big(\sum_{i,j} \sqrt{\sigma_i \sigma_j}
|\braket{e_i}{\overline{e_j}}|^2\Big)^2
\le 3 (I_1^2 + I_2^2 + I_3^2),
\end{equation}
where 
\begin{eqnarray*}
I_1 &=& \sum_j \sqrt{\sigma_d \sigma_j}
|\braket{e_d}{\overline{e_j}}|^2\\
I_2 &=& \sum_i \sqrt{\sigma_i \sigma_d}
|\braket{e_i}{\overline{e_d}}|^2\\
I_3 &=& \sum_{i \ne d, j \ne d} \sqrt{\sigma_i \sigma_j}
|\braket{e_i}{\overline{e_j}}|^2
\end{eqnarray*}
A further application of Cauchy-Schwarz on $I_1 = I_2$ and $I_3$ will give us the desired result.
We start with an estimate for $I_1$.
Noting that one of the summation indices is fixed to $d$, we get
\begin{eqnarray*}
I_1^2 &=& \Big(\sum_{j} \sqrt{\sigma_d \sigma_j}
|\braket{e_d}{\overline{e_j}}|^2\Big)^2 \\
&\le& \Big(\sum_{j} \sigma_d \sigma_j
|\braket{e_d}{\overline{e_j}}|^2 \Big)\Big(\sum_{j} |\braket{e_d}{\overline{e_j}}|^2\Big)\\
&=& \sum_{j} \sigma_d \sigma_j |\braket{e_d}{\overline{e_j}}|^2\\
&\le& \Tr(\rho_1 \rho_1^T)
\end{eqnarray*}
Similarly, we have $I_2^2 \le \Tr(\rho_1 \rho_1^T)$. Since $1+\sqrt{d-1} \ge 3$ for $d \ge 5$, we
see from~(\ref{ineq:output_dist3}) and~(\ref{cs2}) that it remains to show $3\, I_3^2 \le 2\,(1-\sigma_d)$.
We have
\begin{eqnarray*}
I_3^2 &=& \Big(\sum_{i \ne d,j \ne d} \sqrt{\sigma_i \sigma_j}
|\braket{e_i}{\overline{e_j}}|^2\Big)^2\\
&\le& \Big(\sum_{i \ne d,j \ne d} \sigma_i
|\braket{e_i}{\overline{e_j}}|^2 \Big)^2\\
&\le& \Big(\sum_{i \ne d} \sigma_i \Big)^2\\
&=& (1-\sigma_d)^2
\end{eqnarray*}
It remains to show that $3\, (1-\sigma_d)^2 \le 2 \, (1-\sigma_d) 
\Leftrightarrow (1-\sigma_d)(3\sigma_d -1) \ge 0$, which follows from our earlier observation that we only need to consider $\sigma_d\in[\frac{1}{2},1]$.
\end{proof}

\section{Beyond entrywise positivity}
In this section, we show that the depolarized Werner-Holevo channels satisfy multiplicativity despite
having a more complex structure than quantum channels satisfying a condition sufficient for multiplicativity of the maximal output $2$-norm.
\begin{proposition}
The depolarized Werner-Holevo channels $\W$ with $\lambda \in (0,1)$ and  $d \ge 3$ do not satisfy the following entrywise-positivity (EP) condition: There exists an orthonormal basis $\{\ket{e_i}\}_{i=1}^d$ of $\Cx^d$ such that
\begin{equation*}
\Tr \W(\ketbra{e_l}{e_i})\W(\ketbra{e_j}{e_k}) \ge 0, \qquad \forall i,j,k,l.
\end{equation*}
\end{proposition}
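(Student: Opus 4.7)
The plan is to argue that for every orthonormal basis there must exist a tuple $(i,j,k,l)$ at which the trace becomes negative. First I would put the problem in closed form. Writing $|e_a\rangle = U|a\rangle$ for a unitary $U$ and using that, in the standard basis, $(|e_l\rangle\langle e_i|)^T = |\overline{e_i}\rangle\langle\overline{e_l}|$, a direct expansion of the product $\W(|e_l\rangle\langle e_i|)\W(|e_j\rangle\langle e_k|)$ and taking the trace yields
\begin{equation*}
\Tr\big[\W(|e_l\rangle\langle e_i|)\W(|e_j\rangle\langle e_k|)\big] = (\lambda^2+Q^2)\delta_{ij}\delta_{kl} + [2\lambda Q + (d-2)Q^2]\delta_{il}\delta_{jk} - 2\lambda Q\, s_{ik}\overline{s_{jl}},
\end{equation*}
with $Q = (1-\lambda)/(d-1)$ and $s_{ab} = \langle e_a|\overline{e_b}\rangle = \overline{(U^T U)_{ab}}$. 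The matrix $s$ is complex symmetric, and unitarity of $U$ forces $s\,\overline{s} = \one$, so $s$ is a symmetric unitary.

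Next I would exploit the fact that the EP condition demands this expression be a non-negative real number. Its imaginary part is proportional to $\mathrm{Im}(s_{ik}\overline{s_{jl}})$, and requiring this vanish for every $(i,j,k,l)$ forces every nonzero entry of $s$ to share a common argument modulo $\pi$. Consequently $s = e^{i\theta} R$ for some real $\theta$ and some real symmetric matrix $R$; symmetric-unitarity of $s$ then gives $R^2=\one$, so $R$ is real symmetric orthogonal. Under this reduction, the non-negativity condition, restricted to tuples $(i,j,k,l)$ for which both $\delta_{ij}\delta_{kl}$ and $\delta_{il}\delta_{jk}$ vanish, becomes the sign constraint $R_{ik}R_{jl} \le 0$. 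Because $R$ is symmetric, each entry $R_{ab}$ depends only on the unordered pair $\{a,b\}$, so this translates to: any two distinct unordered-pair entries of $R$ have non-positive product, whence at most one unordered pair is strictly positive and at most one strictly negative.

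The remaining and most delicate step, which I expect to be the main obstacle, is to show that no real symmetric orthogonal matrix $R$ with such an extremely sparse support can exist when $d \ge 3$. The plan is a case analysis on the (at most two) nonzero unordered-pair positions, separated into whether each is a diagonal or an off-diagonal entry, and in each case extracting from $R^2=\one$ the unit-norm condition on every row. One must carefully track which rows of $R$ can possibly carry a nonzero entry and then use $d\ge 3$ to force at least one row to have vanishing norm, contradicting orthogonality. Success depends on correctly counting uncovered rows across all sparse configurations, since individually harmless patterns (a signed swap block on two coordinates, or a signed diagonal entry on one coordinate) are superficially compatible with unit-norm constraints on the rows they touch — so the argument must rule out not just each pattern in isolation, but also every way of packing two such pieces into the available $d$ coordinates.
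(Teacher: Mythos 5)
Your reduction to ``$s=e^{i\theta}R$ with $R$ real symmetric, $R^2=\one$, and at most one strictly positive and one strictly negative unordered-pair entry'' is correct, and it is in fact \emph{more} careful than the paper's own proof, which at the corresponding step writes $\ket{e_{\pi(l)}}=\ket{\overline{e_l}}$ and thereby silently drops exactly the unit-modulus phases your parametrization retains. The difficulty is that the step you flag as ``the main obstacle'' does not yield the contradiction you expect: for $d=3$ and $d=4$ such an $R$ \emph{does} exist. For $d=3$ take
\[
R=\begin{pmatrix}-1&0&0\\ 0&0&1\\ 0&1&0\end{pmatrix},
\]
which has one negative unordered-pair entry ($\{1,1\}$), one positive ($\{2,3\}$), and $R^2=\one$; a basis realizing it is $\ket{e_1}=i\ket{1}$, $\ket{e_{2}}=(\ket{2}+i\ket{3})/\sqrt{2}$, $\ket{e_{3}}=(\ket{2}-i\ket{3})/\sqrt{2}$. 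For $d=4$ take two signed swap blocks with opposite signs, $R=\begin{pmatrix}0&1\\1&0\end{pmatrix}\oplus\begin{pmatrix}0&-1\\-1&0\end{pmatrix}$. Feeding these into your closed form and running the four $\delta$-cases shows every trace is non-negative for all $\lambda\in[0,1]$: with both Kroneckers zero it equals $-2\lambda Q\,R_{ik}R_{jl}\ge0$ by the sign constraint; with $\delta_{ij}\delta_{kl}=1$ alone it is $\lambda^2+Q^2-2\lambda Q R_{ik}^2\ge(\lambda-Q)^2$; with $\delta_{il}\delta_{jk}=1$ alone it is $2\lambda Q(1-R_{ij}^2)+(d-2)Q^2\ge0$; with all indices equal it is $\lambda^2+(d-1)Q^2+2\lambda Q(1-R_{ii}^2)>0$. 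So these bases satisfy the EP condition, and the proposition as stated fails for $d=3,4$.

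Your row-counting argument \emph{does} close the case $d\ge5$: one positive and one negative unordered pair together touch at most four coordinates, so some row of $R$ is identically zero, contradicting $R^2=\one$. In short, your framework is sound and correctly identifies where the content lies; the phase bookkeeping you keep (and the paper discards) is precisely what exposes the low-dimensional counterexamples, and the genuine contradiction only appears for $d\ge5$.
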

\begin{proof}
One can check that $\Tr \W(\ketbra{e_l}{e_i})\W(\ketbra{e_j}{e_k})$ is given by:
\begin{equation*}
\Big[\lambda^2+\Big(\frac{1-\lambda}{d-1}\Big)^2\Big] \delta_{i,j}\delta_{k,l}
+ \Big[\frac{2\lambda(1-\lambda)}{d-1}+(d-2)\Big(\frac{1-\lambda}{d-1}\Big)^2\Big] \delta_{i,l}\delta_{j,k}
- \frac{2\lambda(1-\lambda)}{d-1} \braket{e_i}{\overline{e_k}}\braket{\overline{e_l}}{e_j}
\end{equation*}
where $\ket{\overline{e_k}}$ denotes the complex conjugate of $\ket{e_k}$, as before.
Now, taking $i=j, k\ne l$ in the above expression, we see that the EP condition implies:
$$ \braket{e_i}{\overline{e_k}}\braket{\overline{e_l}}{e_i} \le 0, \qquad \forall i, k\ne l.$$
Summing over $i$ in the above inequality gives us $0$, which implies that:
\be\label{cond}
\braket{e_i}{\overline{e_k}}\braket{\overline{e_l}}{e_i} = 0, \qquad \forall i, k\ne l.
\ee
Fixing $l$, we choose $i = \pi(l)$ such that $\braket{\overline{e_l}}{e_{\pi(l)}} \ne 0$ (we can always find such a $\pi(l)$, since otherwise $\ket{\overline{e_l}} = 0$; a contradiction to $\ket{\overline{e_l}}$ being an orthonormal basis vector). 
Condition~(\ref{cond}) then implies that $\braket{e_{\pi(l)}}{\overline{e_k}} = 0, \, \forall k\ne l$. 
Since the $\{\ket{\overline{e_k}}\}$ form an orthonormal basis, it follows that 
$\ket{e_{\pi(l)}} = \ket{\overline{e_l}}, \forall l$.
We may now rewrite the EP condition as:
\begin{equation*}
\Big[\lambda^2+\Big(\frac{1-\lambda}{d-1}\Big)^2\Big] \delta_{i,j}\delta_{k,l}
+ \Big[\frac{2\lambda(1-\lambda)}{d-1}+(d-2)\Big(\frac{1-\lambda}{d-1}\Big)^2\Big] \delta_{i,l}\delta_{j,k}
\ge \frac{2\lambda(1-\lambda)}{d-1} \delta_{i, \pi(k)}\delta_{j, \pi(l)}
\end{equation*}
Choosing $i = \pi(k), j = \pi(l)$ and $k \ne l$, the above condition becomes:
\begin{equation*}
\Big[\frac{2\lambda(1-\lambda)}{d-1}+(d-2)\Big(\frac{1-\lambda}{d-1}\Big)^2\Big] \delta_{\pi(k),l}\delta_{\pi(l),k}
\ge \frac{2\lambda(1-\lambda)}{d-1}
\end{equation*}
The EP condition forces $\pi(l) = k, \forall k \ne l$ (note that $\pi(k) = l$ then follows from the definition of $\pi(k)$,) which is impossible for $d \ge 3$. For $d=2$, choosing $\ket{e_1} = \frac{\ket{0} + i \ket{1}}{\sqrt{2}}$ satisfies the EP condition.
\end{proof}

\section{Discussion}
We have shown that for depolarized Werner-Holevo channels the maximum output $2$-norm is
multiplicative. For $\lambda \in (0,1)$ and $d \ge 3$, the depolarized Werner-Holevo maps do not satisfy the entrywise-positivity (EP) condition introduced by C. King and M.B. Ruskai in~\cite{KR1, KR2}, which makes our result non-trivial. Moreover, a closer look at Lemma~\ref{ineq:conjugate} in conjuction with random unitary channels may yield a counterexample for the multiplicativity of the maximal output 2-norm in the vein of \cite{Winter}.
    

    %
    %

\end{document}